\documentclass[letterpaper,11pt]{article}

\usepackage[margin=1in]{geometry}
\usepackage{amsmath,amssymb,amsthm,amsfonts,mathtools}
\usepackage{braket}
\usepackage{bm}
\usepackage{bbold}
\usepackage{enumitem}
\usepackage[normalem]{ulem}
\usepackage{xcolor}
\usepackage{authblk}
\usepackage[title,page]{appendix}
\usepackage{algorithm}
\usepackage{algorithmic}
\usepackage{booktabs}
\usepackage{tikz}
\usepackage{tikz-cd}
\usepackage{quantikz}
\usetikzlibrary{arrows.meta, bending}
\definecolor{navy}{RGB}{23,63,108}

\usepackage{scalerel}
\usepackage{float}
\usepackage{graphicx}
\usepackage{verbatim}
\usepackage{cite}
\usepackage{mdframed}

\usepackage[colorlinks=true, allcolors=blue, linktocpage=true]{hyperref}

\newtheorem{theorem}{Theorem}[section]
\newtheorem{lemma}[theorem]{Lemma}

\newtheorem{proposition}[theorem]{Proposition}
\newtheorem{definition}[theorem]{Definition}

\newcommand{\seed}{\ket{\mathrm{seed}}}

\newcommand{\Id}{\mathbb{1}}
\DeclareMathOperator{\poly}{poly}

\makeatletter

\skip\@mpfootins = \skip\footins
\renewcommand\subsubsection{\@startsection{subsubsection}{3}{\z@}%
                                   {-2.3ex\@plus -1ex \@minus -.5ex}%
                                   {1ex \@plus .2ex \@minus .2ex}%
                                   {\normalfont\normalsize\bfseries}}
\renewcommand\paragraph{\@startsection{paragraph}{4}{\z@}%
                                   {1.75ex \@plus1ex \@minus.2ex}%
                                   {-1em}%
                                   {\normalfont\normalsize\bfseries}}
\renewcommand\subparagraph{\@startsection{subparagraph}{5}{\z@}%
                                   {1.75ex \@plus1ex \@minus .2ex}%
                                   {-1em}%
                                   {\normalfont\normalsize\itshape}}

\renewcommand{\@dotsep}{10000}

\def\fnum@figure{\textbf{\figurename\nobreakspace\thefigure}}
\def\fnum@table{\textbf{\tablename\nobreakspace\thetable}}

\allowdisplaybreaks
\numberwithin{equation}{section}

\makeatletter
\newcommand{\subalign}[1]{%
  \vcenter{%
    \Let@ \restore@math@cr \default@tag
    \baselineskip\fontdimen10 \scriptfont\tw@
    \advance\baselineskip\fontdimen12 \scriptfont\tw@
    \lineskip\thr@@\fontdimen8 \scriptfont\thr@@
    \lineskiplimit\lineskip
    \ialign{\hfil$\m@th\scriptstyle##$&$\m@th\scriptstyle{}##$\hfil\crcr
      #1\crcr
    }%
  }%
}
\makeatother

\begin{document}

\title{An efficient Hamiltonian-based quantum algorithm for characters of the symmetric group}

\author[1]{Dikshant Rathore}
\author[2]{Leo Zhou}

\affil[1]{\small \it Mani L. Bhaumik Institute for Theoretical Physics, Department of Physics and Astronomy, \authorcr 
\small \it  University of California, Los Angeles, CA 90095, USA}
\affil[2]{\small{\it{Electrical \& Computer Engineering Department, University of California, Los Angeles, CA 90095, USA}}}

\date{October 4, 2026}

\maketitle

\begin{abstract}
Consider a quantum state vector proportional to any given column of the character table of the symmetric group ($S_n$), corresponding to a superposition over the irreps weighted by the character values. It was recently argued that sampling from this state is classically hard under reasonable complexity-theoretic assumptions, while there exists an efficient quantum algorithm for preparing this character state using the quantum Fourier transform (QFT) over $S_n$~\cite{BravyiGossetHavlicekSchatzki2025}. We give an alternative, simpler algorithm to prepare this character state with a sequence of Hamiltonian evolutions, as well as an efficient implementation with reconfigurable qubits that uses only nearest-neighbor gates in 1D.
The gate complexity, accounting for Hamiltonian simulation errors, depends on the choice of the column:
For fixed target state error, the estimated sufficient total gate count ranges from as few as $O(n)$ gates for the $n$-cycle column, to $\widetilde O(n^{1.75})$ gates near the conjecturally classically hard parameter regime, to provably $\widetilde O(n^{2.5})$ gates in the worst case. In contrast, the QFT subroutine in the prior approach uses $\widetilde O(n^3)$ gates. Our gate count estimates come from leading-order state-dependent Trotter error analysis, and the proven worst-case bound relies on a QSVT-based Hamiltonian simulation method;
numerical benchmarks of the Trotter-based protocol suggest that the estimates are conservative. Furthermore, we generalize our protocol to give an explicit algorithm for the quantum character transform (QCT) over $S_n$ using $\widetilde O(n^{2.5})$ gates and $O(n)$ qubits. An application to entanglement entropy of symmetric orbifold conformal field theories is also discussed.

\end{abstract}
\newpage


\tableofcontents
\newpage

\section{Introduction}

Symmetry plays a central role in the description and simulation of quantum
many-body systems. It allows us to resolve the Hilbert space of a system into
its irreducible symmetry sectors. For a symmetry group $G$, its characters are
the fundamental quantities that connect these sectors to the action of its
group elements.

Here we focus on the symmetric group $S_n$ consisting of the $n!$ permutations of $n$
objects.
Its conjugacy classes, labeled by cycle types, classify permutations
up to relabeling.
Its irreducible representations (irreps), labeled by Young
diagrams, describe the possible symmetry sectors of the system. 
The normalized character table can be viewed as a Fourier transform between conjugacy classes and irreducible representations within the space of class functions.

In many-body physics, $S_n$ governs the exchange of identical particles, with its irreps organizing the spin and spatial symmetry sectors of quantum gases. It also appears in replica methods, where $S_n$ permutes the replicas introduced to study disordered systems and many-body entanglement~\cite{MezardParisiVirasoro1987,ZhouNahum2019}. Moreover, the characters of $S_n$ determine the Slater-basis amplitudes of Laughlin states
in the fractional quantum Hall effect~\cite{Dunne1993}.

A concrete setting where characters of $S_n$ have a physical interpretation is in the two-dimensional symmetric orbifold conformal field theories (CFTs).
A symmetric orbifold CFT is formed from $n$ identical copies of a conformal
field theory by gauging the $S_n$ symmetry that permutes them~\cite{DijkgraafMooreVerlindeVerlinde1997}. The gauged theory has a generalized symmetry realized by topological line
operators (defects) labeled by irreducible representations of $S_n$~\cite{Thorngren,GutperleLiRathoreRoumpedakis2024_2}. The action of these defects on twist operators is determined by the
characters of $S_n.$ Recent work has shown that the entanglement entropy in the presence of these
defects can be computed directly from these
characters~\cite{GutperleLiRathoreRoumpedakis2024}. In particular,
for a defect labeled by $\lambda$ placed at the entangling surface, the defect-dependent contribution to the entropy is
\begin{equation}\label{eq:defect-entropy-intro}
  \Delta S_\lambda
  =
  -\sum_{\nu\vdash n}
  \frac{[\chi^\lambda(\nu)]^2}{z_\nu}
  \ln\!\left([\chi^\lambda(\nu)]^2\right)\,,
  \qquad
  z_\nu=\frac{n!}{|C_\nu|}\,,
\end{equation}
where $C_\nu$ is the conjugacy class of cycle type $\nu$, which is specified as a partition of $n$ (denoted as $\nu\vdash n$).
By character orthogonality, the weights $\chi^\lambda(\nu)^2/z_\nu$ form a probability
distribution over cycle types and are precisely the squared entries of a row
of the normalized character table. Characters of $S_n$ also determine expectation values of worldsheet observables in the $\alpha$-states of Hurwitz theory~\cite{BintanjaGesteau2026}.

Despite their broad role, characters of $S_n$ are
computationally difficult. Given an irreducible representation $\lambda$ and a
conjugacy class $\nu$, exactly evaluating the corresponding character $\chi^\lambda(\nu)$ is
at least $\#\mathrm{P}$-hard in the worst
case~\cite{Hepler1994, CrichignoPrakash2024}. Even the simpler task of deciding whether the character is nonnegative is
$\mathrm{PP}$-complete~\cite{Ikenmeyer_2023}.

Quantum computers offer a different form of access to the character table. Rather than evaluating individual entries, one can consider a quantum state
whose amplitudes encode an entire column of the character table. For a
cycle type $\nu=(1^{m_1}\ldots n^{m_n})\vdash n$, where $m_s$ is the number of
cycles of length $s$ (so that $\sum_s sm_s = n$), the normalized column is
the \emph{character column state} 
\begin{equation}\label{eq:chi-state}
  \ket{\chi_\nu}
  :=
  \frac{1}{\sqrt{z_\nu}}
  \sum_{\lambda\vdash n}
  \chi^\lambda(\nu)\,\ket{\lambda}\,,
  \qquad
  z_\nu=\sum_{\lambda\vdash n}[\chi^\lambda(\nu)]^2=\prod_{s=1}^{n}s^{m_s}m_s!\,,
\end{equation}
and measuring it returns an irreducible representation $\lambda$ with
probability
\begin{equation}\label{eq:intro-row-sampling}
  P_\nu(\lambda)
  = \left|\braket{\lambda|\chi_\nu}\right|^2
  = \frac{[\chi^\lambda(\nu)]^2}{z_\nu}\,.
\end{equation}
We call the task of preparing $\ket{\chi_\nu}$ for a given cycle type
\emph{character state preparation} (CSP). Character orthogonality makes the states $\ket{\chi_\nu}$ an orthonormal
basis, so preparing all columns coherently defines a unitary,
\begin{equation}\label{eq:intro-qct}
  \mathrm{QCT}_{S_n}:\ \ket{\nu}\longmapsto\ket{\chi_\nu}\,,
\end{equation}
called the \emph{quantum character transform} (QCT)  over $S_n$ by Bastidas
\emph{et al.}~\cite{Bastidas_2025}. 
The QCT can be used for symmetry-adapted state preparation and phase estimation, in which a many-body state is
projected onto the irreducible representations of its symmetry group. While CSP prepares $\ket{\chi_\nu}$ for one specified cycle type $\nu$, the
QCT performs this preparation coherently on an arbitrary superposition of cycle types. 

Bravyi \emph{et al.}~\cite{BravyiGossetHavlicekSchatzki2025} showed that CSP can be
implemented using the quantum Fourier transform over
$S_n.$ 
They also identified suitable character-sampling problems as candidates for quantum advantage (e.g., sampling from $\ket{\chi_\nu}$ when $\nu$ contains $\Theta(\sqrt{n}/\log n)$ cycles), conditional on an average-case hardness conjecture and the noncollapse of the polynomial hierarchy. 
The best currently known implementation of the quantum Fourier transform over $S_n$ requires $\widetilde O(n^3)$ gates and
$\widetilde O(n^{3/2})$ qubits~\cite{Bruinsma2026}. 
Bastidas \emph{et al.} \cite{Bastidas_2025} introduced the QCT as a primitive for symmetry-adapted quantum algorithms and gave implementations for $S_2$ and $S_3$, but no scalable circuit for the QCT over general $S_n$ was provided. 

In this work, we bypass the quantum Fourier transform by giving a more direct Hamiltonian-based algorithm for CSP (Theorem~\ref{thm:intro-csp}), and extend it coherently to realize the full QCT.
Using QSVT for the Hamiltonian simulation subroutine, our algorithm uses $\widetilde O(n^{5/2})$ gates and $O(n)$
qubits both for CSP in the worst case (Theorem~\ref{thm:intro-qsvt}) and for the full QCT (Theorem~\ref{thm:intro-qct}).
To our knowledge, our $\widetilde O(n^{5/2})$ is the
first explicit polynomial gate bound for the QCT over general $S_n$.
As an application, we show that inverse QCT can be used to efficiently estimate the entanglement entropy \eqref{eq:defect-entropy-intro} in symmetric orbifold CFTs to within $O(\sqrt{n})$ additive error.

We also propose an efficient Trotter-based implementation that is particularly suitable for reconfigurable qubits in 1D or 2D geometry (e.g., neutral atom arrays~\cite{Bluvstein_2023}).
The proposed implementation leverages the fast fermionic permutation protocol with reconfigurable qubits from Maskara \emph{et al.}~\cite{Maskara_2025}, which makes the Hamiltonian simulation gates nearest-neighbor.
For example, the required fermionic permutation can be implemented as a 2D reflection with $O(n)$ gates.
At the $\Theta(\sqrt n/\log n)$-cycle regime proposed as a candidate for quantum advantage in character sampling~\cite{BravyiGossetHavlicekSchatzki2025}, our leading-order state-dependent Trotter error analysis for uniform cycles $\nu=(s^m)$ suggests that $\ket{\chi_\nu}$ can be prepared with
$\widetilde O(n^{7/4}/\delta)$ gates, where $\delta$ is the target state error (Theorem~\ref{thm:intro-trotter}).
As a complementary result, we give a classical algorithm to sample from $P_\nu(\lambda)$ that is efficient when $\nu=(s^m)$ is exactly uniform, but becomes inefficient under perturbations from uniformity (Theorem~\ref{thm:intro-classical}).
Our numerical simulations of the Trotterized CSP circuit up to $n=48$ suggest that the true gate count needed for a fixed state error changes modestly when a uniform cycle type is perturbed, and the analytical bounds overestimate the count for many cycle types by various powers of $n$.
Hence, we believe this Trotter-based implementation of CSP is a promising candidate to realize near-term quantum advantage with reconfigurable qubits.

\subsection{Background and prior work}
\label{sec:intro-prior}

\paragraph{Spin-chain encoding of $S_n$ characters.}
Crichigno and Prakash~\cite{CrichignoPrakash2024} showed that the characters
of $S_n$ appear as amplitudes of states in the XX Heisenberg spin chain. Under the
Jordan-Wigner transformation, the chain of $2n$ sites becomes $2n$ fermionic
modes, and a Young diagram $\lambda\vdash n$ is encoded as the Fock state
$\ket\lambda$ with $n$ occupied modes determined by its shape. For each
cycle length $s$, the current operator
\begin{equation}\label{eq:psi-g}
  J_s
  =
  \sum_{i=1}^{2n-s}a_{i+s}^\dagger a_i
\end{equation}
hops a fermion by $s$ sites. For a cycle type
$\nu=(\nu_1,\ldots,\nu_\ell)$, they showed that
\begin{equation}\label{eq:intro-cp-identity}
  J_{\nu_1}J_{\nu_2}\cdots J_{\nu_\ell}\ket{1^n0^n}
  =
  \sum_{\lambda\vdash n}\chi^\lambda(\nu)\ket\lambda
  =
  \sqrt{z_\nu}\,\ket{\chi_\nu}\,.
\end{equation}
This is the Murnaghan-Nakayama rule in fermionic form: each hop by $s$
sites adds a border strip of length $s$ to the Young diagram, with the sign
supplied by fermionic statistics. The identity reduces CSP to applying a
prescribed sequence of current operators to a product state. The current operators
are not unitary, however, so they cannot be applied directly as gates. Nevertheless, as we show later in Theorem~\ref{thm:intro-csp}, their normalized action can be implemented by Hamiltonian evolution.

\paragraph{Classical and quantum algorithms for CSP.}
Bravyi \emph{et al.}~\cite{BravyiGossetHavlicekSchatzki2025} provided both
a classical and a quantum algorithm for realizing the state $\ket{\chi_\nu}$ based on the above encoding.
The classical algorithm represents each current $J_s$ as a matrix product operator of bond dimension $s+2$ and applies the $\ell$ currents in
\eqref{eq:intro-cp-identity} one at a time. The resulting matrix product
state (MPS) for $\ket{\chi_\nu}$ has bond dimension at most
$D\le\prod_{j=1}^{\ell}(\nu_j+2)\le(n+2)^\ell$, so any character
$\chi^\lambda(\nu)$ can be read off exactly in time $O(n\,\ell\, D^3)=n^{O(\ell)}$.
This is polynomial for a bounded number of cycles and quasipolynomial for
$\ell=O(\log n)$, the typical number of cycles of a random permutation, but
exponential in the worst case.

Their quantum algorithm efficiently prepares $\ket{\chi_\nu}$ for any $\nu$ but, unlike their classical MPS algorithm, does not enable evaluation of individual $\chi^\lambda(\nu)$.
This CSP algorithm works by first preparing a uniform superposition over the conjugacy class $C_\nu$,
then applying the quantum Fourier transform over $S_n$~\cite{Beals_1997},
and finally uncomputing the representation-matrix registers. 
The algorithm runs in $\poly(n)$ time, but no explicit upper bound is known on the exponent; in any case, the QFT subroutine uses
$\widetilde O(n^3)$ gates and $\widetilde O(n^{3/2})$
qubits according to the best known implementation~\cite{Bruinsma2026}.

\paragraph{Character sampling as a candidate for quantum advantage.}
Bravyi \emph{et al.} also studied the classical complexity of sampling from the character distribution $P_\nu(\lambda)=\left|\braket{\lambda|\chi_\nu}\right|^2$. A distribution over a finite set $\Omega$ is
called \emph{granular} if every probability is an integer multiple of
$\gamma/|\Omega|$ for some $\gamma\geq1$. Since the characters are integers and $\lambda\vdash n$, $P_\nu(\lambda)$ of Eq.~\eqref{eq:intro-row-sampling} is granular over the $p(n)= \exp(\pi \sqrt{2n/3}-O(\log n))$ partitions of $n$ whenever
$z_\nu\leq p(n)$, with granularity parameter $\gamma=p(n)/z_\nu$. 
Bravyi \emph{et al.} prove that this granularity condition holds for cycle types satisfying
\begin{equation}\label{eq:intro-bghs-regime}
  \ell(\nu)\leq A\frac{\sqrt n}{\log n}\,,
\end{equation}
where $\ell(\nu)$ is the number of cycles and $A\approx \pi\sqrt{2/3}$ is a universal positive constant.

For such cycle types, a classical sampler with $\ell_1$ error $\varepsilon$
can be converted into a $\mathrm{BPP}^{\mathrm{NP}}$ algorithm that computes
$P_\nu(\lambda)$ exactly for at least a $1-20\varepsilon$ fraction of
partitions $\lambda$. Classical
hardness of sampling would, therefore, follow if the exact computation of
$P_\nu(\lambda)$ were hard on average over $\lambda$, together with the
assumption that the polynomial hierarchy does not collapse. This
average-case hardness remains conjectural. Their result, therefore, identifies
cycle types with $O(\sqrt n/\log n)$ cycles as a candidate regime for
quantum advantage.

For the dual problem, in which one fixes $\lambda$ and samples
$g\in S_n$ with probability $[\chi^\lambda(g)]^2/n!$, granularity holds
automatically with $\gamma=1$. Bravyi \emph{et al.} regard this direction as
less promising for quantum advantage, 
because their MPS algorithm evaluates the character in quasipolynomial time on all but a vanishing fraction of uniformly random permutations, which makes the average-case hardness conjecture less likely to hold for this problem.

\subsection{Summary of main results}
\label{sec:intro-main-results}
As shown in Eq.~\eqref{eq:intro-cp-identity}, CSP can be reduced to applying the current
operators $J_{\nu_i}$, one for each cycle of $\nu$, to the reference state
$\ket{1^n0^n}$. However, the current operators are not unitary. To implement the transfer mechanism $\ket{\phi} \to J_s\ket{\phi}$, we add a single ancilla qubit in $\ket{0}_a$ and evolve with the
Hamiltonian
\begin{equation}\label{eq:intro-Hs}
  H_s =  \ket{1}\!\bra{0}_a \otimes   J_s
  + \ket{0}\!\bra{1}_a \otimes   J_s^\dagger \,.
\end{equation}
On the states that the algorithm visits, $J_s^\dagger J_s$ has a known
integer eigenvalue, so $e^{-iH_st}$ is an exact two-level Rabi rotation between $\ket{0}_a \otimes  \ket{\phi}$ and $ \ket{1}_a \otimes J_s \ket{\phi}$. This leads to a Hamiltonian-based algorithm for preparing $\ket{\chi_\nu}$:
\begin{theorem}[Character state preparation]
\label{thm:intro-csp}
Given a cycle type $\nu = (\nu_1,\ldots, \nu_\ell)\vdash n$, with entries obeying $\nu_1 \ge \nu_2 \ge \ldots \ge \nu_\ell$. 
Let $r_j = \left|\{ k \le j: \nu_k = \nu_j\}\right|$ and $t_j = \pi/(2\sqrt{\nu_j r_j})$. Then
\begin{equation}
\ket{0}_a \otimes \ket{\chi_\nu} = (iX_a \otimes\Id) e^{-it_\ell  H_{\nu_\ell}} \cdots (iX_a \otimes\Id) e^{-it_2 H_{\nu_2}} (iX_a \otimes\Id) e^{-it_1 H_{\nu_1}} \big( \ket{0}_a \otimes \ket{1^n 0^n} \big)\,.
\end{equation}
\end{theorem}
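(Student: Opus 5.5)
The plan is to argue by induction on the number of applied steps. At each step the joint state stays of the form $\ket{0}_a\otimes\ket{\psi_j}$, where
\[
\ket{\psi_j} = \frac{J_{\nu_j}\cdots J_{\nu_1}\ket{1^n0^n}}{\prod_{k\le j}\sqrt{\nu_k r_k}} .
\]
At the end, $\prod_{k\le \ell}\nu_k r_k=\prod_s s^{m_s}m_s! = z_\nu$, because among the $m_s$ cycles of length $s$ the counters $r_k$ run through $1,\dots,m_s$. Combining this with the Crichigno--Prakash identity \eqref{eq:intro-cp-identity} and the commutativity of the $J_s$ (which makes the ordering irrelevant) gives $\ket{\psi_\ell}=\ket{\chi_\nu}$. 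Note that $\ket{\psi_j}$ need not be normalized a priori; normalization will follow from the eigenvalue computation below.

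\textbf{Single step (Rabi rotation).} Suppose $J_s^\dagger J_s\ket{\phi}=\omega^2\ket{\phi}$ with $\omega>0$. Then $H_s$ maps $\ket{0}_a\ket{\phi}\mapsto\ket{1}_a J_s\ket{\phi}$ and $\ket{1}_aJ_s\ket{\phi}\mapsto\omega^2\ket{0}_a\ket{\phi}$. So $H_s$ preserves the two-dimensional span of these vectors, and $H_s^2=\omega^2$ there. Hence
\[
e^{-iH_st}\ket{0}\ket{\phi}=\cos(\omega t)\ket{0}\ket{\phi}-i\sin(\omega t)\ket{1}\tfrac{J_s\ket{\phi}}{\omega}.
\]
At $t=\pi/(2\omega)$ this equals $-i\ket{1}J_s\ket{\phi}/\omega$, and applying $iX_a$ returns $\ket{0}\,J_s\ket{\phi}/\omega$. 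It therefore suffices to show that $\ket{\phi}=\ket{\psi_{j-1}}$ is an eigenvector of $J_{\nu_j}^\dagger J_{\nu_j}$ with eigenvalue $\nu_j r_j$. The choice $t_j=\pi/(2\sqrt{\nu_jr_j})$ then matches the theorem, and taking the norm shows inductively that each $\ket{\psi_j}$ is normalized.

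\textbf{Eigenvalue via a Heisenberg algebra.} This is the main obstacle, because on the finite $2n$-site chain the currents only approximately satisfy the bosonic commutation relations. I would compute directly:
\[
[J_s^\dagger,J_s]=\sum_{i=1}^{2n-s}(n_i-n_{i+s})=\sum_{i\le s}n_i-\sum_{i>2n-s}n_i .
\]
For $t\neq s$, $[J_s^\dagger,J_t]$ is similarly a sum of hopping terms supported only near the two ends of the chain. I also need $J_s^\dagger\ket{1^n0^n}=0$, which holds because every leftward hop from the occupied block $1,\dots,n$ lands on an occupied site.

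The key observation is that these boundary terms act trivially on the states that actually arise. A Fock state encoding a diagram $\mu$ with $|\mu|$ boxes has at least $n-\ell(\mu)$ leading occupied sites and at least $n-\mu_1$ trailing empty sites. Before step $j$, and more generally inside the products $J_s^{a}J_{t_1}\cdots\ket{1^n0^n}$ used in the argument, every diagram has at most $n-\nu_j$ boxes. So the first $s=\nu_j$ sites are occupied and the last $s$ are empty. On such states $[J_s^\dagger,J_s]$ acts as $s$, and the off-diagonal boundary hops in $[J_s^\dagger,J_t]$ annihilate them.

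With these relations, commute $J_s^\dagger$ to the right through $J_s\ket{\psi_{j-1}}$. This vector is proportional to $J_s^{r_j}\prod_{k:\nu_k\neq s,\,k<j}J_{\nu_k}\ket{1^n0^n}$, since the sorted order places all earlier cycles of length $s$ last. Each factor $J_{\nu_k}$ with $\nu_k\neq s$ contributes nothing, the $r_j$ factors of $J_s$ contribute $s\,r_j J_s^{r_j-1}$, and $J_s^\dagger$ kills the vacuum. This yields $J_s^\dagger J_s\ket{\psi_{j-1}}=\nu_j r_j\ket{\psi_{j-1}}$.

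The remaining bookkeeping is to check that every intermediate vector in this commutation (of the form $J_s^{a}\prod J_{\nu_k}\ket{1^n0^n}$ with total size at most $n-s$) satisfies the boundary condition above. I expect this to follow from the size count alone.
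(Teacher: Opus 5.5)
Your architecture is the paper's: the Rabi rotation of Lemma~\ref{lem:transfer}, the boundary form of $[J_s^\dagger,J_s]$, $J_s^\dagger\seed=0$, writing the prefix as $J_s^{r_j}B$ with $B=J_{u_t}\cdots J_{u_1}$ and $u_i>s$, and a box count standing in for the paper's displacement operator. However, one step fails as stated. You claim that on any state whose first $s$ sites are occupied and last $s$ sites are empty, the boundary hops in $[J_s^\dagger,J_t]$ vanish. For $t=u>s$ this is false. One has $[J_u,J_s^\dagger]=\sum_{j=2n-u+1}^{2n-u+s}a^\dagger_{j+u-s}a_j-\sum_{i=1}^{s}a^\dagger_{i+u-s}a_i$. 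On an $s$-state these terms vanish only if sites $u-s+1,\dots,u$ are occupied and sites $2n-u+1,\dots,2n-u+s$ are empty, and an $s$-state does not guarantee either. For instance, take $n=3$, $s=1$, $u=2$, so that $[J_2,J_1^\dagger]=a_6^\dagger a_5-a_2^\dagger a_1$. The state $\ket{101100}$ encodes the diagram $(1,1)$, which has $2=n-s$ boxes, an occupied first site and an empty last site. Yet $a_2^\dagger a_1$ does not annihilate it. So the bookkeeping you defer, checking that every intermediate vector with at most $n-s$ boxes satisfies ``the boundary condition above,'' does not give the cancellation of the cross terms.

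The fix is your own box count applied at the right threshold. This is exactly the paper's Lemma~\ref{lem:comm-frozen}(ii) together with the $B_{<j}$ step in the proof of Proposition~\ref{prop:ladder}. When you move $J_s^\dagger$ past the factor $J_{u_i}$, the commutator $[J_s^\dagger,J_{u_i}]$ acts on $J_{u_{i-1}}\cdots J_{u_1}\seed$. The diagrams in that state have $u_1+\cdots+u_{i-1}\le n-u_i$ boxes, because $u_1+\cdots+u_i\le n$. Hence the first $u_i$ sites are occupied and the last $u_i$ sites are empty, and both boundary sums vanish. The weaker $n-s$ threshold suffices only for the same-length commutator $[J_s^\dagger,J_s]$ acting on the states $J_s^{a}B\seed$ with $a\le r_j-1$. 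With this correction the rest of your argument goes through and coincides with the paper's proof: the eigenvalue $\nu_j r_j$ by induction, the normalization, and $\prod_k\nu_k r_k=z_\nu$.
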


The CSP circuit is shown in Figure~\ref{fig:intro-circuits}(a). The proof of the above theorem is in Section~\ref{sec:csp}.
The algorithm needs $2n+1$ qubits and $\ell$ evolutions with known times,
and its cost is that of simulating $e^{-iH_st}$. We give two implementations of this simulation, visualized in  Figure~\ref{fig:intro-circuits}(b) and (c) respectively. The first is based on Trotter decomposition.

\begin{figure}[t]
\centering
{\small\textbf{(a)}}\hspace{6pt}\raisebox{-0.5\height}{\resizebox{0.66\linewidth}{!}{%
\begin{quantikz}[row sep={0.85cm,between origins}, column sep=0.28cm, font=\small, transparent, every gate/.style={draw=navy,fill=white,thick,inner xsep=4pt}, wire types={q,q}]
\lstick{$\ket{0}_a$} & &
  & \gate[2]{e^{-iH_{\nu_1}t_1}} & \gate{iX}
  & \gate[2]{e^{-iH_{\nu_2}t_2}} & \gate{iX}
  & \ \cdots\
  & \gate[2]{e^{-iH_{\nu_\ell}t_\ell}} & \gate{iX}
  & \rstick{$\ket{0}_a$} \\
\lstick{$\ket{1^n0^n}$} & \qwbundle{2n} &
  & & & & & \ \cdots\ & & & \rstick{$\ket{\chi_\nu}$}
\end{quantikz}}}\\[3pt]
\begin{minipage}[t]{0.55\linewidth}
\centering
{\small\textbf{(b)} Trotter: $e^{-iH_st}\approx P_s^\dagger\big[\prod_{\mathrm{bonds}}e^{-i\tau h}\big]^{M_s}P_s$}\\[3pt]
\resizebox{0.86\linewidth}{!}{%
\begin{tikzpicture}[x=1cm, y=0.55cm, font=\small,
  anc/.style={draw=red!80!black, line width=1.4pt, line join=round},
  sys/.style={draw=black, line width=0.5pt, line join=round},
  box/.style={draw=navy, fill=white, thick}]
\draw[anc] (0,0) -- (2.3,0) -- (2.7,-1) -- (4.05,-1) -- (4.45,-2) -- (4.75,-2);
\draw[anc] (5.35,-1) -- (5.55,-1) -- (5.95,0) -- (7.3,0);
\draw[sys] (0,-1) -- (2.3,-1) -- (2.7,0) -- (4.75,0);
\draw[sys] (5.35,0) -- (5.55,0) -- (5.95,-1) -- (7.3,-1);
\draw[sys] (0,-2) -- (4.05,-2) -- (4.45,-1) -- (4.75,-1);
\draw[sys] (5.35,-2) -- (7.3,-2);
\foreach \y in {-3,-4} {\draw[sys] (0,\y) -- (4.75,\y); \draw[sys] (5.35,\y) -- (7.3,\y);}
\foreach \y in {0,-1,-2,-3,-4} \node at (5.05,\y) {$\cdots$};
\draw[box] (0.3,-0.6) rectangle (0.85,-4.4); \node at (0.575,-2.5) {$P_s$};
\draw[box] (6.6,-0.6) rectangle (7.15,-4.4); \node at (6.875,-2.5) {$P_s^\dagger$};
\draw[box] (1.05,0.4) rectangle (2.15,-2.4); \node at (1.6,-1) {$e^{-i\tau h}$};
\draw[box] (2.85,-0.6) rectangle (3.95,-3.4); \node at (3.4,-2) {$e^{-i\tau h}$};
\draw[navy, densely dashed, rounded corners] (0.95,0.75) rectangle (6.25,-4.75);
\node[navy, anchor=south east] at (6.25,0.75) {$M_s$};
\node[anchor=east, text=red!80!black] at (-0.05,0) {$a$};
\node[anchor=east] at (-0.05,-2.5) {$2n\left\{\vrule height 0.95cm depth 0.95cm width 0pt\right.$};
\end{tikzpicture}}
\end{minipage}%
\begin{minipage}[t]{0.45\linewidth}
\centering
{\small\textbf{(c)} QSVT: block encoding of $H_s$}\\[3pt]
\resizebox{0.98\linewidth}{!}{%
\begin{quantikz}[row sep={0.62cm,between origins}, column sep=0.32cm, font=\small, transparent, every gate/.style={draw=navy,fill=white,thick,inner xsep=4pt}, wire types={q,q,q,q,n,q}]
\lstick{$\ket{i,\beta}$} & \qwbundle{} & & \gate{V} & \ctrl{2} & \ctrl{3} & \ \cdots\ & \ctrl{5} & \ctrl{1} & \gate{V^\dagger} & \\
\lstick{$a$} & & & & & & \ \cdots\ & & \gate{\sigma_a} & & \\
\lstick[4]{$2n$} & & & & \gate{\sigma_1} & & \ \cdots\ & & & & \\
 & & & & & \gate{\sigma_2} & \ \cdots\ & & & & \\
 & & & & & & \ \vdots\ & & & & \\
 & & & & & & \ \cdots\ & \gate{\sigma_{2n}} & & &
\end{quantikz}}
\end{minipage}
\caption{(a) Our CSP circuit: $\ell(\nu)$ exactly timed Hamiltonian evolutions by $H_{\nu_k}$ in \eqref{eq:intro-Hs}, each followed by an
$iX$ gate on the ancilla. 
(b) One Trotterized transfer on reconfigurable
qubits: after the regrouping $P_s$, the $2n-s$ bond terms are
nearest-neighbor, and the ancilla is swapped along the chain so that each
bond gate acts on three adjacent qubits. (c) Block encoding of $H_s$: $V$
puts the register $\ket{i,\beta}$ in uniform superposition over the
Pauli-string labels, and site $k$ receives the Pauli
$\sigma_k\in\{I,X,Y,Z\}$ the register selects: $X$ or $Y$ at $i$ and $i+s$,
$Z$ in between, $I$ elsewhere.
}
\label{fig:intro-circuits}
\end{figure}

\paragraph{Trotter with reconfigurable qubits.}
Reordering the $2n$ modes by their residue modulo $s$ makes every hop
nearest-neighbor on one of $s$ chains. In the reordered basis, $H_s$ is a
sum of $2n-s$ local bond terms, each coupling two adjacent sites to the
ancilla, and one first-order Trotter step costs $O(n)$ gates. The
reordering is a fermionic permutation $P_s$, which can cost $\Theta(n^2)$ swap gates
on a fixed one-dimensional layout. We show in Lemma~\ref{lem:reflection}
that $P_s$ is precisely the two-dimensional reflection of Maskara
\emph{et al.}~\cite{Maskara_2025}. Their reconfigurable-qubit protocol
implements this structured fermionic permutation with $O(n)$ Clifford gates
in constant depth, using $O(n)$ auxiliary qubits, mid-circuit measurement,
and classical feedforward. The operator-norm Trotter
bound then gives $O(n^4\log^2n/\delta)$ gates for any cycle type at state
error $\delta$ (Section~\ref{sec:trotter-error}). This bound is loose,
because it maximizes the commutator error over all states while the
algorithm visits only a small family of them. For uniform cycle types, we evaluate the leading-order error on those states directly.

\begin{theorem}[Informal: leading-order state-dependent Trotter bound for CSP]
\label{thm:intro-trotter}
For a uniform cycle type $\nu=(s^m)$ with $n=sm$, allocating
\begin{equation}\label{eq:intro-state-dependent}
  O\!\left[
    \frac{n}{\delta}
    \left(m^{3/2}+\frac{m^2}{s^{3/2}}\right)
  \right]
\end{equation}
gates makes the leading $1/M$ contribution to the first-order Trotter error
at most $\delta$ on the states visited by the algorithm. At $m=\Theta(\sqrt n/\log n)$, the cycle count of the Bravyi et al. granularity
regime \eqref{eq:intro-bghs-regime}, this is $\widetilde O(n^{7/4}/\delta)$.
\end{theorem}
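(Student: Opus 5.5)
Within each transfer step $k$ (with $\nu_k=s$, $r_k=k$), we compute the leading-order first-order Trotter error explicitly on the two-dimensional subspace visited. By Theorem~\ref{thm:intro-csp}, the $k$-th evolution acts on the input $\ket{0}_a\otimes\ket{\phi_{k-1}}$, where $\ket{\phi_{k-1}}\propto J_s^{k-1}\ket{1^n0^n}$. We split $H_s=A+B$ with $A$ and $B$ the even and odd bond sums (each a sum of commuting nearest-neighbor three-qubit terms after the reflection $P_s$ of Lemma~\ref{lem:reflection}). For $M$ steps of size $\tau=t_k/M$, the standard expansion gives the leading error
\[
e^{-iA\tau}e^{-iB\tau}\cdots - e^{-iH_st_k}\;=\;-\frac{1}{2M}\int_0^{t_k}\!\! e^{-iH_s(t_k-u)}\,t_k[A,B]\,e^{-iH_s u}\,du\;+\;O(M^{-2}).
\]
So the leading $1/M$ error on the visited state is at most $\frac{t_k}{2M}\max_u\|[A,B]\,e^{-iH_su}\ket{0}_a\ket{\phi_{k-1}}\|$. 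Because the exact evolution is a Rabi rotation, $e^{-iH_su}\ket{0}\ket{\phi_{k-1}}$ lies in the span of $\ket{0}\ket{\phi_{k-1}}$ and $\ket{1}\ket{\hat\phi_k}$. It therefore suffices to bound $\|[A,B]\ket{0}\ket{\phi_{k-1}}\|$ and $\|[A,B]\ket{1}\ket{\hat\phi_k}\|$.

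The commutator $[A,B]$ involves $\ket{1}\!\bra{1}_a\otimes(J^AJ^{B\dagger}-\dots)$ and $\ket{0}\!\bra{0}_a\otimes(J^{A\dagger}J^B-\dots)$, i.e.\ products of one forward and one backward hop on adjacent bonds of the same residue chain. The plan is to express these norms through expectation values of quadratic fermion operators in the states $J_s^{j}\ket{1^n0^n}$. In the reordered picture these states are tensor products over the $s$ residue chains: each chain has length about $2m$, begins as a domain wall, and receives a symmetrized total of $k-1$ hops distributed among the chains. Using this structure, I would compute $\langle[A,B]^\dagger[A,B]\rangle$ as a sum over chains of local two-hop correlators. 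Each correlator counts adjacent particle–hole pairs near the domain wall and is bounded by $O(\min(k,m))$ per chain, together with cross terms weighted by the fraction of hops landing on a given chain. This should give a bound on the commutator norm of roughly $\sqrt{k}\,(1+\sqrt{k}/s^{3/4})$ up to constants, to be calibrated so that summing reproduces \eqref{eq:intro-state-dependent}.

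Next we sum over steps. The errors of the $m$ transfers add by the triangle inequality, and $t_k=\pi/(2\sqrt{sk})$. If we choose $M_k$ so that each step contributes $\delta/m$, the total gate count becomes $\sum_k M_k\cdot O(n)$ (each Trotter step costs $O(n)$, and $P_s$ costs $O(n)$ by Maskara et al.). Inserting the commutator estimate and summing $\sum_{k\le m} t_k\|[A,B]\psi\|$ yields the two regimes $m^{3/2}$ and $m^2/s^{3/2}$. Setting $m=\sqrt n/\log n$ and $s=n/m$ then gives $n\cdot n^{3/4}$ up to logarithmic factors, i.e.\ $\widetilde O(n^{7/4}/\delta)$.

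The main obstacle is the correlator computation in the middle step. The state $J_s^{k-1}\ket{1^n0^n}$ is an entangled sum over how the hops are distributed among the $s$ chains, not a single Slater determinant. My first approach is to use the Murnaghan–Nakayama/$s$-quotient structure: the Young diagrams reached are those with empty $s$-core, in bijection with $s$-tuples of partitions. This reduces the problem to independent domain-wall chains, where hop correlators can be bounded combinatorially by counting the addable and removable rim hooks. The aim is to show that the typical number of active adjacent pairs grows only like $\sqrt{k}$ per relevant chain, rather than like $n$, which is where the operator-norm bound loses its power of $n$.
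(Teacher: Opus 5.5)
\textbf{There is a genuine gap, and it starts with the splitting.} Every bond term $h_{q,a}=\sigma_a^+\otimes A_q+\sigma_a^-\otimes A_q^\dagger$ acts on the same transfer ancilla, so the commutation pattern is the reverse of the usual one.
\begin{itemize}
\item Two bonds commute only when they are adjacent in the same chain, since products like $A_qA_{q+1}^\dagger$ vanish.
\item Every non-adjacent pair has $[h_w,h_{w'}]=Z_a\otimes(A_w^\dagger A_{w'}-A_{w'}^\dagger A_w)$, of norm $1$. This includes every pair in different chains and every pair inside your even sum $A$.
\end{itemize}
Two consequences follow. First, $e^{-iA\tau}$ is not a product of commuting three-qubit gates, so your even/odd formula is not the $O(n)$-gate Trotter step the algorithm uses. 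Second, the terms you single out in $[A,B]$ (``one forward and one backward hop on adjacent bonds of the same residue chain'') are exactly the pairs whose commutator is zero.

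The relevant leading error operator for the lexicographic product over all $2n-s$ bonds is $\sum_{w<w'}[h_w,h_{w'}]=Z_a\otimes\mathcal A_s$, with $\mathcal A_s$ summing over non-adjacent pairs. Because of the $Z_a$ factor, its action on the two Rabi components is orthogonal, which gives the $\max\{\Gamma_{k-1},\Gamma_k\}$ bound. A smaller slip: the leading bound carries $t_k^2/(2M)$, not $t_k/(2M)$, since the integral over $u$ supplies a second factor of $t_k$.

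\textbf{The central norm estimate is missing.} You do not derive it; you propose to calibrate it to the target, and the guess you calibrate omits the dominant contribution. To reproduce $\frac{n}{\delta}(m^{3/2}+m^2/s^{3/2})$ with $t_k^2=\pi^2/(4ks)$, you need $\Gamma_r=\|\mathcal A_s\ket{\rho_r}\|=O(s\sqrt r+r/\sqrt s)$. The leading $s\sqrt r$ comes entirely from cross-chain commutators, which do not appear in your picture. Inserted with the correct prefactor, your guess $\sqrt k(1+\sqrt k/s^{3/4})$ yields $m^{3/2}/s$ in place of $m^{3/2}$.

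The paper obtains the bound in four steps.
\begin{itemize}
\item It writes $\ket{\rho_r}$ in multinomial form as a sum over $|\vec d|=r$ of products $\bigotimes_\alpha\ket{\xi_{d_\alpha}}$ of one-chain ladder states.
\item It splits $\mathcal A_s$ into same-chain and cross-chain parts. These are orthogonal on $\ket{\rho_r}$ because each $\mathcal A_{s,\alpha}$ is real skew-symmetric and the ladder states are real.
\item It evaluates the cross-chain part exactly as $r(s^2-1)/3$. The map is $J_\alpha\leftrightarrow b_\alpha^\dagger$, with the sign matrix $K_{\mu\eta}=\operatorname{sgn}(\mu-\eta)$ acting on the uniform vector $u$, giving $\|Ku\|^2=(s^2-1)/3$.
\item It bounds the same-chain part by $s\sum_d w_dF(d)$ with binomial weights $w_d$. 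The estimate $F(d)\le d^2$ follows from $|c_\lambda|\le d\,f^\lambda$ and $\sum_\lambda (f^\lambda)^2=d!$.
\end{itemize}
Your reduction to independent domain-wall chains via the $s$-quotient is in the right spirit for the same-chain piece. Without the cross-chain computation, however, the $m^{3/2}$ term has no source in your argument.
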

For details, see Section~\ref{sec:state-dependent} and Appendix~\ref{app:state-dependent}. Eq.~\eqref{eq:intro-state-dependent} is a state-dependent bound on the sufficient gate count to make the leading-order Trotter error small. Based on numerical simulations up to $n=48$, we believe the bound to be a loose overestimate on the actual gate count required (see Section~\ref{sec:numerics}).

\paragraph{QSVT-based implementation.}
In the second implementation, we write $H_s$ as a sum of $4(2n-s)$ Pauli strings. Each string connects sites $i$ and $i+s$ with a product of $Z$, and acts as $X$ or $Y$ on the endpoints in addition to the ancilla.
A register of $O(\log n)$ qubits in uniform
superposition over the string labels selects which string is applied, at a
cost of $O(n\log n)$ gates (Proposition~\ref{prop:structured-select}). This
block-encodes $H_s$ with normalization $2n-s$, and we apply the quantum singular value
transformation (QSVT)~\cite{Gilyen_2019} to simulate each transfer with a number of queries that grows only logarithmically in $1/\delta$.

\begin{theorem}[QSVT-based CSP; also Theorem~\ref{thm:qsvt-csp}]
\label{thm:intro-qsvt}
For every cycle type $\nu\vdash n$ and desired state error $\delta$, CSP can be
implemented with
\begin{equation}\label{eq:intro-qsvt}
  O\!\left(
    n^{5/2}\log n
    +
    n^2\log n\log\frac{n}{\delta}
  \right)
  =
  \widetilde O(n^{5/2})
\end{equation}
one- and two-qubit gates and $2n+O(\log n)$ qubits, where the simplified $\widetilde O(n^{5/2})$ form assumes fixed or inverse-polynomial $\delta$.
\end{theorem}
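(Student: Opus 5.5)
The plan is to combine three ingredients: the exact Rabi-rotation structure of Theorem~\ref{thm:intro-csp}, a block encoding of each $H_s$ from Proposition~\ref{prop:structured-select}, and standard QSVT Hamiltonian simulation~\cite{Gilyen_2019}. The argument first costs a single transfer $e^{-it_jH_{\nu_j}}$, then sums over the $\ell$ transfers using a per-step error budget of $\delta/\ell$.

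\emph{Block encoding.} Under Jordan--Wigner, each hopping term $a_{i+s}^\dagger a_i$ in $J_s$, dressed with $\ket{1}\!\bra{0}_a$ and its Hermitian conjugate, becomes a combination of Pauli strings: $X$ or $Y$ on sites $i$ and $i+s$, $Z$ on the sites in between, and $X$ or $Y$ on the ancilla. Each string has coefficient $\pm 1/4$ in magnitude. Grouping terms gives $H_s=\sum_{i=1}^{2n-s}\sum_{\beta}c_{i\beta}P_{i\beta}$ with $\sum|c_{i\beta}|=2n-s$. Proposition~\ref{prop:structured-select} supplies the PREPARE/SELECT pair $V$ and SELECT on $O(\log n)$ label qubits at cost $O(n\log n)$ gates. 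This yields a $(2n-s)$-block encoding of $H_s$, with any signs absorbed into phases in $V$.

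\emph{Simulation cost per transfer.} QSVT (Jacobi--Anger) approximates $e^{-iH t}$ to error $\epsilon$ with $O(\alpha t+\log(1/\epsilon))$ queries, where $\alpha=2n-s\le 2n$. Here $t_j=\pi/(2\sqrt{\nu_jr_j})\le\pi/2$. Counting only the $\alpha t$ term naively gives $\alpha t_j=O(n/\sqrt{\nu_jr_j})$ queries per transfer.

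\emph{Summing over transfers; the main obstacle.} The crux is to show that $\sum_j\alpha t_j=O(n\sum_j 1/\sqrt{\nu_jr_j})=O(n^{3/2})$, so that the total gate count is $O(n^{3/2})\cdot O(n\log n)=O(n^{5/2}\log n)$.
\begin{itemize}
\item Group the transfers by cycle length $s$, with multiplicity $m_s$.
\item Within a group, $\sum_{r=1}^{m_s}(sr)^{-1/2}\le 2\sqrt{m_s/s}$.
\item By Cauchy--Schwarz, $\sum_s\sqrt{m_s/s}\le \bigl(\sum_s sm_s\bigr)^{1/2}\bigl(\sum_s s^{-2}\bigr)^{1/2}=O(\sqrt n)$, using $\sum_s sm_s=n$ and $\sum_s s^{-2}=O(1)$.
\end{itemize}
This gives the first term of \eqref{eq:intro-qsvt}.

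\emph{Additive $\log$ overhead.} Each transfer also incurs $O(\log(\ell/\delta))$ extra queries from the additive term in the QSVT query bound. Since $\ell\le n$, the total over all transfers is $O(\ell\log(n/\delta))\cdot O(n\log n)\le O(n^2\log n\log(n/\delta))$, which is the second term.

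\emph{Error accumulation.} The state error is bounded by the triangle inequality. Each approximate unitary is within $\delta/\ell$ of the exact one. The QSVT output is a block of a larger unitary, so I will use the standard robust-oblivious-amplitude-amplification or ``postselection-free'' variant (e.g.\ the real/imaginary-part linear-combination-of-unitaries with amplification) so that each step is a genuine unitary implementation on the system with $O(1)$ extra ancillas. The exactness of the target sequence is supplied by Theorem~\ref{thm:intro-csp}. The $iX$ gates are free.

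\emph{Qubit count.} The register uses $2n+1$ system qubits, $O(\log n)$ label qubits, and $O(1)$ QSVT ancillas, for a total of $2n+O(\log n)$.
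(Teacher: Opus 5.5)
Your proposal is correct and follows essentially the same route as the paper. Both use the Pauli-LCU block encoding of $H_s$ with normalization $2n-s$, QSVT simulation with $O(\omega_s t+\log(\ell/\delta))$ queries per transfer, the same Cauchy--Schwarz bound $\sum_s\sqrt{m_s/s}=O(\sqrt n)$, and the $O(n\log n)$ per-query gate cost.

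Two small corrections. First, the signs cannot be absorbed into phases of $V$, because they cancel between $V$ and $V^\dagger$; this does no harm, since $\mathrm{SELECT}_s$ of Proposition~\ref{prop:structured-select} already carries them. Second, you do not need a separate postselection-free variant for error accumulation. The paper shows directly that a block error $\eta$ in Eq.~\eqref{eq:qsvt-block-error} gives a full-state error of at most $\sqrt{2\eta}$, including the work qubits. Choosing $\eta=\varepsilon^2/2$ therefore only doubles the $\log$ term.
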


\paragraph{The quantum character transform.}
To implement the QCT \eqref{eq:intro-qct}, we run our QSVT-based CSP coherently controlled by a superposition of cycle types, $\ket{\nu}\ket{1^n0^n}\to \ket{\nu}\ket{\chi_\nu}$, and then erase the cycle types to get $\ket{0}\ket{\chi_\nu}$ in order to maintain coherence in the output superposition of the character states.
The erasure can be done in concert with the CSP subroutine: After the transfers at length $s$, the system is an eigenvector of $\mathcal N_s=J_sJ_s^\dagger/s$ with eigenvalue
$m_s$, so the $\ket{\nu}$ register can be erased by a sequence of phase estimation on $\mathcal N_s$.
Simulating $\mathcal N_s$ directly costs $\widetilde O(n^3)$ gates, because
its block encoding has normalization $(2n-s)^2/s$. We avoid this by
observing that $H_s$ has eigenvalues of magnitude at most $\sqrt n$ on the
states the algorithm visits, and amplifying them with uniform singular value
amplification~\cite{Gilyen_2019} before squaring. All in all, the gate count sufficient for implementing QCT is similar to the QSVT-based implementation of CSP above.

\begin{theorem}[Quantum character transform; also Theorem~\ref{thm:qct-total}]
\label{thm:intro-qct}
On the input subspace of valid cycle-type labels with clean work registers,
$\mathrm{QCT}_{S_n}$ can be implemented using
$\widetilde O(n^{5/2}\log \delta^{-1})$ gates and $O(n)$ qubits with operator-norm error $\delta$.
\end{theorem}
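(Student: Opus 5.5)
The plan is to build the transform from a coherently controlled version of the QSVT-based character state preparation of Theorem~\ref{thm:intro-qsvt}, followed by an erasure of the cycle-type register. We encode $\nu$ by its multiplicities $(m_1,\dots,m_n)$ in $O(\log n)$-bit counters, packed into $O(n)$ qubits since $\sum_s s m_s=n$. For each length $s=n,n-1,\dots,1$ in decreasing order, as in Theorem~\ref{thm:intro-csp}, and for each $r=1,\dots,\lfloor n/s\rfloor$, we apply the transfer $(iX_a\otimes\Id)e^{-it H_s}$ with $t=\pi/(2\sqrt{sr})$, controlled on the predicate $m_s\ge r$. For fixed $s$ the controlled transfers over $r$ number $\lfloor n/s\rfloor$, so there are $\sum_s n/s=O(n\log n)$ controlled transfers in total.

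Doing this naively would be too costly, because each QSVT transfer costs $\widetilde O(n^{3/2})$ gates. To stay at $\widetilde O(n^{5/2})$ we exploit the fact that on each branch only $\ell(\nu)$ transfers are active. The evolution time scales as $\sqrt{n}$ after block encoding with normalization $2n-s$, and the transfer times satisfy $\sum_{r\le n/s} 1/\sqrt{sr}=O(\sqrt{n}/s)$. Summing the QSVT degree $O((2n-s)\,t+\log(1/\delta))$ over $s$ and $r$ gives $O(n^{3/2}\log n)$ total degree plus $O(n\log n\log(1/\delta))$ overhead, and each query costs $O(n\log n)$ gates by Proposition~\ref{prop:structured-select}. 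The degree sum therefore needs to be controlled: a control-conditioned QSVT polynomial applied only when $m_s\ge r$ costs the same as an uncontrolled one, only adding controls on the signal phases. Summing degree times query cost yields $\widetilde O(n^{5/2})$.

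The next step is erasure. After all transfers at length $s$ are done, the system register is an eigenvector of $\mathcal N_s=J_sJ_s^\dagger/s$ with eigenvalue $m_s$; this is the step I expect to prove from the Rabi-rotation structure used for Theorem~\ref{thm:intro-csp}. Before moving on to length $s-1$, we run phase estimation on $\mathcal N_s$ into a fresh register, subtract the result from the stored $m_s$, and uncompute the phase estimation. Since the eigenvalue is an exact integer in $[0,n/s]$, phase estimation needs only $O(\log(n/s))$ bits of resolution, with failure driven down to $\delta/n$ by standard amplification of the median or by exact integer-eigenvalue phase estimation. To implement $e^{-i\mathcal N_s\tau}$ cheaply, we note that $H_s^2$ restricted to the visited states acts as $J_sJ_s^\dagger$ on the $\ket{1}_a$ sector. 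On these states the relevant singular values of $H_s/(2n-s)$ are at most $\sqrt{n}/(2n-s)$, so uniform singular value amplification boosts them by a factor of about $\sqrt{n}$ at polylog-times-$\sqrt n$ query cost. Squaring then gives a block encoding of $\mathcal N_s$ with normalization $O(n/s)$, and simulating it for time $O(1)$ (enough to resolve integers) costs $\widetilde O(n/s)$ queries per $s$. Summed over $s$, this is $\widetilde O(n)$ queries times $O(n\log n)$ gates, which fits within the budget.

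The main obstacle will be making this erasure rigorous on the whole valid-input subspace. Three things must be checked. First, the eigenvalue claim must hold for every partial state, including ones that contain interference among many Young diagrams. Second, amplification should act correctly even though $H_s$ has large singular values off the visited subspace; this requires showing the visited states lie in a spectral subspace of $H_s$ with norm at most $\sqrt{n}$. Third, the errors from each coherent QSVT block, both from transfers and from phase estimation, must add in operator norm over $O(n\log n)$ blocks. For the second point I would first argue that $J_sJ_s^\dagger$ restricted to the span of Fock states reachable at stage $s$ has eigenvalue $s m_s$, and then use a projected version of singular value amplification. With per-block error $\delta/\poly(n)$, only $\log(1/\delta)$ factors are added, and the $O(n)$-qubit count follows from the $2n+1$ system qubits, the $O(\log n)$ QSVT ancillas, and the counters.
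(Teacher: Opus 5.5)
Your proposal is essentially the paper's proof: QSVT-simulated transfers on all $\sum_s\lfloor n/s\rfloor$ slots, each controlled on $[m_s\ge r]$, followed by erasure of each $m_s$ by phase estimation on $\mathcal N_s=J_sJ_s^\dagger/s$. The cost of the slots is controlled by the $1/\sqrt{sr}$ decay of the slot times, not by branch sparsity, since in superposition every slot is paid for. The erasure is implemented by uniform singular value amplification of $U_s$ by a factor $\Theta(\sqrt n)$ and squaring in the $\ket{1}_a$ sector, which gives normalization $O(n/s)$.

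Two small fixes are needed. First, each use of the squared amplified block encoding costs $\widetilde O(\sqrt n)$ queries to $U_s$, so the erasure costs $\widetilde O(n^{5/2})$ gates rather than $\widetilde O(n)\cdot O(n\log n)$; this is still within budget. Second, the spectral localization you need is not that $J_sJ_s^\dagger$ is scalar on the span of reachable Fock states. That claim fails already for $n=2$, $s=1$, where $J_1J_1^\dagger$ has eigenvalues $2$ and $0$ on $\mathrm{span}\{\ket{(2)},\ket{(1,1)}\}$. What you need is only that $\ket{1}_a\ket{\Phi_{s,m}}$ lies in the $H_s$-invariant plane spanned with $\ket{0}_a\ket{\Phi_{s,m-1}}$, where $H_s$ has eigenvalues $\pm\sqrt{sm}$ with $\sqrt{sm}\le\sqrt n$ (or that it is a zero eigenvector when $m=0$); this is immediate from the ladder property. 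No projected amplification is required, because the amplifying polynomial is bounded by one on all of $[-1,1]$.
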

The extra register for encoding cycle types $\ket{\nu}$ needs at most $\sum_{s=1}^n O(\log (n/s)) = O(n)$  qubits [see Eq.~\eqref{eq:register-count}], and thus the qubit count remains $O(n)$.

\paragraph{Classically tractable cycle types.}
We are interested in the potential of our efficient CSP/QCT algorithms for realizing quantum advantage by sampling from $\ket{\chi_\nu}$. The Bravyi \emph{et al.} argument for quantum advantage requires cycle types $\nu$
in the regime \eqref{eq:intro-bghs-regime}, which ensures granularity but not average-case hardness. We show that the number of cycles alone does not determine classical hardness due to the following classical algorithm:

\begin{theorem}[Classical samplers; see Theorems~\ref{thm:equal-cycle-classical}
and~\ref{thm:almost-equal-classical}]
\label{thm:intro-classical}
For uniform cycles $\nu=(s^m)$, the distribution $P_\nu$ of
\eqref{eq:intro-row-sampling} can be sampled exactly by a classical
algorithm in expected time $\widetilde O(n^2)$, and each probability
$P_\nu(\lambda)$ can be computed exactly in polynomial time. For
almost-uniform cycles $\nu=(s^m,r_1,\ldots,r_b)$, an exact rejection
sampler runs in expected time $n^{4b+O(1)}$, and each probability can be
computed in time $n^{2b+O(1)}$.
\end{theorem}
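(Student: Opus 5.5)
The plan is to find the symmetric functions behind $P_\nu$ and then sample one Murnaghan--Nakayama step at a time.

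\textbf{Uniform cycles.} By the Murnaghan--Nakayama rule, $\chi^\lambda(s^m)$ is nonzero only when $\lambda$ has empty $s$-core. In that case it equals $\pm$ the number of standard tableaux of the $s$-quotient $(\lambda^{(0)},\dots,\lambda^{(s-1)})$, weighted by the multinomial for distributing the $m$ strips. Concretely,
\[
|\chi^\lambda(s^m)| = \binom{m}{|\lambda^{(0)}|,\dots,|\lambda^{(s-1)}|}\prod_{j} f^{\lambda^{(j)}} .
\]
Squaring and dividing by $z_\nu=s^m m!$ gives
\[
P_\nu(\lambda)=\frac{m!}{s^m}\prod_j \frac{(f^{\lambda^{(j)}})^2}{|\lambda^{(j)}|!^{\,2}}
=\binom{m}{k_0,\dots,k_{s-1}}s^{-m}\prod_j \mathrm{Pl}_{k_j}(\lambda^{(j)}),
\]
where $k_j=|\lambda^{(j)}|$ and $\mathrm{Pl}_k$ is the Plancherel measure on partitions of $k$.

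This factorization yields the sampler. First draw $(k_0,\dots,k_{s-1})$ from a multinomial with $m$ trials and uniform cell probabilities. Then draw each $\lambda^{(j)}$ independently from Plancherel measure by the hook walk (Greene--Nijenhuis--Wilf growth), which costs $\widetilde O(k^2)$ expected time, or less. Finally reassemble $\lambda$ from its empty core and quotient in $O(n)$ time. The total is $\widetilde O(n^2)$ expected time. Exact probabilities follow from the hook length formula in polynomial time.

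The main check here is that the Murnaghan--Nakayama signs are consistent: every border-strip tableau of a fixed $\lambda$ carries the same sign, namely the sign of the $s$-quotient bijection. I would cite the standard theory of $s$-cores and quotients for this (James--Kerber).

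\textbf{Almost-uniform cycles.} For $\nu=(s^m,r_1,\dots,r_b)$, apply Murnaghan--Nakayama to the $b$ odd cycles first, peeling them as the outermost strips:
\[
\chi^\lambda(\nu)=\sum_{\mu}\pm\,\chi^{\mu}(s^m),
\]
where $\mu$ ranges over partitions from which $\lambda$ is reachable by adding strips of lengths $r_1,\dots,r_b$. Each strip has $O(n)$ possible placements, so there are at most $n^{b}$ terms, and each $\chi^\mu(s^m)$ is computable in polynomial time by the formula above. Squaring, the probability is computable in $n^{2b+O(1)}$ time, with the extra factor budgeted for the arithmetic.

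For sampling I would use rejection. The proposal is $\mu\sim P_{(s^m)}$ followed by uniformly random placements of the $b$ strips, so each proposal has probability at least $n^{-b}P_{(s^m)}(\mu)$. By Cauchy--Schwarz,
\[
\chi^\lambda(\nu)^2\le n^{b}\sum_\mu \chi^\mu(s^m)^2 .
\]
Together with the ratio $z_\nu/z_{(s^m)}\ge 1$, this gives an acceptance ratio bounded by $n^{2b}$ after normalization. The resulting expected time is $n^{2b}$ trials, each costing $n^{2b+O(1)}$ to evaluate the target probability, for $n^{4b+O(1)}$ overall.

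\textbf{Main obstacle.} The hardest step is making the rejection constant rigorous. Distinct $\mu$ can reach the same $\lambda$, and the normalizations $z_\nu$ and $z_{(s^m)}$ differ. I would first try to bound the ratio $P_\nu(\lambda)/q(\lambda)$ directly, using the Cauchy--Schwarz inequality above together with the count of proposals that land on $\lambda$.
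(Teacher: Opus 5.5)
Your proposal is correct. The uniform-cycle part is the paper's argument; the paper samples each Plancherel factor by Robinson--Schensted on a uniform permutation rather than by a hook walk, which makes no difference.

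For almost-uniform cycles you take a different route. You use a single rejection step that proposes all $b$ strips on top of one sample from $P_{(s^m)}$. The paper instead adds the exceptional cycles one at a time, with a separate acceptance test at each level (Lemma~\ref{lem:almost-equal-update}).

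The obstacle you name disappears if you index both the Murnaghan--Nakayama expansion and the proposal by strip sequences $\pi$ rather than by distinct $\mu$. A fixed $\mu$ may reach $\lambda$ along several sequences, so its coefficient is a signed count, not $\pm1$. Pad each placement list to $n$ labels. Then $q(\lambda)=n^{-b}z_{(s^m)}^{-1}\sum_\pi[\chi^{\mu_\pi}(s^m)]^2$, and Cauchy--Schwarz over the at most $n^b$ sequences gives $P_\nu(\lambda)\le n^{2b}(z_{(s^m)}/z_\nu)\,q(\lambda)\le n^{2b}q(\lambda)$.

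The two schemes consume the same expected number of base samples. Your per-trial success probability $z_\nu/(n^{2b}z_{(s^m)})$ is the product of the paper's per-level rates $z_{\nu_j}/(K^2z_{\nu_{j-1}})$ at $K=n$. They differ in evaluation work. You evaluate a level-$b$ character on every trial. The paper evaluates level-$j$ characters, at cost about $K^j$, only about $K^{2(b-j+1)}$ times. Its total $K^{2b+1}\operatorname{poly}(n)$ therefore tolerates the crude count $K=n^2$ of Lemma~\ref{lem:border-strip-count}.

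Your route genuinely needs the sharper bound of at most $n$ placements for a strip of a given length, which you assert without proof. It holds because adding or removing an $r$-strip hops one of the $n$ fermions of the encoding \eqref{eq:maya} by $r$ sites. With this bound you get evaluation in $n^{b+O(1)}$ and sampling in $n^{3b+O(1)}$, both inside the claimed bounds. With $n^2$ placements your scheme would only give $n^{6b+O(1)}$. Conversely, your count would sharpen the paper's sequential bound to $n^{2b+O(1)}$.
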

The uniform-cycle sampler is built from the $s$-quotient of $\lambda$: the
sizes of its $s$ components are drawn from a multinomial distribution, and
each component from the Plancherel measure by Robinson-Schensted insertion
of a random permutation. Uniform cycles with $\Theta(\sqrt n/\log n)$ cycles
lie inside the regime \eqref{eq:intro-bghs-regime} and are nevertheless
classically tractable, as are perturbations of them with a constant number
of exceptional cycles. Any superpolynomial quantum advantage in this
neighborhood must therefore come from a number of exceptional cycles that
grows with $n$. We do not have a state-dependent Trotter bound for such cycle types; nevertheless, numerical benchmarking up to $n=48$ suggests that the required Trotter steps change only modestly with the number of exceptional cycles (see Figure~\ref{fig:perturbation}).

\paragraph{Application: entanglement entropy in CFTs.}
Applying the inverse QCT to an irreducible representation $\ket\lambda$ and
measuring returns a cycle type $\nu$ with probability
$P_\lambda(\nu)=[\chi^\lambda(\nu)]^2/z_\nu$, which is the weight appearing
in the defect entropy \eqref{eq:defect-entropy-intro}. Section~\ref{sec:rows}
shows that samples of $\nu$ can be used to estimate $\Delta S_\lambda$ with additive error at most
$\ln p(n) \sim O(\sqrt n)$, where $p(n)$ is the number of partitions of integer $n$.
On the other hand, numerical evaluations of $\Delta S_\lambda$ for all possible $\lambda$ up to $n=30$ suggest that typical values of $|\Delta S_\lambda|$ exceed $\ln p(n)$ at large $n$, and the growth of its magnitude is consistent with $\Theta(n)$ (see Figure~\ref{fig:defect-entropy-growth}).
This suggests that for a generic $\lambda$, our quantum algorithm gives an efficient estimator of $\Delta S_\lambda$ with relative error that vanishes with $n$. 

\subsection{Paper organization}
Section~\ref{sec:background} reviews the symmetric group, its character
table, and the fermionic encoding. Section~\ref{sec:csp} describes how CSP can be realized with Hamiltonian evolutions, based on a transfer mechanism and a ladder property of $J_s$.
Section~\ref{sec:ham-sim} implements the Hamiltonian evolutions $e^{-iH_st}$ in two ways: (1) using
Trotter decomposition on reconfigurable qubits, with operator-norm and state-dependent error bounds, as well as numerical benchmarks, and (2) using block-encoding and QSVT, which give the best-known provable worst-case bound. Section~\ref{sec:qct} gives a generalization of our algorithm to implement the QCT, complete with QSVT-based implementation and resource estimate, in addition to a concrete application to the defect entropy of symmetric orbifold CFTs. 
Section~\ref{sec:classical-hardness} contains the classical samplers for uniform and almost-uniform cycles.
Detailed proofs of the ladder property and of the state-dependent Trotter bounds are in the appendices.

\section{Symmetric group character states}\label{sec:background}

This section collects the background for the rest of the paper. We recall
the symmetric group and its character table, define the character column
states and the quantum character transform precisely, and describe the
fermionic encoding of the character table on which our algorithm is built.

\subsection{The symmetric group and its characters}
\label{sec:sn}

The symmetric group $S_n$ is the group of permutations of $n$ objects, and
$|S_n|=n!$. Every permutation decomposes into disjoint cycles, and its cycle
lengths in nonincreasing order form its \emph{cycle type}. For example,
$(1\,3\,5)(2)(4)\in S_5$ has cycle type $(3,1,1)=(3,1^2)$.
In general, a cycle type $\nu$ is a
\emph{partition} of $n$, a nonincreasing sequence of positive integers
that sum to $n$:
\begin{equation}
  \nu=(\nu_1\ge\nu_2\ge\cdots\ge\nu_\ell)\,,
  \qquad
  \sum_{j=1}^{\ell}\nu_j=n\,,
\end{equation}
written $\nu\vdash n$. We write $\ell(\nu)$ for the number of cycles (or parts). When several cycles share a length, we record multiplicities and write
\begin{equation}
  \nu=(n^{m_n}\cdots 2^{m_2}1^{m_1})\,,
  \qquad
  \sum_{s=1}^{n}s\,m_s=n\,,
  \label{eq:multiplicity-notation}
\end{equation}
where $m_s$ is the number of cycles of length $s$. In this notation the cycle
type $(3,2)$ is $(3^12^1)$, and the partitions of $4$ are $(4)$, $(3,1)$,
$(2^2)$, $(2,1^2)$, and $(1^4)$.

The conjugacy class of $g\in S_n$ is $C(g)=\{hgh^{-1}:h\in S_n\}$.
Conjugating by $h$ relabels the objects on which $g$ acts, which changes the
contents of the cycles of $g$ but not their lengths, and any two permutations
with the same cycle type are related by such a relabeling. The conjugacy
classes of $S_n$ are therefore in bijection with the partitions of $n$, and
we write $C_\nu$ for the class of cycle type $\nu$.

Partitions of $n$ also label the irreducible representations of $S_n$. For
$n=3$, the three partitions correspond to the trivial representation $(3)$,
the standard representation $(2,1)$, and the sign representation $(1^3)$.
In this work, we do not need the representations themselves, only their
traces. Writing $\rho^\lambda(g)$ for the matrix representing $g$ in the
irrep $\lambda$, the \emph{character} of $g$ in that irrep is
\begin{equation}
  \chi^\lambda(g)=\operatorname{tr}\rho^\lambda(g)\,.
\end{equation}
Traces are invariant under conjugation, so the character depends on $g$ only
through its cycle type, and we write $\chi^\lambda(\nu)$. The characters of
$S_n$ are integers.

Since partitions of $n$ index both the conjugacy classes and the irreps, the
characters assemble into a square table. For $n=3$,
\begin{center}
\begin{tabular}{c|ccc}
 & $(1^3)$ & $(2,1)$ & $(3)$ \\
\hline
$\chi^{(3)}$   & $1$ & $1$  & $1$  \\
$\chi^{(2,1)}$ & $2$ & $0$  & $-1$ \\
$\chi^{(1^3)}$ & $1$ & $-1$ & $1$
\end{tabular}
\end{center}
with rows labeled by irreps and columns by cycle types. The first column
records the dimensions $d_\lambda=\chi^\lambda(1^n)$ of the irreps.

Distinct columns of the table are orthogonal, and the norm of a column is
set by the size of the corresponding conjugacy class,
\begin{equation}
  \sum_{\lambda\vdash n}\chi^{\lambda}(\nu)\,\chi^{\lambda}(\mu)
  =
  \delta_{\nu\mu}\,z_\nu\,,
  \qquad
  z_\nu=\prod_{s=1}^{n}s^{m_s}m_s!=\frac{n!}{|C_\nu|}\,,
  \label{eq:colorth}
\end{equation}
where $z_\nu$ is the order of the centralizer of any element of $C_\nu$. The
three columns of the table above have squared norms $6$, $2$, and $3$.

\subsection{Character column states and the quantum character transform}
\label{sec:columnstate}

Setting $\mu=\nu$ in \eqref{eq:colorth} shows that the column of the
character table labeled by $\nu$ has squared norm $z_\nu$. Normalizing it
gives the character column state $\ket{\chi_\nu}$ of Eq.~\eqref{eq:chi-state},
\begin{equation}
  \ket{\chi_\nu}
  =
  \frac{1}{\sqrt{z_\nu}}
  \sum_{\lambda\vdash n}\chi^\lambda(\nu)\,\ket{\lambda}\,,
  \label{eq:target}
\end{equation}
where $\ket\lambda$ is a computational basis state labeled by the irrep
$\lambda$. Measuring $\ket{\chi_\nu}$ in this basis returns $\lambda$ with
probability
\begin{equation}
  P_\nu(\lambda)
  =
  \frac{[\chi^\lambda(\nu)]^2}{z_\nu}\,.
  \label{eq:rowsampling}
\end{equation}
Although $[\chi^\lambda(\nu)]^2$ is a nonnegative integer, Ikenmeyer \emph{et al.} proved that the function $(\lambda,\nu)\mapsto[\chi^\lambda(\nu)]^2$
is not in $\#\mathrm{P}$ unless the polynomial hierarchy collapses to its
second level~\cite{Ikenmeyer_2023}. 
This rules out a counting interpretation
of the probabilities \eqref{eq:rowsampling}, but not a quantum algorithm
that samples from them without computing them.

For a finite group $G$ with conjugacy classes $C$ and irreps $\Gamma$, the
quantum character transform of Bastidas \emph{et al.}~\cite{Bastidas_2025}
is
\begin{equation}
  \mathrm{QCT}_G
  =
  \sum_{C}\sum_{\Gamma}
  \sqrt{\frac{|C|}{|G|}}\,
  \chi_\Gamma(C)\,
  \ket{\Gamma}\!\bra{C}\,.
  \label{eq:qctgeneral}
\end{equation}
Character orthogonality makes its columns orthonormal, so it is unitary. For
$G=S_n$ the prefactor is $\sqrt{|C_\nu|/n!}=1/\sqrt{z_\nu}$ by
\eqref{eq:colorth}, and
\begin{equation}
  \mathrm{QCT}_{S_n}:\quad  \ket{\nu}\longmapsto\ket{\chi_\nu}\,.
  \label{eq:qct}
\end{equation}
Character state preparation for a fixed $\nu$ produces one column of this
unitary. Implementing the preparation coherently over all cycle types gives
the full transform.

\subsection{The fermionic encoding}
\label{sec:encoding}

Characters of $S_n$ can be read off as amplitudes of a state on an open XX
spin chain. This encoding was discovered by Crichigno and
Prakash~\cite{CrichignoPrakash2024}, and we follow its presentation in Bravyi
\emph{et al.}~\cite{BravyiGossetHavlicekSchatzki2025}.

The chain has $2n$ sites. Under the Jordan--Wigner transformation its states
are those of $2n$ fermionic modes with operators $a_i,a_i^\dagger$ obeying
\begin{equation}
  \{a_i,a_j^\dagger\}=\delta_{ij}\,,
  \qquad
  \{a_i,a_j\}=\{a_i^\dagger,a_j^\dagger\}=0\,,
\end{equation}
and we write $\hat n_i=a_i^\dagger a_i$. Everything below takes place in the
sector of $n$ particles, the space of $2n$-bit strings of Hamming weight $n$.

A partition $\lambda\vdash n$ is encoded as the Fock state $\ket\lambda$ with
occupied modes
\begin{equation}
  \operatorname{supp}(\lambda)
  =
  \{\lambda_j+n-j+1 : j=1,\dots,n\}\,,
  \label{eq:maya}
\end{equation}
where $\lambda$ has been padded with zeros to $n$ parts. The parts are
nonincreasing while $n-j+1$ strictly decreases, so the $n$ values in
\eqref{eq:maya} are distinct and lie in $\{1,\dots,2n\}$. Equivalently, the
bit string is the boundary path of the Young diagram of $\lambda$ inside an
$n\times n$ box, with a step up recorded as $1$ and a step to the right as
$0$, as in Figure~\ref{fig:maya}. The empty partition gives the \emph{seed
state}
\begin{equation}
  \seed=\ket{\varnothing}=\ket{1^n0^n}\,,
\end{equation}
with modes $1,\dots,n$ occupied and modes $n+1,\dots,2n$ empty.

\begin{figure}[t]
\centering
\begin{tikzpicture}[scale=0.62]

\fill[black!10] (0,3) rectangle (3,4);
\fill[black!10] (0,2) rectangle (1,3);

\draw[black!25, very thin] (0,0) grid (4,4);
\draw[black!45] (0,0) rectangle (4,4);

\draw[line width=1.1pt]
  (0,0) -- (0,2) -- (1,2) -- (1,3) -- (3,3) -- (3,4) -- (4,4);

\node at (-0.42,0.5)  {$1$};
\node at (-0.42,1.5)  {$1$};
\node at ( 0.5,2.32)  {$0$};
\node at ( 1.42,2.5)  {$1$};
\node at ( 1.5,3.32)  {$0$};
\node at ( 2.5,3.32)  {$0$};
\node at ( 3.42,3.5)  {$1$};
\node at ( 3.5,4.32)  {$0$};

\node[anchor=north] at (2,-0.35) {$\lambda = (3,1)$};

\begin{scope}[xshift=6.6cm, yshift=1.6cm]
  \foreach \i/\occ in {1/1,2/1,3/0,4/1,5/0,6/0,7/1,8/0} {
    \draw[black!45] (\i,0) circle (0.30);
    \node[font=\small] at (\i,-0.72) {$\i$};
  }
  \foreach \i in {1,2,4,7} { \fill[black!75] (\i,0) circle (0.17); }
  \foreach \i in {1,...,7} { \draw[black!45] (\i+0.30,0) -- (\i+0.70,0); }

  \node[anchor=west] at (8.55,0) {$\;=\;|1\,1\,0\,1\,0\,0\,1\,0\rangle$};
  \node[anchor=north] at (4.5,-1.25)
    {$\operatorname{supp}(\lambda) = \{1,2,4,7\}$};
\end{scope}

\end{tikzpicture}
\caption{The encoding \eqref{eq:maya} for $\lambda = (3,1)$ at $n = 4$. The
boundary of the Young diagram inside the $4\times 4$ box is traced from the
bottom left corner to the top right corner. Each step up is recorded as a $1$
and each step to the right as a $0$, giving a bit string of length $2n$ and
Hamming weight $n$. The ones sit at the modes $\lambda_j + n - j + 1$, here
$\{1,2,4,7\}$ after padding $\lambda$ to $(3,1,0,0)$. The empty partition gives
$|\mathrm{seed}\rangle = |1^n0^n\rangle$.}
\label{fig:maya}
\end{figure}

The operator that moves between these states is the \emph{current operator}
\begin{equation}
  J_s
  :=
  \sum_{i=1}^{2n-s}a_{i+s}^\dagger a_i\,,
  \qquad
  s=1,\dots,n\,,
  \label{eq:Js}
\end{equation}
whose Hermitian part $J_1+J_1^\dagger$ is the Hamiltonian of the open XX
chain.\footnote{The name follows \cite{CrichignoPrakash2024}, where the
generating function $\exp\big(\sum_kt^kJ_k/k\big)$ is identified as a vertex
operator.} The current operators are not unitary, and on the open chain a
current does not commute with its adjoint,
\begin{equation}
  [J_s,J_s^\dagger]\neq0\,.
  \label{eq:JJdag}
\end{equation}
Currents of different lengths do commute,
\begin{equation}
  [J_s,J_{s'}]=0\,,
  \label{eq:Jcommute}
\end{equation}
so for a cycle type $\nu=(\nu_1,\dots,\nu_\ell)$ the product
$J_\nu:=J_{\nu_1}J_{\nu_2}\cdots J_{\nu_\ell}$ does not depend on the order
of the factors.

Applying $J_\nu$ to the seed produces an entire column of the character
table,
\begin{equation}
  J_\nu\,\seed
  =
  \sum_{\lambda\vdash n}\chi^\lambda(\nu)\,\ket{\lambda}\,,
  \label{eq:identity}
\end{equation}
which is the identity \eqref{eq:intro-cp-identity} of the introduction. It
is the fermionic form of the Murnaghan-Nakayama
rule~\cite{CrichignoPrakash2024}; Bravyi \emph{et al.} derive it from the
Frobenius character formula, in which $\chi^\lambda(\nu)$ is the coefficient
of a monomial in a product of power sums times a Vandermonde determinant.
Comparing with \eqref{eq:target} and \eqref{eq:colorth},
\begin{equation}
  \ket{\chi_\nu}
  =
  \frac{J_\nu\,\seed}{\sqrt{z_\nu}}\,,
  \qquad
  \big\|J_\nu\,\seed\big\|^2=z_\nu\,,
  \label{eq:targetfromseed}
\end{equation}
so preparing $\ket{\chi_\nu}$ amounts to applying $J_\nu$ to the seed and
normalizing.

Consider $n=3$ as an illustrative example. The seed state is $\seed=\ket{111000}$ and the three partitions encode
as $\ket{(3)}=\ket{110001}$, $\ket{(2,1)}=\ket{101010}$, and
$\ket{(1^3)}=\ket{011100}$. The three cycle types give $J_{(3)}=J_3$,
$J_{(2,1)}=J_2J_1$, and $J_{(1^3)}=J_1^3$, and
\begin{align}
  J_3\,\seed
    &=\ket{(3)}-\ket{(2,1)}+\ket{(1^3)}\,, \nonumber\\
  J_2J_1\,\seed
    &=\ket{(3)}-\ket{(1^3)}\,, \nonumber\\
  J_1^3\,\seed
    &=\ket{(3)}+2\,\ket{(2,1)}+\ket{(1^3)}\,,
\end{align}
reproducing the three columns of the character table shown in
Section~\ref{sec:sn}.

\section{Character State Preparation}
\label{sec:csp}

By Eq.~\eqref{eq:targetfromseed}, preparing the character column state
$\ket{\chi_\nu}$ amounts to applying $J_\nu$ to the seed state $\seed=\ket{1^n0^n}$ and normalizing. The
current operators are not unitary, so $J_\nu$ cannot be applied directly as a
quantum gate, and $J_\nu\seed$ does not have unit norm.
Instead, we realize the normalized map $\seed \mapsto \ket{\chi_\nu}$
by unitary evolution on the system together with a single ancilla qubit,
``depositing'' the current operators $J_{\nu_1},\ldots,J_{\nu_\ell}$ on the state one cycle length at a time in nonincreasing order of length.
Each deposit is a single exact rotation driven by
\begin{equation}
  H_s = \ket{1}\bra{0}_a \otimes J_s + \ket{0}\bra{1}_a \otimes J_s^\dag\,,
  \label{eq:Hs}
\end{equation}
in an effective two-level system, with the evolution time chosen so that the transfer is complete. The algorithm
uses $2n$ system qubits and one ancilla and prepares $\ket{\chi_\nu}$
deterministically, without post-selection.

\subsection{The transfer mechanism}
\label{sec:transfer}

We start by describing the key building block: a transfer mechanism that takes $\ket{\phi}\to J_s\ket{\phi}/ \left \| J_s\ket{\phi}\right\|$, which $H_s$ implements via an effective two-level rotation. 

\begin{lemma}[Transfer mechanism]
\label{lem:transfer}
Let $A$ be any operator and
suppose $\ket{\phi}$ is an eigenvector of $A^\dagger A$ with eigenvalue
$\zeta > 0$, i.e.,
\begin{equation}
  A^\dagger A \ket{\phi} = \zeta \ket{\phi}\,,
  \label{eq:transfer-eigen}
\end{equation}
Set $H = \ket{1}\bra{0}_a \otimes A + \ket{0}\bra{1}_a \otimes A^\dagger$ and $t^* = \pi/(2\sqrt{\zeta})$, then
\begin{equation}
(iX_a \otimes\Id) e^{-it^*H}  \ket{0}_a\otimes \ket{\phi} = \ket{0}_a \otimes \frac{A\ket{\phi}}{\sqrt{\zeta}}\,.
\end{equation}
\end{lemma}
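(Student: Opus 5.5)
The plan is to reduce the evolution to a two-level Rabi rotation inside an invariant subspace. Define the normalized vectors
\begin{equation*}
  \ket{e_0}=\ket{0}_a\otimes\ket{\phi}\,,\qquad
  \ket{e_1}=\ket{1}_a\otimes\frac{A\ket{\phi}}{\sqrt{\zeta}}\,.
\end{equation*}
The second is well defined and normalized because $\|A\ket\phi\|^2=\bra\phi A^\dagger A\ket\phi=\zeta>0$. The two vectors are orthogonal, since their ancilla components are orthogonal.

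First I compute the action of $H$ on these two vectors.
\begin{itemize}
\item Only the term $\ket1\bra0_a\otimes A$ acts nontrivially on $\ket{e_0}$, so $H\ket{e_0}=\ket1_a\otimes A\ket\phi=\sqrt\zeta\,\ket{e_1}$.
\item Only the term $\ket0\bra1_a\otimes A^\dagger$ acts nontrivially on $\ket{e_1}$, so $H\ket{e_1}=\ket0_a\otimes A^\dagger A\ket\phi/\sqrt\zeta=\sqrt\zeta\,\ket{e_0}$. This step uses the eigenvector hypothesis \eqref{eq:transfer-eigen}.
\end{itemize}
Hence $\mathrm{span}\{\ket{e_0},\ket{e_1}\}$ is invariant under $H$, and there $H$ acts as $\sqrt\zeta\,\sigma_x$ in this basis.

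Next I exponentiate. Because $(\sqrt\zeta\sigma_x)^2=\zeta\,\Id$ on the subspace,
\begin{equation*}
  e^{-itH}\ket{e_0}=\cos(\sqrt\zeta t)\ket{e_0}-i\sin(\sqrt\zeta t)\ket{e_1}\,.
\end{equation*}
At $t^*=\pi/(2\sqrt\zeta)$ this gives $e^{-it^*H}\ket{e_0}=-i\ket{e_1}$.

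Finally I apply $iX_a\otimes\Id$, which sends $\ket1_a$ to $i\ket0_a$. The result is $(-i)(i)\ket0_a\otimes A\ket\phi/\sqrt\zeta=\ket0_a\otimes A\ket\phi/\sqrt\zeta$, as claimed.

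There is no real obstacle here. The only points needing care are the phase bookkeeping (the $-i$ from the evolution cancels the $i$ from $iX$) and checking that $H\ket{e_1}$ stays in the span, which is exactly where the eigenvalue condition on $A^\dagger A$ enters. No condition on $AA^\dagger$ is needed.
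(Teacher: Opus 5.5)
Your proposal is correct and follows essentially the same route as the paper's proof. Both identify the invariant two-dimensional subspace spanned by $\ket{0}_a\ket{\phi}$ and $\ket{1}_a A\ket{\phi}/\sqrt{\zeta}$, on which $H$ acts as $\sqrt{\zeta}X$, then read off the half-period rotation and cancel the $-i$ phase with $iX_a$.
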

\begin{proof}
Write $\ket{v} := A\ket{\phi}/\sqrt{\zeta}$ which is a normalized state.
On the two-dimensional subspace spanned by $\ket{0}_a \ket{\phi}$ and
$\ket{1}_a \ket{v}$, the Hamiltonian acts as $\sqrt{\zeta} X$: it sends
$\ket{0}_a \ket{\phi} \mapsto \sqrt\zeta \ket{1}_a \ket{v}$ due to
\eqref{eq:transfer-eigen}, and $\ket{1}_a \ket{v} \mapsto \sqrt\zeta \ket{0}_a \ket{\phi}$
because $A^\dagger\ket{v} = A^\dagger A\ket{\phi}/\sqrt\zeta = \sqrt\zeta\ket{\phi}$.
The evolution stays in this subspace, and thus
\begin{equation}
  e^{-itH} \ket{0}_a\ket{\phi}
  = \cos(t\sqrt\zeta) \ket{0}_a \ket{\phi}
  - i\sin(t\sqrt\zeta)\ket{1}_a \ket{v} .
  \label{eq:transfer-evolution}
\end{equation}
At the time
\begin{equation}
  t^\ast = \frac{\pi}{2\sqrt\zeta}
  \label{eq:transfer-time}
\end{equation}
the rotation is at its half-period, and the transfer is exact,
$e^{-it^\ast H}\ket{0}_a  \ket{\phi}= -i \ket{1}_a \ket{v}$. Applying $iX_a$ to
the ancilla removes the phase and returns it to $\ket{0}_a$, so the transfer
unitary
\begin{equation} 
  W := (iX_a \otimes\Id)\,e^{-it^\ast H}
  \label{eq:transfer-W}
\end{equation}
acts as $ W  \ket{0}_a\ket{\phi}  = \ket{0}_a\ket{v} $, as claimed.
\end{proof}

Observe that under the action of the transfer unitary \eqref{eq:transfer-W}, the ancilla begins and ends in $\ket{0}_a$, and the system is left in the normalized image of $\ket{\phi}$ under $A$. The only
input this mechanism needs is the eigenvalue $\zeta$, which sets the
evolution time \eqref{eq:transfer-time}.

\subsection{The ladder property}
\label{sec:ladder}

The transfer mechanism needs the input state to be an eigenvector of $A^\dagger A$.
With $A = J_s$, this requirement is supplied by the following structure of the current
operators.

Fix a cycle length $s$, and suppose the currents of every length greater than
$s$ have already been deposited, producing a normalized state $\ket{\Phi^{>s}} \propto \prod_{i:\,\nu_i > s} J_{\nu_i} \seed$.
Let us denote the states obtained by depositing $j$ further copies of $J_s$ as
\begin{equation}
  \ket{\Phi_{s,j}} := \frac{J_s^{\,j} \ket{\Phi^{>s}}}{\sqrt{s^{\,j}\, j!}}\,,
  \qquad j = 0, 1, \ldots, m_s\,.
  \label{eq:ladder-states}
\end{equation}
In Appendix~\ref{app:eigenvalue-proof}, we prove that the above state is normalized, as well as the following proposition:

\begin{proposition}[Ladder property]
\label{prop:ladder}
For every $1\leq j\leq m_s$, the state $\ket{\Phi_{s,j-1}}$ is an eigenvector of
$J_s^\dagger J_s$ with eigenvalue $sj$,
\begin{equation}
  J_s^\dagger J_s \ket{\Phi_{s,j-1}} = sj\,\ket{\Phi_{s,j-1}}\,.
  \label{eq:ladder-eigen}
\end{equation}
Equivalently, $J_s$ and $J_s^\dagger$ act as raising and lowering operators on
these states,
\begin{equation}
  J_s\ket{\Phi_{s,j-1}} = \sqrt{sj}\ket{\Phi_{s,j}}\,,
  \qquad
  J_s^\dagger\ket{\Phi_{s,j}} = \sqrt{sj}\ket{\Phi_{s,j-1}}\,.
  \label{eq:ladder-raiselower}
\end{equation}
\end{proposition}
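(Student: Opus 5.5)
The plan is to prove the ladder property by a bosonic-style commutator argument. The idea is to show that, on the states the algorithm visits, $[J_s^\dagger,J_s]$ acts as the scalar $s$ and $J_s^\dagger$ annihilates $\ket{\Phi^{>s}}$. Then $J_s,J_s^\dagger$ behave like $\sqrt{s}\,b^\dagger,\sqrt{s}\,b$ on the tower $J_s^{j}\ket{\Phi^{>s}}$.

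\textbf{Step 1: the commutator.} Using $[a_p^\dagger a_q,a_r^\dagger a_t]=\delta_{qr}a_p^\dagger a_t-\delta_{tp}a_r^\dagger a_q$, I would compute
\begin{equation*}
[J_s^\dagger,J_s]=\sum_{i=1}^{2n-s}(\hat n_i-\hat n_{i+s})=\sum_{i=1}^{s}\hat n_i-\sum_{i=2n-s+1}^{2n}\hat n_i ,
\end{equation*}
since the bulk terms telescope and only the two edge blocks survive. In the encoding \eqref{eq:maya}, a basis state $\ket\lambda$ with $|\lambda|\le n-s$ has $\ell(\lambda)\le n-s$ and $\lambda_1\le n-s$. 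Hence modes $1,\dots,s$ are occupied and modes $2n-s+1,\dots,2n$ are empty, so $[J_s^\dagger,J_s]=s$ on the span of such states.

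Every current preserves the property of being a combination of partition states; this is the Murnaghan--Nakayama content of \eqref{eq:identity}, since each hop adds a border strip. The state $J_s^{k}\ket{\Phi^{>s}}$ has $|\lambda|=\sum_{\nu_i>s}\nu_i+ks$. For $k\le m_s-1$ this is at most $n-s$, so the commutator identity applies to every state on which it is used below.

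\textbf{Step 2: $J_s^\dagger\ket{\Phi^{>s}}=0$.} I would compute $[J_s^\dagger,J_t]$ for $t>s$ in the same way. The result is $\sum_i a_i^\dagger a_{i+s-t}-\sum_i a_{i+t}^\dagger a_{i+s}$ over the appropriate ranges. These are hops to the right by $t-s$ whose index ranges differ only near the two ends of the chain, so the bulk cancels and only edge terms remain. Each edge term either annihilates a particle in the last $s$ (or $t$) modes or creates one in the first modes. Both vanish on partition states of size at most $n-s$, by the same occupation pattern as in Step 1.

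This lets me commute $J_s^\dagger$ through $\prod_{\nu_i>s}J_{\nu_i}$, using \eqref{eq:Jcommute} to order the product freely. The sizes stay at most $n-s$ throughout, since at least $m_s\ge1$ copies of $s$ remain to be added. Finally, $J_s^\dagger\seed=0$ because in $\ket{1^n0^n}$ no particle can hop left by $s$ onto an empty site. The edge-term bookkeeping in this step, checking that every boundary term really vanishes on every intermediate state, is where I expect the main obstacle to lie.

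\textbf{Step 3: induction.} From Steps 1 and 2, induction on $j$ gives $J_s^\dagger J_s^{j}\ket{\Phi^{>s}}=sj\,J_s^{j-1}\ket{\Phi^{>s}}$ for $1\le j\le m_s$, via $J_s^\dagger J_s^{j}=J_sJ_s^\dagger J_s^{j-1}+[J_s^\dagger,J_s]J_s^{j-1}$. Taking inner products yields $\|J_s^{j}\Phi^{>s}\|^2=sj\,\|J_s^{j-1}\Phi^{>s}\|^2$, so the states \eqref{eq:ladder-states} are normalized. Dividing by the normalization gives the lowering relation in \eqref{eq:ladder-raiselower}; the raising relation is the definition. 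Composing the two gives \eqref{eq:ladder-eigen}.
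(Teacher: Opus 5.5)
Your overall route is the paper's. The appendix also shows $[J_s^\dagger,J_s]=s$ on states whose first $s$ modes are filled and last $s$ modes are empty, proves $J_s^\dagger B\seed=0$ by commuting $J_s^\dagger$ through the product $B$ of longer currents, and then inducts. Your partition size $|\lambda|$ is exactly the eigenvalue of the paper's displacement operator $\widehat{\mathsf D}$, so Steps 1 and 3 are correct and match the paper.

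The gap is the vanishing criterion in Step 2. Up to sign, the surviving edge terms of $[J_s^\dagger,J_t]$ are $a^\dagger_{j+t-s}a_j$ for $2n-t+1\le j\le 2n-t+s$, and $a^\dagger_{i+t-s}a_i$ for $1\le i\le s$. The first group annihilates in modes among the last $t$, not the last $s$. The second group creates in modes $t-s+1,\dots,t$. Both therefore vanish on states whose first $t$ modes are occupied and last $t$ modes are empty, i.e.\ on partition states of size at most $n-t$. Size at most $n-s$ is not enough. For example, at $n=3$, $[J_1^\dagger,J_3]=a_3^\dagger a_1-a_6^\dagger a_4$ does not annihilate $\ket{(1)}=\ket{110100}$, even though $|(1)|=1\le n-s=2$. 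So your claim that both edge groups vanish on partition states of size at most $n-s$ is false, and ``the sizes stay at most $n-s$ throughout'' does not justify commuting $J_s^\dagger$ past $J_t$.

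The repair is the bound the paper uses. Write $B=J_{u_p}\cdots J_{u_1}$. The commutator with $J_{u_j}$ only ever acts on the partial product $B_{<j}\seed$. Since both $J_{u_j}$ and at least one $J_s$ are still to be deposited, its size obeys $\sum_{i<j}u_i+u_j+s\le n$, so it is at most $n-u_j$. Hence $B_{<j}\seed$ is a $u_j$-state by Lemma~\ref{lem:frozen}, and the edge terms vanish on it by Lemma~\ref{lem:comm-frozen}(ii). With this correction your argument coincides with the paper's.
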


The eigenvalue $sj$ is exactly the input that the transfer mechanism needs. Write
$t_{s,j}$ for the evolution time of the $j$-th transfer at cycle length $s$,
that is, the deposit of the $j$-th copy of $J_s$. At this transfer, the singular
value is $\sqrt{sj}$, so by \eqref{eq:transfer-time}
\begin{equation}
  t_{s,j} \;=\; \frac{\pi}{2\sqrt{sj}}\,.
  \label{eq:ladder-time}
\end{equation}

To prove Proposition~\ref{prop:ladder}, a key quantity is the following commutator of the current operator:
\begin{equation}
  [J_s^\dagger, J_s]
  = \sum_{j=1}^{s}\hat n_j  - \sum_{j=2n-s+1}^{2n}\hat n_j\,.
  \label{eq:comm-boundary}
\end{equation}
This equals $s$ on any state whose lowest $s$ modes are occupied and whose
highest $s$ modes are empty. Every state $\ket{\Phi_{s,j-1}}$ that the algorithm
encounters has this form. Immediately before a length-$s$ transfer, the total
length of hops from the already-deposited currents is at most $n-s$, since the present transfer still
contributes $s$ and all cycle lengths must sum to $n$. Appendix~\ref{app:eigenvalue-proof}
shows that this bound keeps the lowest $s$ modes occupied and the highest $s$
modes empty. Given this, $J_s^\dagger$ lowers each $\ket{\Phi_{s,j}}$ and the eigenvalue
\eqref{eq:ladder-eigen} follows by induction on $j$.

\subsection{The Hamiltonian-based CSP algorithm}
\label{sec:csp-algorithm}

We are now ready to prove Theorem~\ref{thm:intro-csp}, since
the transfer mechanism and the ladder property combine directly into an algorithm for character state preparation.

As visualized in Figure~\ref{fig:intro-circuits}(a), our Hamiltonian-based CSP algorithm uses $2n$ system qubits and a single ancilla, initialized in $\ket{0}_a\otimes\seed$.
For any given conjugacy class
$\nu = (n^{m_n}\cdots 2^{m_2}1^{m_1})$, we process the cycle lengths in
decreasing order $s = n, n-1, \ldots,2, 1$. For each cycle length $s$ with multiplicity $m_s\ge 1$, we apply
the $m_s$ transfer unitaries
\begin{equation}
  W_{s,1},\; W_{s,2},\; \ldots,\; W_{s,m_s}\,,
 \quad \text{where}
  \qquad
  W_{s,j} := (iX_a \otimes\Id)\, e^{-it_{s,j} H_s}\,, 
  \label{eq:csp-transfers}
\end{equation}
with $H_s$ from \eqref{eq:Hs} and $t_{s,j}=\pi/(2\sqrt{sj})$ as in \eqref{eq:ladder-time}. 
Theorem~\ref{thm:intro-csp} is the claim that this procedure produces the desired state, as
\begin{align} 
\prod_{s=n}^1 (W_{s,m_s} \cdots W_{s,2} W_{s,1}) \ket{0}_a\seed
&= (W_{1,m_1} \cdots  W_{1,1}) \cdots (W_{n,m_n} \cdots  W_{n,1})\ket{0}_a\seed  \nonumber \\
&= \ket{0}_a\ket{\chi_\nu}\,,
\label{eq:transfer-result}
\end{align}
where the products with $m_s=0$ are treated as the identity.

To see Eq.~\eqref{eq:transfer-result}, we note that before $W_{s,j}$ is applied,
the system is in $\ket{\Phi_{s,j-1}}$, which by Proposition~\ref{prop:ladder}
is an eigenvector of $J_s^\dagger J_s$ with eigenvalue $sj$. 
By the transfer mechanism in Lemma~\ref{lem:transfer}, taken with $A = J_s$ and $\zeta = sj$, we have $W_{s,j} \big (\ket{0}_a\ket{\Phi_{s,j-1}} \big) = \ket{0}_a\ket{\Phi_{s,j}}$. Applying the $m_s$ transfers in turn advances the
system from $\ket{\Phi^{>s}}$ to $\ket{\Phi_{s,m_s}}$, depositing $m_s$ copies
of $J_s$, and moving to the next value of $s$ deposits the currents of the next
length. After all $\ell=\sum_sm_s$ transfers the system holds
$J_\nu\seed/\|J_\nu\seed\|=\ket{\chi_\nu}$, with the ancilla returned to
$\ket{0}_a$.

The normalization is supplied by the algorithm itself. Each transfer $W_{s,j}$
divides by the singular value $\sqrt{sj}$, so the full sequence divides
$J_\nu\seed$ by
\begin{equation}
  \prod_{s=1}^{n}\prod_{j=1}^{m_s}\sqrt{sj}
  =
  \prod_{s=1}^{n}\sqrt{s^{m_s}\,m_s!}
  =
  \sqrt{z_\nu}
  =
  \|J_\nu\seed\|\,,
  \label{eq:norm-telescope}
\end{equation}
which is the normalization in \eqref{eq:targetfromseed}.

\section{Hamiltonian simulation for $H_s$}
\label{sec:ham-sim}

The CSP algorithm requires implementing the time evolution $e^{-it_{s,j}H_s}$
for each transfer, with $H_s$ as in \eqref{eq:Hs}. This section gives the
two implementations in Theorems~\ref{thm:intro-trotter} and \ref{thm:intro-qsvt} of Section~\ref{sec:intro-main-results}. We
first describe a Trotter decomposition suitable for reconfigurable-qubit hardware such as neutral atom arrays, where reordering the fermions makes every term of $H_s$ nearest-neighbor (Section~\ref{sec:chain-decomp}), 
bound its error in operator norm (Section~\ref{sec:trotter-error}) and on the states our algorithm visits (Section~\ref{sec:state-dependent}), and benchmark with numerical simulations (Section~\ref{sec:numerics}). 
Then, we give a QSVT-based implementation
(Section~\ref{sec:qsvt}), which gives a complete bound with
logarithmic error dependence.

\subsection{Implementation via Trotter decomposition}

\subsubsection{Nearest-neighbor form of $H_s$ using reconfigurable qubits}
\label{sec:chain-decomp}

\paragraph{Chain decomposition.}
Each term $a_{i+s}^\dagger a_i$ in $J_s$ connects sites that whose indices are the same modulo $s$. The operator, therefore, decomposes into $s$ independent
chains (Figure~\ref{fig:chain-decomp}),
\begin{align}\label{eq:chain-decomp}
    J_s = \sum_{\mu=1}^{s} J_s^{(\mu)}\,, \qquad
    J_s^{(\mu)} := \sum_{r=0}^{L_\mu - 2} a_{\mu+(r+1)s}^\dagger\, a_{\mu+rs}\,,
\end{align}
where $L_\mu = \lfloor(2n-\mu)/s\rfloor + 1 \approx 2n/s$ is the length of chain
$\mu$. The chains act on disjoint mode sets, so $[J_s^{(\mu)},
(J_s^{(\mu')})^\dagger] = 0$ for $\mu \ne \mu'$.

In the physical mode ordering, each bond $a_{\mu+(r+1)s}^\dagger a_{\mu+rs}$
carries a Jordan--Wigner string of length $s-1$ over the intervening modes,
which belong to other chains. The fermionic permutation $P_s$ introduced later in \eqref{eq:Ps-perm} reorders the modes so that each chain is
contiguous, and in that basis the string cancels and $J_s^{(\mu)}$ becomes a
nearest-neighbor hopping operator. Each term of $H_s$ then acts on two
adjacent site qubits and the ancilla, see \eqref{eq:bond-term-main} below.

\begin{figure}[t]
\centering
\begin{tikzpicture}[
  site/.style={circle, draw=black!45, fill=black!8, minimum size=7mm, inner sep=0pt},
  hop/.style={-{Stealth[length=2mm]}, black!65, line width=0.8pt},
  lbl/.style={font=\small},
  lbl/.style={font=\small}
]

\foreach \i in {1,...,10} {
  \node[site] (t\i) at (\i*1.05, 3) {};
  \node[lbl] at (\i*1.05, 2.42) {$\i$};
}
\foreach \i in {1,...,7} {
  \pgfmathtruncatemacro{\j}{\i+3}
  \draw[hop] (t\i) to[bend left=42] (t\j);
}
\node at (6.1, 4.15) {$2n=10$, hop length $s=3$};

\draw[-{Stealth[length=3mm]}, black!70, line width=1pt] (6.1,2.05) -- (6.1,1.35);

\node[lbl, anchor=east] at (3.0,0.55) {$\mu=1$};
\foreach \k/\lab in {1/1, 2/4, 3/7, 4/10} {
  \node[site] (a\k) at (3.6 + \k*0.95, 0.55) {};
  \node[lbl] at (3.6 + \k*0.95, -0.02) {$\lab$};
}
\foreach \k in {1,2,3} {
  \pgfmathtruncatemacro{\kk}{\k+1}
  \draw[hop] (a\k) to[bend left=38] (a\kk);
}

\node[lbl, anchor=east] at (3.0,-1.05) {$\mu=2$};
\foreach \k/\lab in {1/2, 2/5, 3/8} {
  \node[site] (b\k) at (3.6 + \k*0.95, -1.05) {};
  \node[lbl] at (3.6 + \k*0.95, -1.62) {$\lab$};
}
\foreach \k in {1,2} {
  \pgfmathtruncatemacro{\kk}{\k+1}
  \draw[hop] (b\k) to[bend left=38] (b\kk);
}

\node[lbl, anchor=east] at (3.0,-2.65) {$\mu=3$};
\foreach \k/\lab in {1/3, 2/6, 3/9} {
  \node[site] (c\k) at (3.6 + \k*0.95, -2.65) {};
  \node[lbl] at (3.6 + \k*0.95, -3.22) {$\lab$};
}
\foreach \k in {1,2} {
  \pgfmathtruncatemacro{\kk}{\k+1}
  \draw[hop] (c\k) to[bend left=38] (c\kk);
}

\end{tikzpicture}
\caption{Chain decomposition \eqref{eq:chain-decomp} of the current operator
$J_s$, shown for $2n=10$ and $s=3$. On the full chain $J_s$ hops each mode to
the one $s$ sites to its right. Grouping modes by residue class modulo $s$
splits $J_s$ into $s$ independent chains $J_s^{(\mu)}$, here the residue classes
$\{1,4,7,10\}$, $\{2,5,8\}$ and $\{3,6,9\}$, each a nearest-neighbor hopping
chain of length $L_\mu \approx 2n/s$. The figure shows the decomposition at the
level of mode couplings. In the physical ordering, each bond still carries a
Jordan--Wigner string over the intervening modes, which the permutation $P_s$ in \eqref{eq:Ps-perm} removes.}
\label{fig:chain-decomp}
\end{figure}

\paragraph{Fermionic permutation.}
To make the chains of \eqref{eq:chain-decomp} local, we reorder the modes by
residue class modulo $s$. Let $P_s$ be the fermionic permutation that groups
them,
\begin{align}
\label{eq:Ps-perm}
    P_s: \quad (1,\, 1{+}s,\, 1{+}2s,\, \ldots;\;\;
               2,\, 2{+}s,\, \ldots;\;\; \ldots;\;\;
               s,\, 2s,\, 3s,\, \ldots)\,,
\end{align}
so that after $P_s$ each chain occupies a contiguous block of modes. Writing
$H_s' = (\Id_a \otimes P_s)\, H_s\, (\Id_a \otimes P_s^\dagger)$,
\begin{align}\label{eq:permute-evolve}
    e^{-itH_s} = (\Id_a \otimes P_s^\dagger)\, e^{-itH_s'}\,
    (\Id_a \otimes P_s)\,.
\end{align}
Under the Jordan-Wigner encoding, $P_s$ is not a product of local gates, since
reordering the modes rearranges the Jordan-Wigner strings, and on a fixed
one-dimensional layout it costs $\Theta(n^2)$ fermionic swaps. We instead use
the reconfigurable-qubit construction of Maskara
\emph{et al.}~\cite{Maskara_2025} described in Section~\ref{sec:intro-prior}.
For a general fermionic permutation of $2n$ modes, their construction uses
$O(n\log n)$ Clifford gates in depth $O(\log n)$. The permutation $P_s$ is
cheaper because it has more structure. Lemma~\ref{lem:reflection} below shows
that grouping the $2n$ modes by residue modulo $s$ is the \emph{2D
reflection} of~\cite{Maskara_2025} on a $(2n/s)\times s$ grid, or such a
reflection, with at most $s$ modes deleted when $s\nmid2n$. Their Corollary~2
and Theorem~3 compile this reflection into $O(n)$ CNOT gates in depth $O(1)$,
using $O(n)$ ancillas, mid-circuit measurement, and classical feedforward.

\begin{lemma}\label{lem:reflection}
Let $s \le 2n$ and write $2n = Ls + t$ with $0 \le t < s$. If $t = 0$, the
permutation $P_s$ is the 2D reflection $m_0(r,c) = rL_c + c \mapsto m_1(r,c) =
cL_r + r$ on the $L_r\times L_c$ grid with $(L_r, L_c) = (L,\, s)$. If $t > 0$,
$P_s$ is this reflection on the $(L{+}1)\times s$ grid with $s - t$ modes
deleted.
\end{lemma}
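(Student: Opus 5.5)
We would prove this by direct index bookkeeping: both $P_s$ and the 2D reflection are bijections on mode labels, and it suffices to check that they agree on every label. We use $0$-based labels $x=i-1\in\{0,\dots,2n-1\}$. We read the reflection of Maskara \emph{et al.} as reindexing a grid from row-major order $m_0(r,c)=rL_c+c$ to column-major order $m_1(r,c)=cL_r+r$, where $0\le r<L_r$ and $0\le c<L_c$.

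\textbf{Case $t=0$.} Here $2n=Ls$, and every label decomposes uniquely as $x=rs+c$ with $0\le r<L$ and $0\le c<s$. In this decomposition $c=x \bmod s$ is the residue class (the chain index $\mu=c+1$), and $r$ is the position along the chain. This is exactly $m_0(r,c)$ with $(L_r,L_c)=(L,s)$.

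Reading \eqref{eq:Ps-perm}, $P_s$ lists the modes of residue $0$ first, then residue $1$, and so on. Within each residue class the modes appear in increasing order, and each class has exactly $L$ members. So $P_s$ sends the label $rs+c$ to the new position $cL+r=m_1(r,c)$. This is the claimed identity $P_s=m_0\mapsto m_1$.

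We should also confirm that the fermionic version matches, not only the index map. Both $P_s$ and the reflection of \cite{Maskara_2025} are defined as the fermionic mode permutation $a_x\mapsto a_{\pi(x)}$ induced by the same index map, so agreement of the index maps is enough. If their convention is the inverse map, we state the lemma with $P_s^\dagger$, which has the same cost.

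\textbf{Case $t>0$.} We embed the modes into an $(L+1)\times s$ grid, which has $(L+1)s=2n+(s-t)$ cells. The cells with $rs+c\ge 2n$ are precisely those with $r=L$ and $c\ge t$, giving $s-t$ cells. We delete these. The remaining cells are the $2n$ genuine labels, and residue classes $c<t$ have $L+1$ members while classes $c\ge t$ have $L$ members, matching the chain lengths $L_\mu$ given after \eqref{eq:chain-decomp}.

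Under the full-grid reflection, cell $(r,c)$ goes to $c(L+1)+r$. After deleting the phantom cells, the surviving cells are listed column by column in increasing $r$ within each column, and this order is exactly the order $P_s$ uses. Hence $P_s$ equals the reflection followed by compression that removes the deleted positions. Because deletion only removes labels and never reorders the survivors, the relative order agrees with the full-grid reflection, which is the meaning of "reflection with modes deleted".

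\textbf{Main obstacle.} The mathematics is trivial; the difficulty is matching conventions with \cite{Maskara_2025}. These include $0$- versus $1$-based indexing, which grid axis is called rows, and whether deleted modes are allowed to sit at arbitrary positions or must be at the end. I would fix these by checking against the $2n=10$, $s=3$ example in Figure~\ref{fig:chain-decomp}. There $L=3$ and $t=1$, on a $4\times3$ grid with $2$ deleted cells. The resulting order is $1,4,7,10;\,2,5,8;\,3,6,9$, which agrees with \eqref{eq:Ps-perm}.
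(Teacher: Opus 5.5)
Your proposal is correct and follows essentially the same route as the paper: you use $0$-based labels $x=rs+c$, identify $P_s(x)=cL+r=m_1(r,c)$ when $t=0$, and when $t>0$ embed the modes in the $(L+1)\times s$ grid with the $s-t$ cells $r=L$, $c\ge t$ deleted, one from each residue class $\rho\ge t$. Your remarks on fermionic and indexing conventions and the $2n=10$, $s=3$ check go slightly beyond the paper's proof but do not change the argument.
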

\begin{proof}
Index the modes $0,\ldots,2n-1$ and write each as $i = \kappa s + \rho$ with
$\rho = i \bmod s$ and $\kappa = \lfloor i/s\rfloor$. The permutation $P_s$
orders modes by residue $\rho$, and within a residue class by $\kappa$.
Suppose first $t = 0$, so $L = 2n/s$ and each residue class has exactly $L$
modes. The modes with residue less than $\rho$ fill the first $\rho L$ output
positions, and mode $\kappa s + \rho$ is the $\kappa$-th within its class, so
\begin{align}
    P_s(i) = \rho L + \kappa\,.
    \label{eq:Ps-coords}
\end{align}
Setting $r = \kappa$, $c = \rho$, $L_r = L$, $L_c = s$, we have $m_0(r,c) =
\kappa s + \rho = i$, and the reflection sends this to $m_1(r,c) = \rho L +
\kappa$, which matches \eqref{eq:Ps-coords}. Hence, $P_s$ is the stated
reflection.
If $t > 0$, residue class $\rho$ has $\lfloor (2n-1-\rho)/s\rfloor + 1$ modes,
which equals $L+1$ for $\rho < t$ and $L$ for $\rho \ge t$. Thus, $P_s$ is the
reflection on the full $(L{+}1)\times s$ grid with the $s - t$ absent modes,
one from each residue class $\rho \ge t$, deleted.
\end{proof}

For a uniform cycle type $\nu=(s^m)$ every transfer uses the same
permutation $P_s$, which commutes with the ancilla correction $iX_a$, so the
intermediate permutations cancel and only one permutation at the start and
one at the end remain,
\begin{align}\label{eq:U-equal-cycle}
    W_{s,m}\cdots W_{s,1}
    =
    (\Id_a \otimes P_s^\dagger)\,
    \big[(iX_a \otimes\Id)\,e^{-it_{s,m}H_s'}\big]\cdots
    \big[(iX_a \otimes\Id)\,e^{-it_{s,1}H_s'}\big]\,
    (\Id_a\otimes P_s)\,.
\end{align}
For a general cycle type, one permutation $P_{s'}P_s^\dagger$ is needed at
each change of cycle length, so there are at most $\ell(\nu)+1$ permutations in total.

\paragraph{Bond terms.}
In the permuted basis, $H_s'$ is a sum of nearest-neighbor bond terms, each
coupling two adjacent site qubits $(q, q+1)$ to the ancilla $a$,
\begin{align}\label{eq:bond-term-main}
    h_{q,a} = \sigma_a^+ \otimes \sigma_{q+1}^+ \sigma_q^- 
            + \sigma_a^- \otimes \sigma_{q+1}^- \sigma_q^+ \,,
\end{align}
with $\sigma^+ = |1\rangle\langle 0|$ and $\sigma^- = |0\rangle\langle 1|$. In
the ordering (ancilla $a$, site $q$, site $q{+}1$), each $h_{q,a}$ acts as $X$
on the two-dimensional subspace $\mathrm{span}\{|010\rangle, |101\rangle\}$ and
as zero on its complement. The ancilla is shared by every bond. On a one-dimensional layout, it is moved
next to the bond being rotated at a cost of $O(1)$ swaps per bond, so that
each bond rotation acts on three adjacent qubits, as drawn in
Figure~\ref{fig:intro-circuits}(b).

The bond rotation $e^{-i\theta h_{q,a}}$ is implemented by conjugating a single
controlled rotation. Define
\begin{align}\label{eq:bond-compress}
    T_{q,a} := \mathrm{CNOT}_{q\to a}\,\mathrm{CNOT}_{q\to(q+1)}\, X_q ,
\end{align}
which maps $|010\rangle \mapsto |000\rangle$ and $|101\rangle \mapsto
|010\rangle$, collapsing the active subspace onto
$\mathrm{span}\{|000\rangle,|010\rangle\}$. Let $V(\theta)$ rotate on this pair,
\begin{align}\label{eq:bond-rotation}
    V(\theta) :\quad
    \begin{aligned}
        |000\rangle &\mapsto \cos\theta\,|000\rangle - i\sin\theta\,|010\rangle\,,\\
        |010\rangle &\mapsto \cos\theta\,|010\rangle - i\sin\theta\,|000\rangle\,,
    \end{aligned}
\end{align}
while acting as identity on the rest of the states.
Then
\begin{align}\label{eq:bond-decomp}
    e^{-i\theta h_{q,a}} = T_{q,a}^\dagger\, V(\theta)\, T_{q,a}\,,
\end{align}
as verified by tracking $|010\rangle$ and $|101\rangle$ through the
conjugation (Figure~\ref{fig:bond-circuit}). Each bond rotation is a constant
number of one- and two-qubit gates.

\begin{figure}[t]
\centering
\begin{tikzpicture}[
  wire/.style={draw, black!55, line width=0.7pt},
  gate/.style={draw, fill=white, minimum size=6mm, inner sep=2pt, line width=0.7pt},
  ctrl/.style={circle, fill=black, inner sep=1.5pt},
  targ/.style={circle, draw, line width=0.7pt, inner sep=2.2pt},
  lbl/.style={font=\small}
]

\def\ya{2.0}
\def\yq{1.0}
\def\yqq{0.0}
\def\xL{0.4}
\def\xR{11.2}

\foreach \yy in {\yq,\yqq,\ya} { \draw[wire] (\xL,\yy) -- (\xR,\yy); }
\node[lbl, anchor=east] at (\xL-0.15,\ya)  {$a$};
\node[lbl, anchor=east] at (\xL-0.15,\yq)  {$q$};
\node[lbl, anchor=east] at (\xL-0.15,\yqq) {$q+1$};

\def\xa{1.3}
\def\xb{2.6}
\def\xc{3.9}
\def\xd{5.6}
\def\xe{7.6}
\def\xf{8.9}
\def\xg{10.2}

\node[gate] at (\xa,\yq) {$X$};

\node[ctrl] at (\xb,\yq) {};
\node[targ] (t1) at (\xb,\yqq) {};
\draw[wire] (\xb,\yq) -- (\xb,\yqq);
\draw[wire] (t1.west) -- (t1.east); \draw[wire] (t1.south) -- (t1.north);

\node[ctrl] at (\xc,\yq) {};
\node[targ] (t2) at (\xc,\ya) {};
\draw[wire] (\xc,\yq) -- (\xc,\ya);
\draw[wire] (t2.west) -- (t2.east); \draw[wire] (t2.south) -- (t2.north);

\node[gate, minimum width=10mm, minimum height=28mm] at (\xd,\yq) {$V(\theta)$};

\node[ctrl] at (\xe,\yq) {};
\node[targ] (t3) at (\xe,\ya) {};
\draw[wire] (\xe,\yq) -- (\xe,\ya);
\draw[wire] (t3.west) -- (t3.east); \draw[wire] (t3.south) -- (t3.north);

\node[ctrl] at (\xf,\yq) {};
\node[targ] (t4) at (\xf,\yqq) {};
\draw[wire] (\xf,\yq) -- (\xf,\yqq);
\draw[wire] (t4.west) -- (t4.east); \draw[wire] (t4.south) -- (t4.north);

\node[gate] at (\xg,\yq) {$X$};

\end{tikzpicture}
\caption{Circuit for a single bond rotation
$e^{-i\theta h_{q,a}} = T_{q,a}^\dagger\, V(\theta)\, T_{q,a}$ on the ancilla $a$ and two site qubits $q, q{+}1$.
The compression $T_{q,a}$ of \eqref{eq:bond-compress} maps the active subspace
onto $\mathrm{span}\{|000\rangle,|010\rangle\}$, where $V(\theta)$ performs the
rotation \eqref{eq:bond-rotation}, after which $T_{q,a}^\dagger$ uncomputes.}
\label{fig:bond-circuit}
\end{figure}

\paragraph{Trotter decomposition.}
All bond terms share the ancilla, so two bonds that do not share a site qubit do not commute (Appendix~\ref{app:commutators}). We, therefore, use the first-order
Lie-Trotter formula. Dividing the evolution time $t_{s,j}$ into $M$ steps of
size $\tau = t_{s,j}/M$,
\begin{align}\label{eq:trotter-main}
    e^{-it_{s,j} H_s'} \approx
    \left(\prod_{(q,q+1)} e^{-i\tau\, h_{q,a}}\right)^{\!M}\,,
\end{align}
with the approximation error analyzed in Sections~\ref{sec:trotter-error} and~\ref{sec:state-dependent}. Each
step applies one rotation per bond, and the $s$ chains have
$\sum_\mu (L_\mu - 1) = 2n - s$ bonds in total, so a Trotter step costs $O(n)$
gates.

\subsubsection{Trotter error analysis}\label{sec:trotter-error}

We first bound the error of the first-order Trotter decomposition
\eqref{eq:trotter-main} in operator norm. For a single transfer at length $s$,
the evolution $e^{-it_{s,j}H_s'}$ is approximated by $M$ applications of the
bond product $S(\tau) = \prod_{(q,q+1)} e^{-i\tau h_{q,a}}$ with $\tau =
t_{s,j}/M$. The first-order Trotter bound~\cite{ChildsSuTranWiebeZhu2021} gives
\begin{align}\label{eq:trotter-opnorm}
    \big\| S(\tau)^M - e^{-it_{s,j}H_s'} \big\|
    \;\le\; \frac{t_{s,j}^2}{2M}
    \sum_{w < w'} \big\| [h_w, h_{w'}] \big\|
    \;=\; \frac{t_{s,j}^2}{2M}\, C(s,n)\,,
\end{align}
where $C(s,n) := \sum_{w < w'}\|[h_w, h_{w'}]\|$ is the sum of bond
commutator norms.

Two bond terms commute if and only if they lie in the same chain and share a
site qubit. Every other pair, including bonds in different chains, fails to
commute through the shared ancilla, with commutator norm $1$
(Appendix~\ref{app:commutators}). The shared ancilla, therefore, couples almost
all of the $\binom{2n-s}{2}$ bond pairs, and subtracting the $2n-2s$ same-chain
adjacent pairs that do commute gives
\begin{align}\label{eq:C-exact}
    C(s,n) = \binom{2n-s}{2} - (2n - 2s) = \Theta(n^2)\,.
\end{align}
This value depends only on $s$ and $n$, not on the rest of the cycle type.
Combining \eqref{eq:trotter-opnorm} and \eqref{eq:C-exact}, the $j$-th transfer
at length $s$ is simulated to error $\varepsilon_{s,j}$ using
\begin{align}\label{eq:M-per-transfer}
    M_{s,j} = \frac{t_{s,j}^2\, C(s,n)}{2\varepsilon_{s,j}}
            = \Theta\!\left(\frac{n^2}{s\, j\, \varepsilon_{s,j}}\right)
\end{align}
Trotter steps, each costing $O(n)$ gates.

Summing over all transfers of the CSP algorithm, we split the total error
budget $\delta$ uniformly, assigning $\varepsilon_{s,j} = \delta/\ell$ to each
of the $\ell = \sum_s m_s$ transfers. Using $t_{s,j}^2 = \pi^2/(4sj)$, the total
Trotter step count is
\begin{align}\label{eq:M-total}
    M_{\mathrm{tot}}
    = \sum_{s}\sum_{j=1}^{m_s} M_{s,j}
    = \frac{\ell}{2\delta}\sum_{s} C(s,n)\sum_{j=1}^{m_s} t_{s,j}^2
    = \frac{\pi^2\, \ell}{8\delta}
      \sum_{s} \frac{C(s,n)\, H_{m_s}}{s}\,,
\end{align}
where $H_m = \sum_{j=1}^{m} 1/j$ is the $m$-th harmonic number. Since $C(s,n) =
\Theta(n^2)$ uniformly in $s$, and $\sum_s H_{m_s}/s \le \big(\sum_s
1/s\big)\max_s H_{m_s} = O(\log^2 n)$, with $\ell \le n$,
\begin{align}\label{eq:M-total-scaling}
    M_{\mathrm{tot}} = O\!\left(\frac{n^3 \log^2 n}{\delta}\right)\,.
\end{align}
Each Trotter step costs $O(n)$ gates, so the operator-norm bound gives a total
gate count of $O(n^4 \log^2 n/\delta)$ for any cycle type.

This bound is loose for two reasons. First, $C(s,n)$ is an operator norm over
the whole Hilbert space, while the error actually incurred is the norm of the
same commutator sum acting on the particular state the algorithm passes through,
which lies in the $n$-particle sector. Second, $C(s,n)$ bounds each commutator
separately and discards the cancellation between them. On the algorithm's states,
the non-commuting terms partially interfere, so the true error falls well below
the sum of the individual norms. Section~\ref{sec:state-dependent} exploits the first point for uniform cycle
types, where the states the algorithm visits can be characterized exactly.
Section~\ref{sec:qsvt} avoids the Trotter error altogether.

\subsubsection{State-dependent Trotter error for uniform cycles}
\label{sec:state-dependent}

Consider a uniform cycle type $\nu=(s^m)$ with $n=ms$. The operator-norm
bound of Section~\ref{sec:trotter-error} maximizes the Trotter error over
every input state, while the algorithm only passes through the ladder states
$\ket{\Phi_{s,r}}$ in Eq.~\eqref{eq:ladder-states}. On these states, the
commutator sum is much smaller, and for uniform cycles it can be evaluated
exactly.

Write $\ket{\rho_r}=\ket{\Phi_{s,r}}$ for the normalized state after $r$
transfers, and define $\mathcal A$ by writing the first-order Trotter
commutator sum as $\sum_{w<w'}[h_w,h_{w'}]=Z_a\otimes\mathcal A$
(Appendix~\ref{app:state-dependent}). At the $k$-th transfer, let
\begin{equation}
  \Gamma_r
  :=
  \left\|
    \mathcal A\ket{\rho_r}
  \right\|\,,
  \qquad
  \Delta_k
  :=
  \max\{\Gamma_{k-1},\Gamma_k\}\,.
\end{equation}
The leading state-dependent error is
\begin{equation}
  \varepsilon_{k,\mathrm{lead}}^{\mathrm{SD}}
  \leq
  \frac{t_k^2}{2M_k}\Delta_k\,.
  \label{eq:state-error}
\end{equation}

Appendix~\ref{app:state-dependent} evaluates $\Gamma_r$ by separating
commutators between different chains from commutators within the same chain.
It proves
\begin{equation}
  \Gamma_r^2
  =
  \frac{r(s^2-1)}{3}
  +
  s\sum_{d=0}^{r}
  \binom{r}{d}
  \left(\frac{1}{s}\right)^d
  \left(1-\frac{1}{s}\right)^{r-d}
  F(d)\,,
  \label{eq:gamma-decomp}
\end{equation}
where $F(d)$ is a one-chain norm defined in Eq.~\eqref{eq:app-same}. The
appendix also proves
\begin{equation}
  F(d)\leq d^2.
\end{equation}
This gives
\begin{equation}
  \Gamma_r^2
  \leq
  \frac{r(s^2-1)}{3}
  +
  r\left(1-\frac{1}{s}\right)
  +
  \frac{r^2}{s}\,,
\end{equation}
and therefore
\begin{equation}
  \Gamma_r
  =
  O\!\left(
    s\sqrt r+\frac{r}{\sqrt s}
  \right)\,.
  \label{eq:gamma-scaling}
\end{equation}

Using $t_k=\pi/(2\sqrt{ks})$, Eq.~\eqref{eq:state-error} becomes
\begin{equation}
  \varepsilon_{k,\mathrm{lead}}^{\mathrm{SD}}
  =
  O\!\left[
    \frac{1}{M_k}
    \left(
      \frac{1}{\sqrt k}
      +
      \frac{1}{s^{3/2}}
    \right)
  \right]\,.
  \label{eq:M-state}
\end{equation}
The first term comes from commutators between different chains. The second
comes from commutators within the same chain.

Assigning leading error $\delta/m$ to each transfer and summing over all
$m$ transfers gives
\begin{equation}\label{eq:Mtot-SD}
  M_{\mathrm{tot}}
  =
  O\!\left[
    \frac{1}{\delta}
    \left(
      m^{3/2}
      +
      \frac{m^2}{s^{3/2}}
    \right)
  \right]
\end{equation}
Trotter steps. Each step contains $s(2m-1)=2n-s=O(n)$ bond evolutions. The
leading-order gate count is therefore
\begin{equation}
  G_{\mathrm{lead}}^{\mathrm{SD}}(s^m)
  =
  O\!\left[
    \frac{n}{\delta}
    \left(
      m^{3/2}
      +
      \frac{m^2}{s^{3/2}}
    \right)
  \right]\,.
  \label{eq:gate-state}
\end{equation}

This bound gives two regimes, with one conditional improvement:
\begin{itemize}
  \item If $m=O(s^3)$, the same-chain term does not change the scaling, and
  \begin{equation}
    G_{\mathrm{lead}}^{\mathrm{SD}}(s^m)
    =
    O\!\left(
      \frac{n m^{3/2}}{\delta}
    \right)\,.
  \end{equation}
  It is strictly lower order when $m=o(s^3)$ and $s$ grows.

  \item If $m\gg s^3$, the same-chain term cannot be dropped using the
  proved bound $F(d)\leq d^2$. The resulting bound is
  \begin{equation}
    G_{\mathrm{lead}}^{\mathrm{SD}}(s^m)
    =
    O\!\left(
      \frac{n m^2}{\delta s^{3/2}}
    \right)\,.
  \end{equation}

  \item If $F(d)=O(d)$, which we have not proved, the same-chain
  contribution is $O(r)$ rather than $O(r+r^2/s)$, and
  $G_{\mathrm{lead}}^{\mathrm{SD}}(s^m)=O(nm^{3/2}/\delta)$ for every
  uniform cycle type.
\end{itemize}

At the cycle count $m=\Theta(\sqrt n/\log n)$ of the granularity regime
\eqref{eq:intro-bghs-regime}, the cycle length is
$s=n/m=\Theta(\sqrt n\log n)$, so $m=O(s^3)$ and the first regime applies,
\begin{equation}
  G_{\mathrm{lead}}^{\mathrm{SD}}
  =
  O\!\left(\frac{n\,m^{3/2}}{\delta}\right)
  =
  O\!\left(\frac{n^{7/4}}{\delta\,\log^{3/2}n}\right)\,,
  \label{eq:gate-state-sqrt}
\end{equation}
which is the estimate claimed in Theorem~\ref{thm:intro-trotter}. 

We remark that these results control only the leading $1/M_k$ contribution to the state-dependent Trotter error. The higher-order remainder has not yet been
bounded.

For fixed $s$ and $\delta$, with
$m=n/s$, the bound on the total number of Trotter steps in \eqref{eq:Mtot-SD} (which is $G/n$) grows as $n^2$, and can be improved to $n^{1.5}$ assuming $F(d)=O(d)$.
In contrast,  numerical simulations up to $n=30$ suggest that $n^{1.2}$ Trotter steps are sufficient (see Figure~\ref{fig:family-numerics} of Section~\ref{sec:numerics}), indicating that even these state-dependent bounds might be conservative overestimates.

\subsubsection{Numerical simulations}
\label{sec:numerics}

We simulate the complete Trotterized CSP circuit exactly,
and compare the number of Trotter steps it needs with the bounds of
Section~\ref{sec:trotter-error} and \ref{sec:state-dependent}.

\paragraph{Setup.}
For a cycle type $\nu\vdash n$, we apply the transfers in the order of
Section~\ref{sec:csp-algorithm} and Trotterize every transfer with the same
number $M$ of first-order steps \eqref{eq:trotter-main}. We find the smallest
$M_\ast$ for which the final state $\ket{\psi_M}$ satisfies
\begin{equation}\label{eq:numerics-threshold}
  \min_\varphi\|\ket{0_a}\ket{\chi_\nu}-e^{i\varphi}\ket{\psi_M}\| = \sqrt{2-2\,\left|(\langle0_a|\bra{\chi_\nu})\ket{\psi_{M}}\right|}\le10^{-2}\,,
\end{equation}
and plot the total number of Trotter steps
$M_{\mathrm{tot}}=\ell(\nu)\,M_\ast.$ 

To scale up our numerical simulations to larger $n$, we take advantage of the fact that the Trotterized dynamics does not use the full $\binom{2n}{n}$-dimensional Fock space. Instead, we work in the eigenspaces of the displacement operator $\widehat{\mathsf D}$ of
Appendix~\ref{app:eigenvalue-proof}. 

Let $\mathcal F_k$ be the eigenspace
of $\widehat{\mathsf D}$ with eigenvalue $k$. Its basis states
$\ket{p_1<\cdots<p_n}$ with $\sum_r(p_r-r)=k$ are labeled by the partitions
$(p_n-n,\dots,p_1-1)$ of $k$, so $\dim\mathcal F_k=p(k)$ for $k\le n$. By
Eq.~\eqref{eq:displacement-comm}, $J_s$ maps $\mathcal F_D$ into
$\mathcal F_{D+s}$, and the same holds for each of its bond terms separately.
Hence $H_s$, and every bond rotation in the Trotter product
\eqref{eq:trotter-main}, leaves the subspace
\begin{equation}
\ket0_a \otimes \mathcal F_D\;\oplus\; \ket1_a \otimes \mathcal F_{D+s}
\end{equation}
invariant. A transfer at length $s$ that starts at displacement $D$ therefore
acts on a space of dimension $p(D)+p(D+s)$, which is at most
$2p(48)\approx3\times10^5$ in our runs.

At finite $M$, part of the amplitude remains in $\ket0_a\otimes \mathcal F_D$
at the end of the transfer, and the ancilla flip moves it to
$\ket1_a\otimes \mathcal F_D$. The remaining transfers have total length
$n-D-s$, and each one changes the displacement by at most its length, so this
component ends in $\mathcal F_k$ with $k\le n-s$.

Since $\ket{\chi_\nu}\in\mathcal F_n$ by Lemma~\ref{lem:displacement}, this
component is orthogonal to the target and does not contribute to
$\langle\chi_\nu|\psi_M\rangle$. We therefore discard it without renormalizing the retained vector, so that each transfer
begins in a single eigenspace $\ket0_a\otimes\mathcal F_D$. In this basis, each
bond rotation acts with the Jordan-Wigner sign of the corresponding hop. As a
check, we compare with a brute-force simulation in the full Fock space for
$n=4$ and $n=5$. The target states and the overlaps
$\langle\chi_\nu|\psi_M\rangle$ agree to $10^{-15}$.

\begin{figure}[t]
\centering
\includegraphics[width=\textwidth]{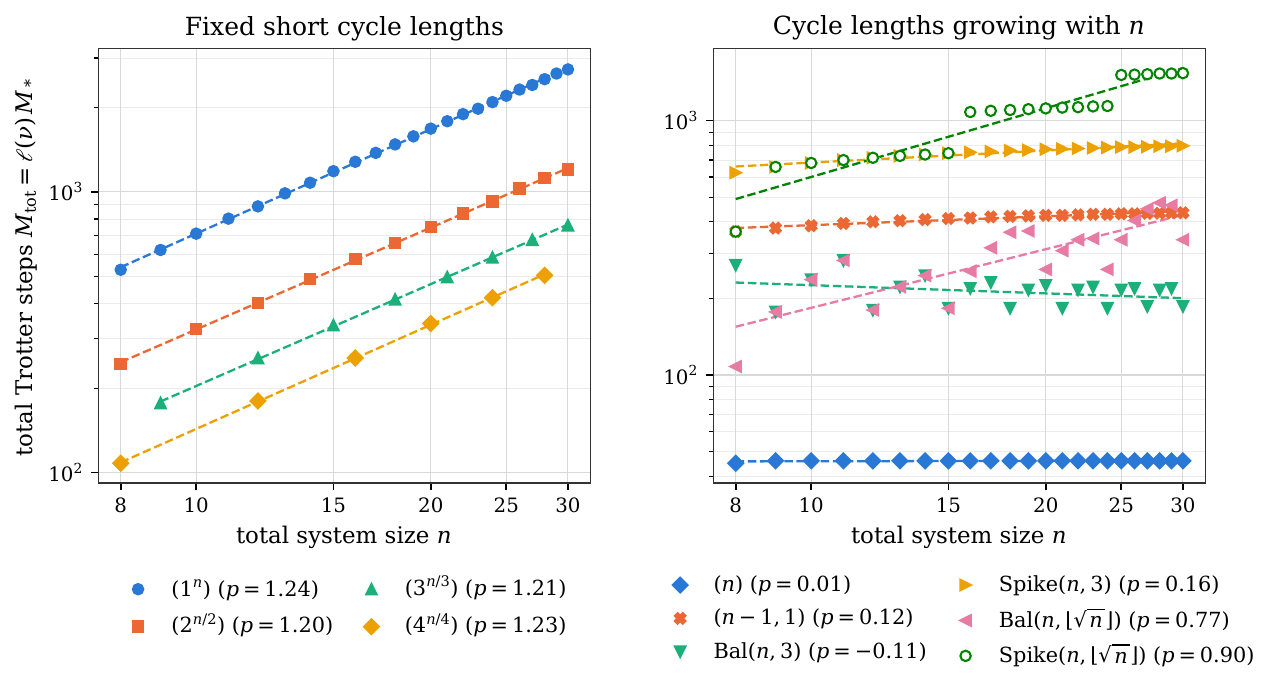}
\caption{Total Trotter steps $M_{\mathrm{tot}}=\ell(\nu)M_\ast$ needed for the
complete CSP circuit to reach final-state error $10^{-2}$,
Eq.~\eqref{eq:numerics-threshold}, for $8\le n\le30$. Markers are data, dashed
lines are least-squares fits $M_{\mathrm{tot}}\propto n^p$.
\textbf{Left}: uniform cycle types $(s^{n/s})$, $s=1,\dots,4$. The fits give
$p=1.20$--$1.24$, far below the operator-norm bound~\eqref{eq:M-total}, which
scales as $n^3\log n$, and the leading-order state-dependent
bound~\eqref{eq:Mtot-SD}, which scales as $n^2$ at fixed $s$.
\textbf{Right}: cycle types whose lengths grow with $n$. $\mathrm{Bal}(n,c)$ has $c$
cycles whose lengths differ by at most one, and
$\mathrm{Spike}(n,c)=(n-c+1,1^{c-1})$ has one long cycle and $c-1$ fixed
points. The fits give $|p|\le0.16$ at fixed cycle count and $p=0.77$ and
$0.90$ for $\lfloor\sqrt n\rfloor$ cycles. Only the operator-norm bound  \eqref{eq:M-total} applies
here: 
it scales as $n$ for $(n)$ and $\mathrm{Bal}(n,3)$, as $n^2$ for
$(n-1,1)$ and $\mathrm{Spike}(n,3)$, and as $n^2\log n$ and $n^{5/2}\log n$
for the $\lfloor\sqrt n\rfloor$ families. For $9\le n\le15$,
$\lfloor\sqrt n\rfloor=3$, so these coincide with the $c=3$ families.}
\label{fig:family-numerics}
\end{figure}

\paragraph{Uniform cycles of fixed length.}
For uniform cycle types of fixed length $s$, left panel of
Figure~\ref{fig:family-numerics}, $M_{\mathrm{tot}}$ grows as $n^{1.2}$ for
every $s$. The operator-norm bound \eqref{eq:M-total-scaling} predicts
$n^3\log^2n$, and evaluated for $(1^{30})$ at $\delta=10^{-2}$ it exceeds
the observed count by a factor of $10^4$. The leading state-dependent bound
of Section~\ref{sec:state-dependent} predicts between $n^{3/2}$ and $n^2$ at
fixed $s$, and the observed growth is below this too. The number of steps per transfer grows only as $n^{0.2}$ over the range, so most of the growth of $M_{\mathrm{tot}}$ is the growth of the number of transfers $n/s.$

\paragraph{Cycle lengths growing with $n$.}
For cycle types whose lengths grow with $n$, right panel of
Figure~\ref{fig:family-numerics}, the observed Trotter step count depends mainly on the number of cycles over the tested range.
At fixed cycle count, $M_{\mathrm{tot}}$ stays bounded over
the whole range, and for the $\lfloor\sqrt n\rfloor$-cycle families it
changes only when $\lfloor\sqrt n\rfloor$ does. At the same cycle count $c$, the spiked family $\mathrm{Spike}(n,c)$ costs three to four times more than the balanced family $\mathrm{Bal}(n,c)$, and the two grow at a similar rate. 

We remark that the families in Figure~\ref{fig:family-numerics} probe how the Trotter cost depends on the spread of the cycle lengths, not classical hardness. As mentioned in Theorem~\ref{thm:intro-classical}, we have found a classical algorithm to sample from $\ket{\chi_\nu}$ for uniform cycles, or almost-uniform cycles with a small number of exceptions. The spiked family is almost-uniform with a single exceptional cycle relative to the block $1^{c-1}$, so there is a polynomial-time sampler for every $n$ (Theorem~\ref{thm:almost-equal-classical}). The balanced family is uniform at $n=9,16,25$ (so Theorem~\ref{thm:equal-cycle-classical} applies), and almost-uniform with up to $\lfloor\sqrt n\rfloor/2$ exceptional cycles (which would make it difficult for the known classical methods).

\begin{figure}[ht]
\centering
\includegraphics[width=\textwidth]{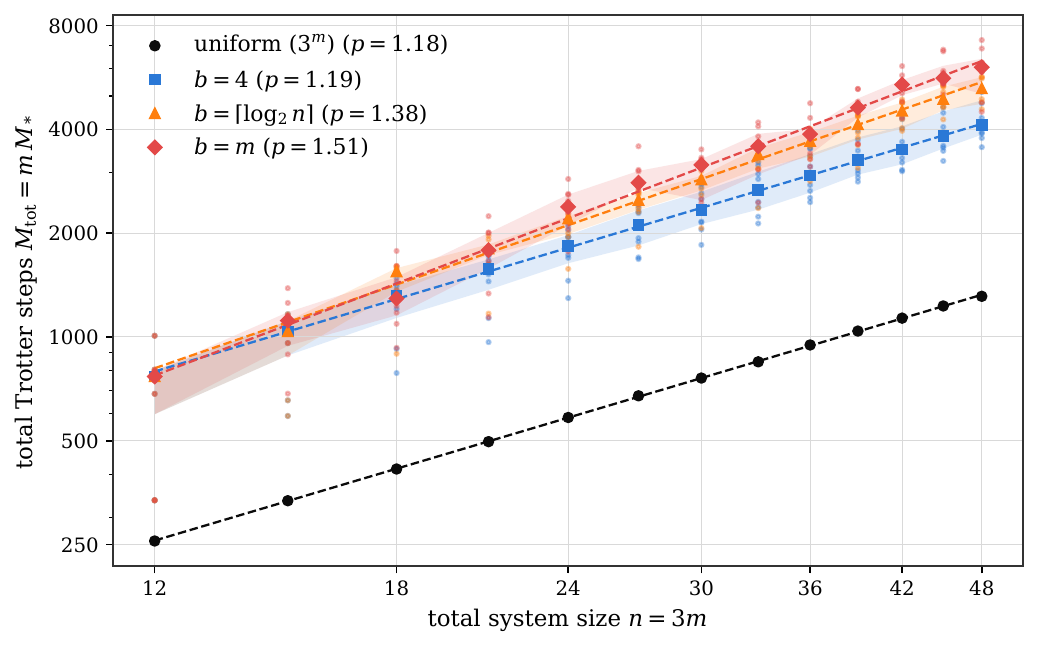}
\caption{Total Trotter steps $M_{\mathrm{tot}}=mM_\ast$ for the uniform
family $(3^m)$ and for random replacements of $b$ of its $m$ cycles at fixed
$n=3m$. A replacement draws $b$ lengths uniformly among the partitions of
$3b$ into $b$ parts with no part equal to $3$. For $b=4$ and
$b=\lceil\log_2 n\rceil$ the same eight multisets are used at every $n$,
and for $b=m$ they are drawn independently at each $n$. Markers are medians
over the eight draws, bands are the interquartile range, and faint dots are
the individual draws. Dashed lines are least-squares fits
$M_{\mathrm{tot}}\propto n^p$ of the medians over $12\le n\le 48$.  At $n=12$ the three values of $b$ coincide.
For the uniform family, the fit gives $p=1.18$, whereas the leading-order
bound~\eqref{eq:Mtot-SD} scales as $n^2$ at fixed $s$ and the operator-norm
bound~\eqref{eq:M-total} as $n^3\log n$. For the replaced ensembles, only the
operator-norm bound applies, again $n^3\log n$, while the fits give $n^{1.19}$,
$n^{1.38}$, and $n^{1.51}$.}
\label{fig:perturbation}
\end{figure}

\paragraph{Random replacements at fixed $n$ and $m$.}
The state-dependent bound of Theorem~\ref{thm:intro-trotter} is specific to
uniform cycles. To test how the cost changes when the cycle type is
perturbed, we keep $n=3m$ and the number of transfers $m$ fixed and replace
$b$ of the $m$ cycles of $(3^m)$ by a random multiset of $b$ lengths with
the same total $3b$ and no length equal to $3$. We take $b=4$,
$b=\lceil\log_2 n\rceil$, and $b=m$, and certify $M_{\rm tot}$ for eight
random draws at each $n$ up to $48$. 
Figure~\ref{fig:perturbation} shows
the result, and we make the following observations:
\begin{itemize}
\item Replacing a constant number of cycles multiplies the cost by a
constant and leaves the exponent unchanged. The fit for $b=4$ gives $1.19$
against $1.18$ for $(3^m)$.
\item The fitted exponent increases with the number of replaced cycles, to $1.38$
for $b=\lceil\log_2 n\rceil$ and $1.51$ for $b=m$, where none of the
original cycles is kept.
\item All three ensembles stay well below the exponent $3$ of the
operator-norm bound \eqref{eq:M-total} for $M_{\mathrm{tot}}$ at
$\ell=n/3$. They are also well below the exponent $2$ predicted by proxy from the state-dependent bound for the uniform cycles. Since each Trotter step costs $O(n)$ gates, the corresponding gate-count exponents for these tested families are $2.19$, $2.38$, and $2.51$.
\end{itemize}
The distinction between constant and growing $b$ matters for classical hardness. In Section~\ref{sec:classical-hardness} we show that cycle types with a bounded number of exceptional cycles are classically tractable. The number of replaced cycles $b$ upper-bounds the number of exceptional cycles in the sense of Section~\ref{sec:almost-equal-classical}, so the $b=4$ ensemble is classically polynomial-time tractable, while the ensembles with $b$ growing with $n$ are not covered by the algorithm in Theorem~\ref{thm:almost-equal-classical}.

\paragraph{Limitations.}
For $n\le48$, $\pi \sqrt{2n/3}/\log n<5$, so the granularity condition on the cycle count $\Theta(\sqrt n/\log n)$ from \eqref{eq:intro-bghs-regime} does not define a well-resolved asymptotic regime at these sizes.
The numerics do not establish scaling for growing cycle counts near this threshold.
The fits span a factor of at most $4$ in $n$,
and the exponents may change at larger $n$.
Nevertheless, within the families studied, our analytical and proven bounds overestimate the Trotter cost by orders of magnitude, and the
cost depends on the number of cycles and on how unequal their lengths are.

\subsection{Implementation via quantum singular value transformation}
\label{sec:qsvt}

The Trotter construction has three limitations: its dependence on the target
error is inverse-linear, its worst-case bound is $O(n^4\log^2n/\delta)$, and
its state-dependent improvement is a leading-order bound for uniform cycles
only. We now replace each Trotterized transfer with QSVT-based Hamiltonian
simulation, which gives a complete bound of $\widetilde O(n^{5/2})$ gates
for every cycle type with polylogarithmic dependence on the error. We first
construct an explicit block-encoding of $H_s$, then bound the cost of CSP
for a fixed cycle type, and then extend the construction coherently over all
cycle types. Section~\ref{sec:qct-cost} completes the QCT by erasing the
cycle-type registers.

We assume a gate set of arbitrary one- and two-qubit
gates. Constant-arity controlled gates are decomposed at constant cost.
Arithmetic and comparisons on $O(\log n)$ qubits are implemented reversibly.
Compilation into a fixed discrete gate set introduces additional
polylogarithmic precision factors. The gate counts below are circuit-size
bounds. We assume throughout that $0<\delta<1$.

\paragraph{Block-Hamiltonian simulation.}

An $(\omega,\mathfrak a,0)$-block-encoding of a Hermitian operator $H$ is a
unitary $U_H$ satisfying
\begin{equation}
  \left(
    \bra{0}^{\otimes\mathfrak a}\otimes I
  \right)
  U_H
  \left(
    \ket{0}^{\otimes\mathfrak a}\otimes I
  \right)
  =
  \frac{H}{\omega}\,.
  \label{eq:block-encoding-definition}
\end{equation}
The parameter $\omega$ is the normalization of the block-encoding, and
$\mathfrak a$ is the number of work qubits.

We use the block-Hamiltonian simulation result of
Ref.~\cite[Theorem~58 and Corollary~60]{Gilyen_2019}. Given an exact
$(\omega,\mathfrak a,0)$-block-encoding of $H$, QSVT constructs a unitary
$V_H(t)$ satisfying
\begin{equation}
  \left\|
    \left(
      \bra{0}^{\otimes(\mathfrak a+2)}\otimes I
    \right)
    V_H(t)
    \left(
      \ket{0}^{\otimes(\mathfrak a+2)}\otimes I
    \right)
    -
    e^{-itH}
  \right\|
  \leq
  \eta\,.
  \label{eq:qsvt-block-error}
\end{equation}
The required number of uses of $U_H$, $U_H^\dagger$, and their controlled
versions is
\begin{equation}
  Q_{\mathrm{sim}}(\omega|t|,\eta)
  =
  \Theta\!\left(
    \omega|t|
    +
    \frac{\log(1/\eta)}
    {\log\!\left(
      e+\dfrac{\log(1/\eta)}{\omega|t|}
    \right)}
  \right)\,.
  \label{eq:qsvt-optimal-query-count}
\end{equation}
The construction uses two additional block-encoding qubits and $O(1)$ further
work qubits. It also uses $O(\mathfrak a\,Q_{\mathrm{sim}}(\omega|t|,\eta))$
two-qubit gates outside the block-encoding queries.

Eq.~\eqref{eq:qsvt-block-error} controls the encoded block. We next
convert this into a bound on the complete output state, including all work
qubits. Define
\begin{equation}
  V_{00}(t)
  :=
  \left(
    \bra{0}^{\otimes(\mathfrak a+2)}\otimes I
  \right)
  V_H(t)
  \left(
    \ket{0}^{\otimes(\mathfrak a+2)}\otimes I
  \right)\,.
\end{equation}
For every normalized input state $\ket{\psi}$,
Eq.~\eqref{eq:qsvt-block-error} gives
$\|(V_{00}(t)-e^{-itH})\ket{\psi}\|\le\eta$, and since $e^{-itH}$ is unitary
the triangle inequality gives $\|V_{00}(t)\ket{\psi}\|\ge1-\eta$. The
component of $V_H(t)\ket{0}^{\otimes(\mathfrak a+2)}\ket{\psi}$ whose work
register is orthogonal to $\ket{0}^{\otimes(\mathfrak a+2)}$ has squared
norm $1-\|V_{00}(t)\ket{\psi}\|^2$. Orthogonality of the two work-register
subspaces therefore implies
\begin{align}
  &\left\|
    V_H(t)\ket{0}^{\otimes(\mathfrak a+2)}\ket{\psi}
    -
    \ket{0}^{\otimes(\mathfrak a+2)}e^{-itH}\ket{\psi}
  \right\|^2
  \nonumber\\
  &\qquad
  =
  \left\|
    \bigl(V_{00}(t)-e^{-itH}\bigr)\ket{\psi}
  \right\|^2
  +
  1-\left\|V_{00}(t)\ket{\psi}\right\|^2
  \leq
  \eta^2+1-(1-\eta)^2
  =
  2\eta\,.
  \label{eq:qsvt-block-to-state}
\end{align}
Setting $\eta=\varepsilon^2/2$ makes the simulated evolution
$\varepsilon$-accurate in state norm. Since
$\log(1/\eta)=2\log(1/\varepsilon)+O(1)$, the coarser bound
\begin{equation}
  Q_{\mathrm{state}}(\omega|t|,\varepsilon)
  =
  O\!\left(
    \omega|t|+\log\frac{1}{\varepsilon}
  \right)
  \label{eq:gilyen-simple}
\end{equation}
is sufficient for all estimates below.

\paragraph{Block-encoding the transfer Hamiltonian.}

Recall that
\begin{equation}
  H_s
  =
  \sum_{i=1}^{2n-s}h_i,
  \qquad
  h_i
  =
  \sigma_a^+\otimes a_{i+s}^\dagger a_i
  +
  \sigma_a^-\otimes a_i^\dagger a_{i+s}\,.
  \label{eq:qsvt-transfer-bonds}
\end{equation}
Here, the subscript $a$ denotes the transfer ancilla. Under the
Jordan--Wigner transformation,
\begin{align}
  h_i
  &=
  \frac{1}{4}
  X_aX_iX_{i+s}Z_i^{(s)}
  +
  \frac{1}{4}
  X_aY_iY_{i+s}Z_i^{(s)}
  +
  \frac{1}{4}
  Y_aY_iX_{i+s}Z_i^{(s)}
  -
  \frac{1}{4}
  Y_aX_iY_{i+s}Z_i^{(s)}\,,
  \label{eq:qsvt-bond-pauli}
\end{align}
where
\begin{equation}
  Z_i^{(s)}
  :=
  \prod_{k=i+1}^{i+s-1}Z_k \,.
\end{equation}
For $s=1$, this product is the identity. Thus, every bond term $h_i$ is a sum
of four Pauli strings, each with coefficient of magnitude $1/4$.

Pauli strings belonging to different bonds cannot coincide. Indeed, if
$i\neq j$, then the endpoint pairs $\{i,i+s\}$ and $\{j,j+s\}$ differ. At an
endpoint belonging to only one of these pairs, one string contains $X$ or
$Y$, while the other contains either $I$ or $Z$. The Pauli one-norm of $H_s$
is therefore exactly
\begin{equation}
  \omega_s
  =
  (2n-s)
  \left(
    4\cdot\frac{1}{4}
  \right)
  =
  2n-s\,.
  \label{eq:pauli-1norm}
\end{equation}

Write the Pauli decomposition as
\begin{equation}
  H_s
  =
  \sum_{i=1}^{2n-s}
  \sum_{\beta\in\{00,01,10,11\}}
  c_{i,\beta}P_{i,\beta}\,,
\end{equation}
where the four values of $\beta$ label the terms in the order displayed in
\eqref{eq:qsvt-bond-pauli}. Each coefficient has magnitude $1/4$, and
$c_{i,11}=-1/4$.

The preparation circuit creates the uniform superposition
\begin{equation}
  \mathrm{PREP}_s\ket{0}
  =
  \frac{1}{\sqrt{4(2n-s)}}
  \sum_{i=1}^{2n-s}
  \sum_{\beta\in\{00,01,10,11\}}
  \ket{i,\beta}\,.
  \label{eq:qsvt-prep}
\end{equation}
The selection circuit is
\begin{equation}
  \mathrm{SELECT}_s
  =
  \sum_{i=1}^{2n-s}
  \sum_{\beta\in\{00,01,10,11\}}
  \ket{i,\beta}\!\bra{i,\beta}
  \otimes
  \operatorname{sgn}(c_{i,\beta})P_{i,\beta}\,.
  \label{eq:qsvt-select}
\end{equation}
Unused computational-basis labels are defined to act as the identity. They
have zero amplitude in the state prepared by $\mathrm{PREP}_s$.

Define
\begin{equation}
  U_s
  :=
  \left(
    \mathrm{PREP}_s^\dagger\otimes I
  \right)
  \mathrm{SELECT}_s
  \left(
    \mathrm{PREP}_s\otimes I
  \right)\,.
  \label{eq:qsvt-block-encoding}
\end{equation}
This is the circuit of Figure~\ref{fig:intro-circuits}(c), with
$V=\mathrm{PREP}_s$ and the controlled Paulis forming $\mathrm{SELECT}_s$.
Its upper-left block is
\begin{align}
  (\bra{0}\otimes I)U_s(\ket{0}\otimes I)
  &=
  \frac{1}{4(2n-s)}
  \sum_{i=1}^{2n-s}
  \sum_{\beta\in\{00,01,10,11\}}
  \operatorname{sgn}(c_{i,\beta})P_{i,\beta}
  =
  \frac{H_s}{2n-s}
  =
  \frac{H_s}{\omega_s}\,.
  \label{eq:qsvt-block-verification}
\end{align}
Consequently, $U_s$ is an exact
$(\omega_s,\mathfrak a_s,0)$-block-encoding of $H_s$, with
\begin{equation}
  \mathfrak a_s
  =
  \left\lceil\log_2(2n-s)\right\rceil+2
  =
  O(\log n)
\end{equation}
work qubits.

A generic exact circuit for the range-state preparation
\eqref{eq:qsvt-prep} uses $O(n)$ one- and two-qubit gates. This cost is included in the gate estimates below.
\begin{proposition}[Cost of $\mathrm{SELECT}_s$]
\label{prop:structured-select}
For fixed $s$, the circuit $\mathrm{SELECT}_s$ can be implemented using
$O(n\log n)$ gates and $O(\log n)$ work qubits. The same gate count holds if
we add a fixed number of control qubits.
\end{proposition}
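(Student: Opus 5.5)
We implement $\mathrm{SELECT}_s$ by splitting each selected Pauli string into its site-local factors. Every $P_{i,\beta}$ has the form (ancilla Pauli)$\otimes$(Pauli at $i$)$\otimes Z_i^{(s)}\otimes$(Pauli at $i+s$). The ancilla Pauli and the two endpoint Paulis depend only on $\beta$, together with the sign $\operatorname{sgn}(c_{i,\beta})$. The string $Z_{i+1}\cdots Z_{i+s-1}$ depends only on $i$. Applying a naive multiplexer over all $4(2n-s)$ labels would cost $O(n^2)$ or more. Instead, we loop once over the $2n$ system sites $k=1,\dots,2n$ and, for each $k$, apply a single controlled Pauli whose control is a one-qubit flag recording how site $k$ relates to the selected bond: $k=i$, $k=i+s$, $i<k<i+s$, or none of these.

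The steps, in order, are as follows.

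\begin{enumerate}[label=(\roman*)]
\item Apply the $\beta$-dependent ancilla Pauli ($X_a$ or $Y_a$, plus the phase $-1$ for $\beta=11$). This costs $O(1)$ gates controlled on the two $\beta$ qubits.

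\item For each site $k$, reversibly compute on $O(\log n)$ work qubits the comparison bits $[k=i]$, $[k=i+s]$ and $[i<k<i+s]$ from the register $\ket{i}$. Since $k$ and $s$ are classical constants, each comparison is a comparator of an $O(\log n)$-bit register against a classical constant and costs $O(\log n)$ gates. Computing $i+s$ once at the start also costs $O(\log n)$.

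\item Using these flags together with $\beta$, apply to site $k$ the correct controlled gate. This is $X$ or $Y$ at an endpoint, chosen by the relevant bit of $\beta$ according to Eq.~\eqref{eq:qsvt-bond-pauli}, or $Z$ strictly in between. Each such gate is a constant-arity controlled Pauli, so it costs $O(1)$ gates.

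\item Uncompute the flags in $O(\log n)$ gates.
\end{enumerate}

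Summed over the $2n$ sites, this gives $O(n\log n)$ gates and $O(\log n)$ work qubits.

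A cheaper alternative I would consider is to avoid per-site comparisons altogether. One keeps a single running flag bit that is toggled when $k=i$ and again when $k=i+s$, and uses it to control $Z_k$ strictly between the endpoints. The toggles still require equality tests, so the $O(n\log n)$ count is unchanged, but the bookkeeping becomes simpler.

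For the labels $i>2n-s$, which are unused, the circuit must act as the identity. I would enforce this by computing a validity bit $[i\le 2n-s]$ once at the start and ANDing it into every control, at a cost of $O(\log n)$ plus $O(1)$ per site.

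Adding a fixed number of extra control qubits only enlarges each constant-arity controlled gate by $O(1)$ controls. The easiest way to handle this is to AND all the extra controls into the validity bit once. Every subsequent gate is then already conditioned on them, so the gate count remains $O(n\log n)$.

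The main obstacle is correctness rather than cost. The Jordan--Wigner form in Eq.~\eqref{eq:qsvt-bond-pauli}, and in particular the sign convention of $c_{i,11}$ and the placement of $Y$ factors, must be reproduced exactly by the $\beta$-controlled endpoint gates. The $Z$ string must also cover exactly the open interval $(i,i+s)$, and when $s=1$ it must be empty. I would verify this gate by gate against Eq.~\eqref{eq:qsvt-select}, checking that for each basis label $\ket{i,\beta}$ the product of applied gates equals $\operatorname{sgn}(c_{i,\beta})P_{i,\beta}$. Because all the controlled Paulis act on distinct qubits, or on the ancilla with a phase, their ordering introduces no additional signs.
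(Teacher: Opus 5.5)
Your proposal is correct and follows essentially the same route as the paper: a validity bit, one pass over the $2n$ sites with $O(\log n)$-gate comparisons of the label against $k$ (the paper stores $i_{\mathrm R}=i+s$ in a register, while you compare against classical constants), constant-arity controlled Paulis at each site, a $\beta$-controlled ancilla Pauli with the sign for $\beta=11$, and uncomputation, giving $O(n\log n)$ gates on $O(\log n)$ reusable work qubits. One small correction: the endpoint Pauli at site $i$ is $Y$ exactly when $\beta_1\oplus\beta_2=1$ (it is $X,Y,Y,X$ for $\beta=00,01,10,11$), not a single bit of $\beta$, and the validity bit should test $1\le i\le 2n-s$ if the label $i=0$ is representable; both are $O(1)$-cost fixes that leave the count unchanged.
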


\begin{proof}
The label register stores the bond index $i$ and the two-bit label $\beta$.
We first compute a bit $v$ that is equal to one exactly when
$1\leq i\leq 2n-s$. When $v=1$, we compute the other endpoint
$i_{\mathrm R}=i+s$, and when $v=0$ we set $i_{\mathrm R}=0$. The register
holding $i_{\mathrm R}$ has $\lceil\log_2(2n+1)\rceil$ qubits, which is
enough because $i_{\mathrm R}\leq2n$ for every valid bond.

We then run through the physical sites $k=1,\ldots,2n$. At each site, the
circuit checks whether $k=i$, $k=i_{\mathrm R}$, or $i<k<i_{\mathrm R}$. It
applies the endpoint Pauli operators when $k=i$ or $k=i_{\mathrm R}$ and
applies $Z_k$ when $i<k<i_{\mathrm R}$. The choice of endpoint Pauli is
determined by $\beta$, as in Eq.~\eqref{eq:qsvt-bond-pauli}.

After all physical sites have been processed, the circuit applies $X_a$ for
$\beta=00,01$ and $Y_a$ for $\beta=10,11$. For $\beta=11$, it also applies
the minus sign in Eq.~\eqref{eq:qsvt-bond-pauli}. All these operations are
turned off when $v=0$. The circuit then uncomputes $i_{\mathrm R}$ and $v$.

Each comparison of two $O(\log n)$-bit numbers uses $O(\log n)$ gates. We
perform a constant number of comparisons at each of the $2n$ sites. The
total gate count is therefore $O(n\log n)$. The same comparison registers
are reused at every site, so the circuit needs only $O(\log n)$ work qubits.
Adding a fixed number of controls to the whole circuit changes the gate count
by only a constant factor in our gate model.
\end{proof}

Combining the $O(n)$ cost of $\mathrm{PREP}_s$ with
Proposition~\ref{prop:structured-select}, one use of $U_s$, $U_s^\dagger$, or
either circuit with a fixed number of extra controls costs
\begin{equation}
  g_s
  :=
  \operatorname{cost}(U_s)
  =
  O(n\log n)\,.
  \label{eq:qsvt-query-gates}
\end{equation}
This includes the controlled versions needed by QSVT and by the coherent
QCT circuit.

\paragraph{A fixed cycle type.}

The transfer that adds the $j$-th cycle of length $s$ has evolution time
$t_{s,j}=\pi/(2\sqrt{sj})$. Since $\omega_s=2n-s$,
Eq.~\eqref{eq:gilyen-simple} shows that a state error of at most
$\varepsilon_{s,j}$ requires
\begin{equation}
  Q_{s,j}
  =
  O\!\left(
    \frac{2n-s}{\sqrt{sj}}
    +
    \log\frac{1}{\varepsilon_{s,j}}
  \right)
  \label{eq:qsvt-per-transfer}
\end{equation}
uses of $U_s$ or $U_s^\dagger$.

Fix a cycle type $\nu=(n^{m_n}\cdots2^{m_2}1^{m_1})\vdash n$ with
$\ell(\nu)=\sum_sm_s$ transfers, and give each transfer error
$\varepsilon_{s,j}=\delta/\ell(\nu)$, so that the errors add to at most
$\delta$. Using $\sum_{j=1}^{m}1/\sqrt j\leq2\sqrt m$,
\begin{equation}
  Q_{\mathrm{CSP}}(\nu)
  =
  O\!\left(
    \sum_{s=1}^{n}
    (2n-s)\sqrt{\frac{m_s}{s}}
    +
    \ell(\nu)\log\frac{\ell(\nu)}{\delta}
  \right)\,.
  \label{eq:qsvt-total-queries}
\end{equation}
To bound the first sum, set $x_s:=s\,m_s$, so that $\sum_sx_s=n$. The
Cauchy--Schwarz inequality gives
\begin{align}
  \left(
    \sum_{s=1}^{n}
    \frac{(2n-s)\sqrt{x_s}}{s}
  \right)^2
  &\leq
  \left(
    \sum_{s=1}^{n}x_s
  \right)
  \left(
    \sum_{s=1}^{n}\frac{(2n-s)^2}{s^2}
  \right)
  \leq
  n\left(
    4n^2\sum_{s=1}^{n}\frac{1}{s^2}
  \right)
  =
  O(n^3)\,,
  \label{eq:qsvt-worst}
\end{align}
so the first sum in Eq.~\eqref{eq:qsvt-total-queries} is $O(n^{3/2})$.
Each query costs $O(n\log n)$ gates, and the extra $O(\log n)$ gates used by
QSVT per query are smaller than this. It follows that
\begin{equation}
  G_{\mathrm{CSP}}(\nu)
  =
  O\!\left(
    n^{5/2}\log n
    +
    n\ell(\nu)\log n
    \log\frac{\ell(\nu)}{\delta}
  \right)\,.
  \label{eq:qsvt-fixed-gates}
\end{equation}

\begin{theorem}[QSVT bound for CSP]
\label{thm:qsvt-csp}
For every cycle type $\nu\vdash n$ and $0<\delta<1$, the CSP circuit with
QSVT-simulated Hamiltonian transfers prepares $\ket{\chi_\nu}$ to state error at most
$\delta$ using
\begin{equation}
  G_{\mathrm{CSP}}^{\mathrm{worst}}
  =
  O\!\left(
    n^{5/2}\log n
    +
    n^2\log n\log\frac{n}{\delta}
  \right)
  \label{eq:qsvt-fixed-worst}
\end{equation}
one- and two-qubit gates and $2n+O(\log n)$ qubits.
\end{theorem}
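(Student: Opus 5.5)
The plan is to assemble the pieces already established in this section. By Theorem~\ref{thm:intro-csp}, the exact circuit is a product of $\ell(\nu)$ transfer unitaries $W_{s,j}=(iX_a\otimes\Id)e^{-it_{s,j}H_s}$ applied to $\ket{0}_a\seed$. Replace each exact evolution $e^{-it_{s,j}H_s}$ by its QSVT simulation $V_{H_s}(t_{s,j})$, built from the block-encoding $U_s$ of Eq.~\eqref{eq:qsvt-block-encoding}. The QSVT work qubits start clean, and by Eq.~\eqref{eq:qsvt-block-to-state} they return to $\ket{0}^{\otimes(\mathfrak a_s+2)}$ up to state error $\varepsilon_{s,j}$ once we set $\eta=\varepsilon_{s,j}^2/2$.

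\textbf{Error accumulation.} A standard hybrid argument gives the total error. Let $\widetilde W_{s,j}$ denote the simulated transfer, acting on system, transfer ancilla and QSVT work register. Writing the product of $\widetilde W$'s minus the product of exact $W$'s as a telescoping sum, and using that every factor is unitary, the final state error is at most $\sum_{s,j}\|(\widetilde W_{s,j}-W_{s,j})\ket{\text{ideal intermediate}}\|\le\sum_{s,j}\varepsilon_{s,j}$. The ideal intermediate state always has a clean work register, which is exactly the hypothesis of Eq.~\eqref{eq:qsvt-block-to-state}. Choosing $\varepsilon_{s,j}=\delta/\ell(\nu)$ makes the total at most $\delta$. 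The work register can be reused across transfers, since in the ideal evolution it is returned clean each time. Therefore the qubit count is $2n+1+\max_s\mathfrak a_s+O(1)=2n+O(\log n)$.

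\textbf{Gate count.} Eq.~\eqref{eq:qsvt-per-transfer} gives the query count for each transfer. Summing these yields Eq.~\eqref{eq:qsvt-total-queries}. The Cauchy--Schwarz bound \eqref{eq:qsvt-worst} shows the first sum is $O(n^{3/2})$ uniformly over $\nu$; this uniformity is the key step for a worst-case statement. Multiplying by the per-query cost $g_s=O(n\log n)$ from Proposition~\ref{prop:structured-select} and \eqref{eq:qsvt-query-gates} gives Eq.~\eqref{eq:qsvt-fixed-gates}. The QSVT overhead of $O(\mathfrak a_s)$ gates per query and the $\ell(\nu)$ single-qubit $iX_a$ gates are lower order and are absorbed. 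Finally, bound $\ell(\nu)\le n$ in the second term, so that $n\ell\log n\log(\ell/\delta)\le n^2\log n\log(n/\delta)$. This gives \eqref{eq:qsvt-fixed-worst}.

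\textbf{Main obstacle.} The delicate point is not the counting but the state-norm error conversion in the presence of the work qubits. Eq.~\eqref{eq:qsvt-block-error} controls only the encoded block, so we must verify that the leakage component, where the work register is not $\ket{0}$, is already charged in the $2\eta$ bound. It must not compound multiplicatively across the $\ell$ transfers. The hybrid argument handles this, because it compares against ideal states with clean work registers at every stage.

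A second subtlety is the $\log(1/\varepsilon)$ additive term in Eq.~\eqref{eq:gilyen-simple}. It must be applied per transfer, even for transfers with $\omega_s t_{s,j}$ small, for example $s$ close to $n$. This is why the $\ell\log(\ell/\delta)$ term appears. Its worst case is $\ell=n$, as for $\nu=(1^n)$, which is where the $n^2\log n\log(n/\delta)$ term originates.
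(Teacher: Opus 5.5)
Your proposal is correct and follows the paper's argument: per-transfer QSVT query counts with error budget $\delta/\ell(\nu)$, the uniform Cauchy--Schwarz bound $O(n^{3/2})$ on the evolution-time part, $O(n\log n)$ gates per query, and finally $\ell(\nu)\le n$; your hybrid argument spells out what the paper compresses into the statement that the per-transfer errors add. Two harmless slips: since $sj\le n$, every transfer has $\omega_s t_{s,j}=(2n-s)\pi/(2\sqrt{sj})\ge\pi\sqrt n/2$, so no transfer has small $\omega_s t_{s,j}$ (not even for $s$ near $n$), and your qubit tally leaves out the $O(\log n)$ work qubits of $\mathrm{SELECT}_s$ from Proposition~\ref{prop:structured-select}, which does not change the $2n+O(\log n)$ total.
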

\begin{proof}
Set $\ell(\nu)\leq n$ in \eqref{eq:qsvt-fixed-gates}. The qubit count is the
$2n$ system qubits, the transfer ancilla, the $\mathfrak a_s=O(\log n)$
block-encoding qubits, and the $O(\log n)$ work qubits of
Proposition~\ref{prop:structured-select} and of QSVT.
\end{proof}
This is an upper bound for our circuit, and we do not claim that it is
optimal.

\section{Quantum Character Transform}
\label{sec:qct}

We now generalize our CSP algorithm to the full Quantum Character Transform \eqref{eq:qct}. This is the unitary operation that maps coherent superpositions of cycle types to coherent superpositions of character column states
\begin{equation}
\mathrm{QCT}_{S_n} = \sum_{\nu\vdash n}\ket{\chi_\nu}\bra{\nu} = \sum_{\lambda,\nu\vdash n} \frac{\chi^\lambda(\nu)}{\sqrt{z_\nu}} \ket{\lambda}\bra{\nu}\,.
\label{eq:qct-explicit}
\end{equation}
The basic idea is to run our CSP algorithm via a control register encoding the cycle types, $\ket{\nu}\seed\to\ket{\nu}\ket{\chi_\nu}$, and uncompute the cycle type to get $\ket{0}\ket{\chi_\nu}$.
This realizes $\mathrm{QCT}_{S_n}$ up to a swapping of registers.
To improve efficiency, we interleave the uncomputing subroutines with the Hamiltonian simulation subroutines in CSP. 

Our full QCT circuit is summarized in Figure~\ref{fig:qct}.
The cycle type $\nu=(n^{m_n}\ldots 2^{m_2}1^{m_1})$ is held in count registers $C_n,\ldots,C_2, C_1$.
Since a cycle of length $s$ takes up $s$ of the $n$ elements, the multiplicity satisfies
$m_s \le \lfloor n/s\rfloor$, and so each cycle count register $C_s$ needs
$q_s := \lceil\log_2(\lfloor n/s\rfloor + 1)\rceil$ qubits to encode an integer in
$\{0,\ldots,\lfloor n/s\rfloor\}$. The label $\ket{\nu}$ is shorthand for the
product of these registers,
\begin{equation}
  \ket{\nu}_C := \ket{m_n}_{C_n}\cdots\ket{m_1}_{C_1}\,,
  \qquad
  \sum_{s=1}^n s\,m_s = n\,,
  \label{eq:qct-cycletype}
\end{equation}
and the constraint $\sum_s s\,m_s = n$ picks out the register configurations
that encode a genuine cycle type of $S_n$. The full input is
\begin{equation}
  \ket{\Psi_{\mathrm{in}}}
  = \ket{\nu}_C \ket{0}_a \seed_\Lambda = \ket{\nu}_C \ket{0}_a \ket{1^n0^n}_\Lambda
  \label{eq:qct-input}
\end{equation}
with $\Lambda$ the $2n$-qubit system register. The registers shrink with
$s$, and their total size is linear in $n$, since
\begin{equation}
  \sum_{s=1}^{n}q_s
  \le
  n+\sum_{s=1}^{n}\log_2\Big(1+\frac ns\Big)
  =
  n+\log_2\prod_{s=1}^{n}\frac{n+s}{s}
  =
  n+\log_2\binom{2n}{n}
  <
  3n\,
  \label{eq:register-count}
\end{equation}
Hence, our QCT implementation uses $2n+O(n)$ qubits, compared to the $2n+1$ qubits sufficient for CSP.
Cycle lengths are processed in decreasing order, and for each $s$ our QCT circuit performs a
deposit step followed by an erase step.

\paragraph{Deposit.}
For a fixed cycle type, recall our CSP algorithm \eqref{eq:transfer-result} applies exactly $m_s$ transfers at
length $s$. In superposition, however, $m_s$ is not fixed, so we instead apply the
transfer $W_{s,j} = (iX_a \otimes\Id) e^{-iH_{s} t_{s,j}}$  at every slot $j = 1,\ldots,L_s$ with $L_s = \lfloor n/s\rfloor$, each
conditioned on the control register $C_s$,
\begin{equation}
  V_{s,j}:\quad \ket{x}_{C_s}\ket{\psi}
  \;\longmapsto\;
  \begin{cases}
    \ket{x}_{C_s}\,W_{s,j}\ket{\psi}, & j \le x,\\[2pt]
    \ket{x}_{C_s}\,\ket{\psi}, & j > x.
  \end{cases}
  \label{eq:controlled-transfer}
\end{equation}
Note the control applies to every gate component of $W_{s,j}$, including the $iX_a$ correction to the ancilla. 
The $s$-th deposit operation is then $D_s = \prod_{j=1}^{L_s} V_{s,j}$.
On the branch with multiplicity $m_s$, the slots $j\le m_s$ act and the slots
$j > m_s$ are identity, so exactly $m_s$ transfers are applied, and the system advances from $\ket{\Phi^{>s}}$ to $\ket{\Phi_{s,m_s}}$. 

\paragraph{Erase.}
After the deposit by $D_s$, the register $C_s$ still holds $\ket{m_s}$, which must be erased to $\ket{0}$ to preserve coherence between different cycle types.
The multiplicity $m_s$ can be recovered from the system state using the
operator
\begin{equation}
  \mathcal{N}_s := \tfrac{1}{s}\,J_s J_s^\dagger\,,
  \qquad \text{as} \qquad
  \mathcal{N}_s\ket{\Phi_{s,m}} = m\,\ket{\Phi_{s,m}}\,.
\label{eq:number-operator-eigenvalue}
\end{equation}
The eigenvalue follows from the raising and lowering relations
\eqref{eq:ladder-raiselower}: $J_s^\dagger\ket{\Phi_{s,m}} =
\sqrt{sm}\,\ket{\Phi_{s,m-1}}$ and then $J_s\ket{\Phi_{s,m-1}} =
\sqrt{sm}\,\ket{\Phi_{s,m}}$, so $J_sJ_s^\dagger\ket{\Phi_{s,m}} =
sm\ket{\Phi_{s,m}}$. Since $\ket{\Phi_{s,m_s}}$ is an eigenstate of $\mathcal N_s$ with eigenvalue $m_s$, the register $C_s$ can be erased by subtracting this eigenvalue from it.
Writing $F_s$ for
the quantum Fourier transform on $C_s$ and
$R_s = 2^{\lceil\log_2(L_s+1)\rceil}$, the subtraction can be implemented via the following phase estimation protocol:
\begin{equation}
  \mathrm{SUB}_s := F_s^\dagger\, K_s^-\, F_s\,,
  \qquad
  K_s^-:\ \ket{q}_{C_s}\ket{\psi} \mapsto
  e^{-2\pi i q\,\mathcal{N}_s/R_s}\,\ket{q}_{C_s}\ket{\psi}\,.
  \label{eq:qft-erasure}
\end{equation}
Note $\mathrm{SUB}_s\ket{m_s}_{C_s}\ket{\Phi_{s,m_s}} =
\ket{0}_{C_s}\ket{\Phi_{s,m_s}}$, since $\mathcal{N}_s$ acts as the number
$m_s$ on the system.


\begin{figure}[t]
\centering
\begin{minipage}[t]{0.39\linewidth}
\centering
{\small\textbf{(a)} QCT, $s=n,\ldots,1$}\\[3pt]
\resizebox{0.98\linewidth}{!}{%
\begin{quantikz}[row sep={0.52cm,between origins}, column sep=0.3cm, font=\small, transparent, every gate/.style={draw=navy,fill=white,thick,inner xsep=4pt}, wire types={q,n,q,q,q}]
\lstick{$\ket{m_n}$} & & & \ctrl{3} & \gate{E_n} & \ \cdots\ & & & \rstick{$\ket{0}$} \\
\lstick{$\vdots$} & & & & & & & & \\
\lstick{$\ket{m_1}$} & & & & & \ \cdots\ & \ctrl{1} & \gate{E_1} & \rstick{$\ket{0}$} \\
\lstick{$\ket{0}_a$} & & & \gate[2]{D_n} & & \ \cdots\ & \gate[2]{D_1} & & \rstick{$\ket{0}_a$} \\
\lstick{$\ket{1^n0^n}$} & \qwbundle{2n} & & & \ctrl{-4} & \ \cdots\ & & \ctrl{-2} & \rstick{$\ket{\chi_\nu}$}
\end{quantikz}}
\end{minipage}\hfill
\begin{minipage}[t]{0.61\linewidth}
\centering
{\small\textbf{(b)} one length $s$: deposit $D_s$, then erase $E_s$}\\[3pt]
\resizebox{0.98\linewidth}{!}{%
\begin{quantikz}[row sep={0.66cm,between origins}, column sep=0.26cm, font=\small, transparent, every gate/.style={draw=navy,fill=white,thick,inner xsep=4pt}, wire types={q,q,q}]
\lstick{$\ket{m_s}$} & & \ctrl{1}\gategroup[3,steps=5,style={dashed,rounded corners,draw=navy,inner xsep=2pt,inner ysep=2pt}]{$D_s$} & \ctrl{1} & \ \cdots\ & \ctrl{1} & \ctrl{1} & & \gate{F_s}\gategroup[3,steps=3,style={dashed,rounded corners,draw=navy,inner xsep=2pt,inner ysep=2pt}]{$E_s$} & \ctrl{2} & \gate{F_s^\dagger} & \rstick{$\ket{0}$} \\
\lstick{$\ket{0}_a$} & & \gate[2]{e^{-iH_st_{s,1}}} & \gate{iX} & \ \cdots\ & \gate[2]{e^{-iH_st_{s,L_s}}} & \gate{iX} & & & & & \\
\lstick{$\ket{\Phi^{>s}}$} & & & & \ \cdots\ & & & & & \gate{K_s^-} & & \rstick{$\ket{\Phi_{s,m_s}}$}
\end{quantikz}}
\end{minipage}
\caption{(a) Our QCT circuit. Each count register $\ket{m_s}$ controls the
deposit $D_s$ and is then erased by $E_s=\mathrm{SUB}_s$ using the system
state. (b) The block for one length $s$. Slot $j$ of the deposit fires when
$m_s\ge j$, so exactly $m_s$ transfers act and the system advances from
$\ket{\Phi^{>s}}$ to $\ket{\Phi_{s,m_s}}$. The erase step applies the
Fourier-controlled phase $K_s^-$ of Eq.~\eqref{eq:qft-erasure} between $F_s$
and $F_s^\dagger$, which returns the register to $\ket{0}$.
}
\label{fig:qct}
\end{figure}

\paragraph{Output.}
After every cycle length has been processed, all count registers are back in
$\ket{0}$ and the system holds the character column state,
\begin{equation}
  \ket{\Psi_{\mathrm{out}}}
  = \ket{0}_{C_n}\cdots\ket{0}_{C_1}  \ket{0}_a \ket{\chi_\nu}_\Lambda\,.
  \label{eq:qct-output}
\end{equation}
Each deposit and erase step is unitary. On the valid cycle-type input subspace,
column orthogonality ensures that distinct states $\ket{\nu}$ are mapped to
orthogonal character states. Hence, the circuit implements the QCT on this
subspace and, by linearity, on any superposition of valid cycle types.

\subsection{QSVT-based implementation and resource estimate}
\label{sec:qct-cost}

We now describe a QSVT-based implementation of the QCT protocol outlined above. We first analyze the costs associated with the deposit operations, then the erasure operations, before giving the complete QCT gate count to prove Theorem~\ref{thm:intro-qct}.

\paragraph{Deposit operations $D_s$.}
The deposit operation $D_s$ consists of a sequence of controlled transfers $V_{s,j}$ \eqref{eq:controlled-transfer}.
The effect of the control by $\ket{m_s}_{C_s}$ is such that the transfer operation $W_{s,j}$ in slot $(s,j)$ is applied only when $j \le m_s$, and the circuit runs all slots $1\leq s\leq n$,
$1\leq j\leq L_s:=\lfloor n/s\rfloor$. 
The total number of slots (and transfer operations) is
\begin{equation}
  N_{\mathrm{slot}}(n)
  :=
  \sum_{s=1}^{n}L_s
  =
  \Theta(n\log n)\,.
  \label{eq:qct-slot-count}
\end{equation}

The part of the query count in \eqref{eq:gilyen-simple} that depends on the evolution times satisfies
\begin{align}
  \sum_{s,j} \omega_s |t_{s,j}| = \frac{\pi}{2}\sum_{s=1}^{n}
  \sum_{j=1}^{L_s}\frac{2n-s}{\sqrt{sj}}
  \leq
  \pi\sum_{s=1}^{n}
  \frac{2n-s}{\sqrt s}\sqrt{\frac ns}
  =
  \pi\sqrt n
  \sum_{s=1}^{n}\frac{2n-s}{s}
  =
  O(n^{3/2}\log n)\,.
\end{align}
We give every slot a state error budget of $\delta_{\mathrm{dep}}/N_{\mathrm{slot}}(n)$,
so that the total state error is at most $\delta_{\mathrm{dep}}$. The number of queries to $U_s$ is
\begin{equation}
  Q_{\mathrm{dep}}
  =
  O\!\left(
    n^{3/2}\log n
    +
    n\log n
    \log\frac{n\log n}{\delta_{\mathrm{dep}}}
  \right)\,,
  \label{eq:qct-controlled-queries}
\end{equation}
and multiplying by the cost of one query from \eqref{eq:qsvt-query-gates} gives
\begin{equation}
  G_{\mathrm{dep}}
  =
  O\!\left(
    n^{5/2}\log^2 n
    +
    n^2\log^2 n
    \log\frac{n\log n}{\delta_{\mathrm{dep}}}
  \right)\,.
  \label{eq:qct-controlled-gates}
\end{equation}
Computing and uncomputing the controls $[m_s\geq j]$ over all slots uses
$O(n\log^2 n)$ additional gates, which is smaller than the bound above.

The coherent circuit, therefore, implements
\begin{equation}
  \ket{\nu}_{C} \ket{0}_a \seed_{\Lambda}
  \longmapsto
  \ket{\nu}_{C} \ket{0}_a \ket{\chi_\nu}_{\Lambda}
\end{equation}
with state error at most $\delta_{\mathrm{dep}}$ for every $\nu\vdash n$.
The gate count includes every controlled slot.

The deposit steps of the QCT therefore use $\widetilde O(n^{5/2})$ gates. 
\paragraph{Singular-value-amplified erasure $E_s$.}
Next, we describe and analyze the implementation of the erasure steps $E_s$ within the same budget as \eqref{eq:qct-controlled-gates} above.

After the deposit $D_s$ for all current operators at length $s$, the count register $C_s$ contains the
multiplicity $\ket{m_s}$ and the system is in $\ket{\Phi_{s,m_s}}$. To erase $\ket{m_s}$, we use phase estimation on the operator $\mathcal N_s = \frac{1}{s}J_sJ_s^\dagger$, since $\mathcal N_s\ket{\Phi_{s,m}} = m\ket{\Phi_{s,m}}$ as explained earlier in \eqref{eq:number-operator-eigenvalue}.

The register $C_s$ has $q_s=\lceil\log_2(L_s+1)\rceil$ qubits and dimension
$R_s:=2^{q_s}$. We need the following quantum Fourier transform for the $C_s$ register
\begin{equation}
  F_s\ket{x} =\frac{1}{\sqrt{R_s}}
  \sum_{y=0}^{R_s-1}
  e^{2\pi ixy/R_s}\ket{y},
\end{equation}
which can be implemented with at most $O(q_s^2)=O(\log^2 n)$ gates.
Then the phase-estimation or subtraction operation $\mathrm{SUB}_s=F_s^\dagger K_s^-F_s$ of
Eq.~\eqref{eq:qft-erasure}, with
\begin{equation}
  K_s^-:\
  \ket{y}\ket{\psi}
  \longmapsto
  e^{-2\pi iy\mathcal N_s/R_s}
  \ket{y}\ket{\psi}\,,
  \label{eq:qft-erasure-resource}
\end{equation}
maps $\ket{m_s}_{C_s}\ket{\Phi_{s,m_s}}_\Lambda$ to $\ket{0}_{C_s}\ket{\Phi_{s,m_s}}_\Lambda$,
because the phase in $K_s^-$ cancels the Fourier phase of $\ket{m_s}$ by
\eqref{eq:number-operator-eigenvalue}. The outstanding issue is an efficient implementation of the evolution generated by $\mathcal N_s$.

A direct construction block-encodes $J_sJ_s^\dagger$ as the square of the
block-encoding $U_s$ and divides by $s$. Its normalization is
$\omega_s^2/s=(2n-s)^2/s$, and using it at every length gives
$\widetilde O(n^3)$ gates. We do better because the algorithm only accesses a small part of the spectrum of $H_s$.

\subparagraph{$\Diamond$ The part of $H_s$'s spectrum used by the algorithm.}

The square of the transfer Hamiltonian is
\begin{equation}
  H_s^2
  =
  \ket{0}\!\bra{0}_a \otimes J_s^\dagger J_s
  +
  \ket{1}\!\bra{1}_a \otimes J_sJ_s^\dagger\,.
  \label{eq:H-square}
\end{equation}
For every $1\leq j\leq L_s$, the two-dimensional space
\begin{equation}
  \mathcal K_{s,j}
  :=
  \operatorname{span} \left\{
    \ket{0}_a\ket{\Phi_{s,j-1}}_\Lambda,
    \ket{1}_a\ket{\Phi_{s,j}}_\Lambda
  \right\}
\end{equation}
is invariant under $H_s$, on which its eigenvalues are
$E_{s,j}^{\pm}=\pm\sqrt{sj}$, and $\ket{1}_a \ket{\Phi_{s,0}}_\Lambda$ has eigenvalue
zero. Since $j\leq L_s$,
\begin{equation}
  \left|E_{s,j}^{\pm}\right|
  \leq
  \sqrt{sL_s}
  \leq
  \sqrt n\,.
  \label{eq:relevant-eigenvalue-bound}
\end{equation}
The norm of $H_s$ can be larger. The bound \eqref{eq:relevant-eigenvalue-bound}
only holds for the eigenvalues of $H_s$ on the states that the algorithm
visits.

\subparagraph{$\Diamond$ Amplifying these eigenvalues.}

The block-encoding $U_s$ contains $H_s/\omega_s$ with $\omega_s=2n-s$. We
use uniform singular value amplification~\cite[Theorem~30]{Gilyen_2019}.
For $\gamma>1$ and $\Delta,\eta\in(0,1/2)$, that result gives an odd
polynomial $P$ with $|P(x)|\leq1$ on $[-1,1]$ and
\begin{equation}
  \left|
    \frac{P(x)}{\gamma x}-1
  \right|
  \leq
  \eta
  \qquad
  \text{for }
  |x|\leq\frac{1-\Delta}{\gamma}\,,
\end{equation}
of degree $O(\gamma\log(\gamma/\eta))$ for fixed $\Delta$. We choose
\begin{equation}
  \gamma_s
  :=
  \frac{\omega_s}{3\sqrt n}\,,
  \qquad
  \Delta:=\frac12\,.
  \label{eq:amplification-parameters}
\end{equation}
For $n>9$, $\gamma_s>1$ for every $s$, and the finitely many smaller values
of $n$ can be implemented directly. For every eigenvalue $E$ in
\eqref{eq:relevant-eigenvalue-bound},
\begin{equation}
  \left|\frac{E}{\omega_s}\right|
  \leq
  \frac{\sqrt n}{\omega_s}
  =
  \frac{1}{3\gamma_s}
  <
  \frac{1-\Delta}{\gamma_s}\,,
  \label{eq:usva-interval}
\end{equation}
so the amplification theorem applies to every eigenvalue used by the
algorithm.

Let $p_s$ be the resulting polynomial and $B_s:=p_s(H_s/\omega_s)$. QSVT
gives a block-encoding of $B_s$, and on the relevant eigenvalues
\begin{equation}
  p_s\!\left(\frac{E}{\omega_s}\right)
  =
  \frac{E}{3\sqrt n}(1+e)\,,
  \qquad
  |e|\leq\eta_s\,.
  \label{eq:usva-relative-error}
\end{equation}
The polynomial stays bounded by one on all of $[-1,1]$, so the QSVT circuit
is a valid unitary even where $H_s$ has larger eigenvalues outside the
states used by the algorithm.

We now square the amplified block-encoding. By the product rule for
block-encodings~\cite[Lemma~53]{Gilyen_2019}, two copies of its work
register give a block-encoding of $B_s^2$, which we use as a block-encoding
of
\begin{equation}
  \widetilde{\mathcal N}_s
  :=
  \frac{9n}{s}B_s^2
  \qquad
  \text{with normalization}
  \qquad
  \alpha_s^{(\mathcal N)}
  =
  \frac{9n}{s}\,.
  \label{eq:amplified-number-operator}
\end{equation}
This is smaller than the direct normalization $(2n-s)^2/s$ by a factor of
order $n$. Because $p_s$ is odd, $B_s^2$ is a polynomial in $H_s^2$ and
preserves the two ancilla sectors in \eqref{eq:H-square}. We flip the
transfer ancilla to $\ket{1}_a$ before the Fourier-controlled evolution and
back afterwards, so that the operator acting is
$\tfrac1sJ_sJ_s^\dagger=\mathcal N_s$.

\subparagraph{$\Diamond$ Amplification error.}

On $\mathcal K_{s,j}$, where $E^2=sj$, Eq.~\eqref{eq:usva-relative-error}
gives the eigenvalue
$\widetilde j=\tfrac{9n}{s}\big[\tfrac{E}{3\sqrt n}(1+e)\big]^2=j(1+e)^2$
of $\widetilde{\mathcal N}_s$. If $\eta_s\leq1/2$, then
\begin{equation}
  |\widetilde j-j|
  \leq
  3\eta_sj
  \leq
  3\eta_s\frac{n}{s}\,.
  \label{eq:number-eigenvalue-error}
\end{equation}
The Fourier-controlled evolution is implemented one register qubit at a
time, with evolution times $\theta_{s,b}:=2\pi2^b/R_s$ for $0\leq b<q_s$,
whose sum is $2\pi(R_s-1)/R_s<2\pi$. The phase error from replacing $j$ by
$\widetilde j$ is therefore at most $2\pi|\widetilde j-j|\leq6\pi\eta_sn/s$.
Choosing
\begin{equation}
  \eta_s
  :=
  \frac{\delta_{\mathrm{er}}s}{12\pi n^2}
  \label{eq:amplification-error-budget}
\end{equation}
makes the amplification error at most $\delta_{\mathrm{er}}/(2n)$ at each
length, and at most $\delta_{\mathrm{er}}/2$ in total. Since
$\gamma_s=O(\sqrt n)$, this choice gives polynomial degree
\begin{equation}
  d_s
  =
  O\!\left(
    \sqrt n\log\frac{n}{\delta_{\mathrm{er}}}
  \right)\,,
  \label{eq:amplification-degree}
\end{equation}
so one query to the amplified block-encoding uses $O(d_s)$ queries to $U_s$
or $U_s^\dagger$.

\subparagraph{$\Diamond$ Simulation error and gate count.}

The number of Fourier-controlled simulations is
$N_{\mathrm{phase}}(n):=\sum_{s=1}^{n}q_s=\Theta(n)$ by
\eqref{eq:register-count}. We give each of them a state error
$\delta_{\mathrm{er}}/(2N_{\mathrm{phase}}(n))$, so that these errors add
to at most $\delta_{\mathrm{er}}/2$. At fixed $s$, the total number of
queries to the block-encoding of $B_s^2$ is
\begin{equation}
  O\!\left(
    \frac ns
    +
    q_s\log
    \frac{N_{\mathrm{phase}}(n)}{\delta_{\mathrm{er}}}
  \right)\,,
  \label{eq:erasure-amplified-queries-per-s}
\end{equation}
where the first term follows from $\alpha_s^{(\mathcal N)}=9n/s$ and
$\sum_b\theta_{s,b}<2\pi$, and the second is the precision cost of the
$q_s$ simulations. Using $\sum_sn/s=O(n\log n)$ and $\sum_sq_s=O(n)$
together with \eqref{eq:amplification-degree}, the total number of queries
to the original block-encodings $U_s$ is
\begin{equation}
  Q_{\mathrm{er}}
  =
  O\!\left[
    n^{3/2}
    \log\frac{n}{\delta_{\mathrm{er}}}
    \left(
      \log n+
      \log\frac{n}{\delta_{\mathrm{er}}}
    \right)
  \right]\,.
  \label{eq:erasure-base-queries}
\end{equation}
Each query costs $O(n\log n)$ gates by \eqref{eq:qsvt-query-gates}, so
\begin{equation}
  G_{\mathrm{er}}
  =
  O\!\left[
    n^{5/2}\log n
    \log\frac{n}{\delta_{\mathrm{er}}}
    \left(
      \log n+
      \log\frac{n}{\delta_{\mathrm{er}}}
    \right)
  \right]\,.
  \label{eq:erasure-gate-count}
\end{equation}
The Fourier transforms use $O(\sum_sq_s^2)=O(n)$ gates, and the
comparisons, QSVT phase gates, and reflections have smaller gate counts than
\eqref{eq:erasure-gate-count}.

\paragraph{Complete QCT gate count upper bound.}

Let $\mathcal P_{\mathrm{valid},0}$ project onto the states for which
$\sum_ss\,m_s=n$, the system is in the seed state, and all transfer and
work qubits are zero. On this input space, the exact circuit (where all Hamiltonian simulations are error-free) acts as
\begin{equation}
  U_{\mathrm{QCT}}:
  \quad
  \ket{\nu}_{C}\ket{0}_a \ket{0}_{\mathrm w}\seed_{\Lambda}
  \longmapsto
  \ket{0}_{C}
  \ket{0}_a\ket{0}_{\mathrm w}\ket{\chi_\nu}_{\Lambda}\,.
\end{equation}

\begin{theorem}[Quantum character transform; also Theorem~\ref{thm:intro-qct}]
\label{thm:qct-total}
For $0<\delta<1$, the circuit $\tilde{U}_{\rm QCT}$ described above with $\delta_{\mathrm{dep}}=\delta_{\mathrm{er}}=\delta/2$
satisfies
\begin{equation}
  \left\|
    \left(
      \widetilde U_{\mathrm{QCT}}-U_{\mathrm{QCT}}
    \right)
    \mathcal P_{\mathrm{valid},0}
  \right\|
  \leq
  \delta
  \label{eq:qct-restricted-error}
\end{equation}
and uses
\begin{align}
  G_{\mathrm{QCT}}
  &=
  O\!\left(
    n^{5/2}\log^2 n
    +
    n^2\log^2 n
    \log\frac{n\log n}{\delta}
  \right)
  +
  O\!\left[
    n^{5/2}\log n
    \log\frac{n}{\delta}
    \left(
      \log n+
      \log\frac{n}{\delta}
    \right)
  \right]
  \label{eq:qct-final-gate-count}
\end{align}
gates and $2n+O(n)$ qubits. For $\delta=1/\operatorname{poly}(n)$,
\begin{equation}
  G_{\mathrm{QCT}}
  =
  \widetilde O(n^{5/2})\,.
  \label{eq:qct-final-tilde}
\end{equation}
\end{theorem}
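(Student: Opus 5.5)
The plan is to write $\widetilde U_{\mathrm{QCT}}$ and $U_{\mathrm{QCT}}$ as the same ordered product of blocks, $D_n,E_n,\dots,D_1,E_1$. The ideal blocks use exact evolutions $e^{-it_{s,j}H_s}$ and exact $e^{-i\theta\mathcal N_s}$. The implemented blocks use the QSVT simulations $V_{H_s}(t_{s,j})$ and the amplified-and-squared simulations of $\widetilde{\mathcal N}_s$. We then bound the error with a hybrid (telescoping) argument. All implemented blocks are unitary on the full space, including work qubits. So for any normalized $\ket{\Psi}$ in the range of $\mathcal P_{\mathrm{valid},0}$, $\|(\widetilde U-U)\ket{\Psi}\|$ is at most the sum, over blocks $k$, of $\|(\widetilde B_k-B_k)\ket{\Psi_k}\|$. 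Here $\ket{\Psi_k}$ is the \emph{ideal} intermediate state entering block $k$. This is the key reason only state-dependent accuracy is needed: by the deposit analysis, $\ket{\Psi_k}$ always lies in the span of $\ket{\nu}_C\ket{0}_a\ket{0}_{\mathrm w}\ket{\Phi_{s,j}}$ with valid $\nu$ and $j\le L_s$.

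For the deposit part, each controlled slot $V_{s,j}$ acts either as the identity (no error) or as a QSVT transfer with state error at most $\delta_{\mathrm{dep}}/N_{\mathrm{slot}}(n)$, by \eqref{eq:qsvt-block-to-state} with the chosen $\eta$. The work register returns to $\ket{0}$ up to that error. Linearity over the orthogonal branches $\ket{\nu}_C$ gives a per-slot error at most $\delta_{\mathrm{dep}}/N_{\mathrm{slot}}$ on arbitrary superpositions. This holds because the control register is untouched and the error vectors of different branches stay orthogonal, being tagged by distinct $\ket{\nu}_C$. Summing over slots gives $\delta_{\mathrm{dep}}=\delta/2$.

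For the erase part, the error splits into two pieces. The first is the amplification error. On the relevant invariant subspaces the eigenvalues lie in the window \eqref{eq:usva-interval}, so $\widetilde{\mathcal N}_s$ acts on $\ket{\Phi_{s,m_s}}$, after the ancilla flip, as the scalar $\widetilde j=m_s(1+e)^2$. The phase error over the $q_s$ Fourier-controlled steps is then at most $6\pi\eta_s n/s\le\delta_{\mathrm{er}}/(2n)$, which sums to $\delta_{\mathrm{er}}/2$. We must check that $\ket{\Phi_{s,m_s}}$ is a genuine eigenvector of $B_s^2$, so that no leakage occurs. This follows from the invariance of $\mathcal K_{s,j}$ under $H_s$ and the fact that $p_s$ is a polynomial. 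The second piece is the simulation error: $N_{\mathrm{phase}}$ simulations, each of state error $\delta_{\mathrm{er}}/(2N_{\mathrm{phase}})$, again by \eqref{eq:qsvt-block-to-state}, which sums to $\delta_{\mathrm{er}}/2$. Since the blocks after the erase then see exactly $\ket{0}_{C_s}$, the triangle inequality gives total error at most $\delta_{\mathrm{dep}}+\delta_{\mathrm{er}}=\delta$. Taking the supremum over $\ket\Psi$ yields \eqref{eq:qct-restricted-error}.

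The gate count is the sum of \eqref{eq:qct-controlled-gates} and \eqref{eq:erasure-gate-count} with $\delta_{\mathrm{dep}}=\delta_{\mathrm{er}}=\delta/2$. The remaining costs are smaller and are absorbed into these terms: the control comparisons ($O(n\log^2 n)$), the Fourier transforms ($O(n)$), and the QSVT phase gates ($O(\log n)$ per query). The qubit count is $2n$ system qubits, one ancilla, fewer than $3n$ count qubits by \eqref{eq:register-count}, and $O(\log n)$ work qubits. Those work qubits include two copies of the amplified block-encoding register and the QSVT ancillas, and they are reused across lengths. Setting $\delta=1/\mathrm{poly}(n)$ turns every logarithm into $O(\log n)$, giving $\widetilde O(n^{5/2})$.

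The main obstacle is making the erase-step error analysis rigorous on the relevant subspace. The amplification guarantee holds only for eigenvalues in the window. QSVT's polynomial acts on singular values of the whole block. The squaring lemma introduces cross terms through work registers. The plan is to verify that the squared amplified block-encoding, restricted to $\ket{0}_{\mathrm w}\otimes\mathcal K_{s,j}$, equals $p_s(E/\omega_s)^2$ times the identity exactly, apart from the QSVT simulation error of the evolution. Once this is checked via the invariance of $\mathcal K_{s,j}$, the rest is bookkeeping.
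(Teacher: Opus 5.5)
Your proposal is correct and follows essentially the same route as the paper's proof. Both use a telescoping/hybrid argument over the deposit and erase blocks, with total budgets $\delta_{\mathrm{dep}}$ and $\delta_{\mathrm{er}}$ (the latter split equally between the amplification phase error and the simulation error), rely on the QSVT state-error bound being uniform over inputs entangled with the count registers, and obtain the gate count by summing \eqref{eq:qct-controlled-gates} and \eqref{eq:erasure-gate-count}. The eigenvector check you flag as the main obstacle is immediate, since $H_s$ acts as $\sqrt{sj}\,X$ on $\mathcal K_{s,j}$ and $p_s$ is odd, so $B_s^2$ acts there as the scalar $p_s(\sqrt{sj}/\omega_s)^2$; just remember the $m_s=0$ branch, where $\ket{1}_a\ket{\Phi_{s,0}}$ lies in no $\mathcal K_{s,j}$ but is annihilated by $H_s$, so $p_s(0)=0$ gives $\widetilde j=j=0$ exactly.
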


\begin{proof}
By telescoping the difference of $\tilde U_{QCT}$ and $U_{\rm QCT}$ and using the triangle inequality, we can add the errors from individual unitary steps.
The amplification error is bounded uniformly on this entire spectral subspace, and the simulation bound \eqref{eq:qsvt-block-to-state} is uniform over normalized inputs, including inputs entangled with the count registers. Thus, the same bounds hold for arbitrary initial states in $\mathcal P_{\mathrm{valid},0}$.
In our QSVT construction of $\tilde U_{QCT}$ above, we have budgeted the total deposit error to be $\delta_{\mathrm{dep}}$, and the total erasure error to be $\delta_{\mathrm{er}}$.
Choosing $\delta_{\rm dep}=\delta_{\rm er}=\delta/2$ and adding them yields \eqref{eq:qct-restricted-error}. 
The gate count to guarantee this error level is the sum of
\eqref{eq:qct-controlled-gates} and \eqref{eq:erasure-gate-count}. The
qubits are the $2n$ system qubits, the $O(n)$ multiplicity-register qubits
of \eqref{eq:register-count}, and $O(\log n)$ reusable qubits for the
indices, transfer ancilla, arithmetic, block-encodings, and QSVT.
\end{proof}

\subsection{Application: character row state and defect entropy from the inverse QCT}
\label{sec:rows}

The QCT algorithm in Theorem~\ref{thm:qct-total} run in reverse also gives access to the rows of the normalized character table. 
Indeed, the reversed circuit effectively implements the inverse of QCT~\eqref{eq:qct-explicit}, which when applied to a given Fock state $\ket\lambda$ returns
\begin{equation}
U_{\rm QCT}^{\dag} \ket{0}_C \ket{0}_a\ket{0}_{\rm w}\ket{\lambda}_\Lambda 
=
\Big( \sum_{\nu\vdash n} \frac{1}{\sqrt{z_\nu}} \chi^\lambda(\nu)\ket{\nu}_C\Big)\otimes \ket{0}_a \ket{0}_{\rm w} \seed_\Lambda\,.
\end{equation}
This is the $\lambda$-th row of the normalized character table, at the cost of one QCT.

Measuring the count registers in this state returns a cycle type $\nu$ with probability
\begin{equation}
  P_\lambda(\nu)=\frac{[\chi^\lambda(\nu)]^2}{z_\nu}\,.
  \label{eq:row-distribution}
\end{equation}
These are the weights in the defect entropy~\eqref{eq:defect-entropy-intro}
of a symmetric orbifold CFT with a universal defect labeled by $\lambda$ at
the entangling surface~\cite{GutperleLiRathoreRoumpedakis2024}. Writing
$[\chi^\lambda(\nu)]^2=z_\nu P_\lambda(\nu)$ inside the logarithm,
\begin{equation}
  \Delta S_\lambda
  =
  H(P_\lambda)-\mathbb E_{\nu\sim P_\lambda}\!\left[\ln z_\nu\right]
  \;\le\; 0\,,
  \label{eq:defect-entropy-split}
\end{equation}
where $H$ is the Shannon entropy and the sign follows from
$[\chi^\lambda(\nu)]^2\ge1$ on the support of $P_\lambda$. Since $P_\lambda$ is supported on the $p(n)=e^{O(\sqrt n)}$ partitions of
$n$, the first term is at most $\ln p(n)=O(\sqrt n)$, while $\ln z_\nu = \sum_{s} [m_s \ln s + \ln (m_s!)]$ is
given in closed form by the cycle multiplicities of $\nu$.
Therefore, samples of $\nu$
from the inverse QCT can be used to efficiently compute a lower bound $L_\lambda := -\mathbb{E}_\nu \ln z_\nu$, where $0\le \Delta S_\lambda - L_\lambda \le \ln p(n)$.
This quantity can also be used as a biased estimator for $\Delta S_\lambda$, with additive error $|L_\lambda - \Delta S_\lambda| \le \ln p(n) = O(\sqrt n)$, which corresponds to a vanishing relative error whenever $|\Delta S_\lambda|$ grows faster than $\sqrt n$.

In Figure~\ref{fig:defect-entropy-growth}, we evaluated
\eqref{eq:defect-entropy-intro} numerically for all $\lambda\vdash n$ with
$n\le30$, and find that
$|\Delta S_\lambda|$ grows roughly linearly in $n$ for generic $\lambda$. For
$5\le n\le30$, linear fits give slopes $0.336$ for the Plancherel average and
$0.450$ for the maximum over $\lambda$. At $n=30$, $|\Delta S_\lambda|>\ln p(n)$
for $31\%$ of the irreps, which carry $96\%$ of the Plancherel weight. 
Only shapes close to the trivial or sign representation stay bounded, e.g.,
$|\Delta S_{(n-1,1)}|\approx0.83$ for all $n$.

\begin{figure}[h!]
\centering
\includegraphics[width=\textwidth]{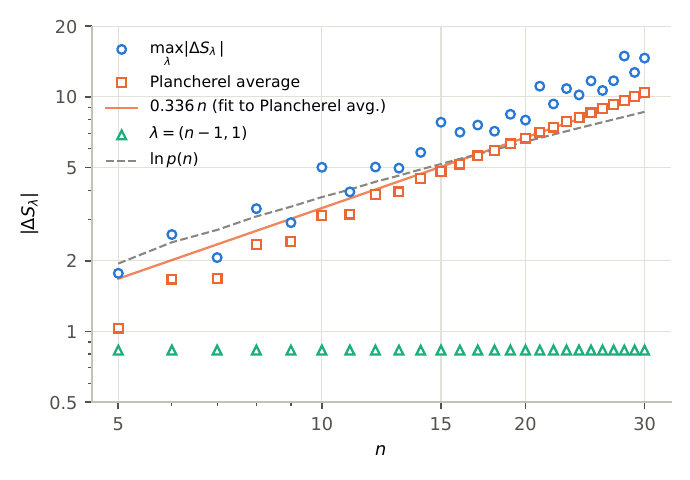}
\caption{Growth of the defect entropy $|\Delta S_\lambda|$ of
Eq.~\eqref{eq:defect-entropy-intro} with $n$, computed from the full character table of $S_n$ for $5\le n\le30$.
The loglog plot shows the maximum $|\Delta S_\lambda|$ over all $\lambda\vdash n$, the average over $\lambda$ weighted by
the Plancherel measure $d_\lambda^2/n!$, 
and the value at the standard representation $\lambda=(n-1,1)$. 
The largest possible value of the entropy term $H(P_\lambda)$ in Eq.~\eqref{eq:defect-entropy-split}, plotted as $\ln p(n)$, bounds the additive error in estimating ${\Delta S}_\lambda$ by $-\mathbb{E}_\nu [\ln z_\nu]$. 
The maximum $\Delta S_\lambda$ is attained at staircase-like shapes, with prominent peaks at the perfect staircases $n=6,10,15,21,28$.}
\label{fig:defect-entropy-growth}
\end{figure}

\section{Classical computational complexity}
\label{sec:classical-hardness}

The argument of Bravyi \emph{et al.}~\cite{BravyiGossetHavlicekSchatzki2025} for quantum advantage in character sampling, reviewed in
Section~\ref{sec:intro-prior}, applies to cycle types with at most
$A\sqrt n/\log n$ cycles. This section shows that the number of cycles alone
does not determine classical hardness: uniform cycle types in this regime,
and their perturbations by a bounded number of exceptional cycles, admit
exact polynomial-time classical samplers.

\subsection{Conditions for quantum advantage}

Recall from Section~\ref{sec:intro-prior} that the hardness argument of
Bravyi \emph{et al.} for sampling
from $P_\nu(\lambda)=[\chi^\lambda(\nu)]^2/z_\nu$ needs $P_\nu$ to be
granular, which holds in the regime \eqref{eq:intro-bghs-regime}.
In order for the quantum advantage to hold, their argument also needs to assume the noncollapse of the polynomial hierarchy, which is standard, and an average-case hardness conjecture that
computing $P_\nu(\lambda)$ exactly is hard for a uniformly random
$\lambda$ with a $1-1/\poly(n)$ success probability.

We observe that this average-case hardness conjecture can only hold if the granularity $\gamma=p(n)/z_\nu$ is at
most polynomial in $n$, since an algorithm that always outputs zero is
correct on at least a $1-1/\gamma$ fraction of the
partitions~\cite{BravyiGossetHavlicekSchatzki2025}. Since
$z_\nu=\prod_s s^{m_s}m_s!\le n^{\ell(\nu)}$, this requires
\begin{equation}
  \ell(\nu)\;\ge\;\frac{\ln p(n)}{\ln n}-O(1)\,,
  \qquad
	\text{where} \qquad
  \ln p(n)=\pi\sqrt{2n/3}-O(\log n)\,.
\end{equation}
The candidate cycle types therefore lie near the upper end of the regime
\eqref{eq:intro-bghs-regime}.

Bravyi \emph{et al.}'s classical MPS algorithm computes any
$\chi^\lambda(\nu)$ (and thus $P_\nu(\lambda)$) exactly in time $n^{O(\ell(\nu))}$, which is polynomial
for a bounded number of cycles but becomes superpolynomial for cycle types with $\ell(\nu)=\Theta(\sqrt n/\log n)$ cycles. We
now show that some of these cycle types are classically easy.

\subsection{Uniform cycles are classically tractable}

We first show that uniform cycle types have an exact polynomial-time
classical sampler, including when the number of cycles lies in the regime
\eqref{eq:intro-bghs-regime}.

\begin{theorem}[Classical sampling for uniform cycles]
\label{thm:equal-cycle-classical}
Let $\nu=(s^m)$ with $n=ms$. The distribution
\begin{equation}
  P_{(s^m)}(\lambda)
  =
  \frac{[\chi^\lambda(s^m)]^2}{s^m m!}
\end{equation}
can be sampled exactly by a classical randomized algorithm with expected
running time $\widetilde O(n^2)$. For any given $\lambda\vdash n$, the
probability $P_{(s^m)}(\lambda)$ can be computed exactly in polynomial
time.
\end{theorem}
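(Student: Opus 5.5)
We will use the classical theory of $s$-cores and $s$-quotients. By the Murnaghan--Nakayama rule, as encoded in Eq.~\eqref{eq:identity}, $\chi^\lambda(s^m)$ is nonzero only if $\lambda$ can be built from the empty diagram by adding $m$ border strips of length $s$. This means $\lambda$ has empty $s$-core. For such $\lambda$ we write its $s$-quotient as $(\lambda^{(0)},\dots,\lambda^{(s-1)})$, with $|\lambda^{(i)}|=k_i$ and $\sum_i k_i=m$.

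The key identity we plan to prove or invoke is the classical formula (Littlewood; see James--Kerber)
\begin{equation*}
  \chi^\lambda(s^m)=\pm\binom{m}{k_0,\dots,k_{s-1}}\prod_{i=0}^{s-1} f^{\lambda^{(i)}},
\end{equation*}
where $f^\mu$ is the number of standard Young tableaux of shape $\mu$.

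In the fermionic picture of Section~\ref{sec:encoding}, this identity is transparent. The chain decomposition \eqref{eq:chain-decomp} splits $J_s=\sum_\mu J_s^{(\mu)}$ into $s$ mutually commuting pieces acting on disjoint residue classes mod $s$. Each residue class of the seed $\ket{1^n0^n}$, with $n=ms$, is itself a seed of a smaller chain, namely $m$ occupied sites followed by $m$ empty ones. Expanding $J_s^m=\sum \binom{m}{k_0,\dots,k_{s-1}}\prod_\mu (J_s^{(\mu)})^{k_\mu}$ therefore reduces the problem to the $s=1$ identity $J_1^{k}\ket{1^k0^k}=\sum_{\mu\vdash k} f^\mu\ket{\mu}$ on each chain. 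The only care needed is to track the Jordan--Wigner signs, which yield the overall $\pm$. The residue-class Fock state is exactly the $s$-quotient (abacus) bijection, which is the combinatorial step to check.

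Squaring the identity and dividing by $z_{(s^m)}=s^m m!$ gives
\begin{equation*}
  P_{(s^m)}(\lambda)=\binom{m}{k_0,\dots,k_{s-1}} s^{-m}\prod_{i}\frac{(f^{\lambda^{(i)}})^2}{k_i!}.
\end{equation*}
This is a multinomial$(m;1/s,\dots,1/s)$ distribution over the sizes $(k_i)$, times independent Plancherel measures on each $\lambda^{(i)}\vdash k_i$.

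The sampler follows directly. First draw $(k_i)$ by assigning each of the $m$ cycles a uniformly random residue, in $O(m\log s)$ time. Then, for each $i$, draw a uniform permutation in $S_{k_i}$ and take its Robinson--Schensted shape, which is exactly Plancherel-distributed; RS insertion costs $O(k^2)$ naively, and $O(k\log k)$ with a patience-sorting-style implementation. Finally, reassemble $\lambda$ from the empty core and the quotient via the abacus, placing bead configurations on $s$ runners and reading them off, in $O(n)$ time. Everything is deterministic apart from the random draws, so the expected time is $\widetilde O(n^2)$ or better.

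For exact evaluation, given $\lambda$ we compute its $s$-core and quotient by the abacus in $O(n)$ time. If the core is nonempty, $P=0$. Otherwise we compute each $f^{\lambda^{(i)}}$ by the hook length formula with exact big-integer arithmetic, which is polynomial time, and plug into the displayed formula.

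The main obstacle is establishing the factorization identity rigorously within the paper's conventions. In particular, we must verify that the residue-class restriction of the Maya encoding \eqref{eq:maya} matches the $s$-quotient with the right indexing, and that the cross-chain Jordan--Wigner signs depend only on $\lambda$, so that they factor out as a global sign. Our first attempt would be to cite the classical result and check the bijection against the abacus directly. If the sign bookkeeping is awkward, we can avoid it entirely, since only $[\chi^\lambda]^2$ enters $P$.
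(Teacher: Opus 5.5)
Your proposal is correct. The sampler and the evaluation procedure match the paper's: multinomial sizes from uniform residues, Robinson--Schensted shapes, abacus reassembly, and the hook-length formula. You differ in how you justify the factorization $\chi^\lambda(s^m)=\pm\binom{m}{k_0,\dots,k_{s-1}}\prod_i f^{\lambda^{(i)}}$.

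The paper argues purely combinatorially. Removing a length-$s$ border strip from $\lambda$ is removing a corner box from one quotient component, so the removal orders are counted by the multinomial times $\prod_i f^{\lambda^{(i)}}$. It then cites James--Kerber for the fact that the product of Murnaghan--Nakayama signs does not depend on the order.

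You instead read the identity off the fermionic encoding, using the chain decomposition and the multinomial expansion of $J_s^m$. This is the structure the paper itself uses in Eqs.~\eqref{eq:app-multinomial}--\eqref{eq:app-rho} of Appendix~\ref{app:state-dependent}, but only for the Trotter analysis. Your route is more self-contained:
\begin{itemize}
\item vanishing for a nonempty core becomes conservation of particle number on each residue class;
\item the quotient is just the restriction of the Maya string to residue classes, so you could skip core/quotient language and assemble $\lambda$ directly from the chain configurations;
\item order-independence of the signs becomes a consequence rather than a citation.
\end{itemize}
The paper's route is shorter once the classical facts are granted, and it stays in standard combinatorial language.

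One caution concerns your fallback that the sign bookkeeping can be skipped because only $[\chi^\lambda(s^m)]^2$ enters $P$. That does not work. Squaring removes an overall sign for each $\lambda$, but you still need every hop sequence reaching the same $\lambda$ to contribute with the same sign. Otherwise the contributions could cancel, and $|\chi^\lambda(s^m)|$ would be smaller than the path count. In the original ordering, $a_{i+s}^\dagger a_i$ carries the parity of the occupied modes strictly between $i$ and $i+s$, which lie on other chains, so equal signs are not automatic.

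The clean fix is to conjugate by the reordering $P_s$ of Section~\ref{sec:chain-decomp}. In the reordered basis every chain hop is between adjacent modes and has coefficient $+1$. Also $P_s\ket{\lambda}=\epsilon(\lambda)\bigotimes_\alpha\ket{\lambda^{(\alpha)}}$, with a sign depending only on the occupied set. Hence every path contributes the same sign, and the count is exactly $\binom{m}{\vec k}\prod_\alpha f^{\lambda^{(\alpha)}}$.

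Two minor points:
\begin{itemize}
\item The per-chain identity you need is $J_1^k\ket{1^m0^m}=\sum_{\mu\vdash k}f^\mu\ket{\mu}$ on a $2m$-site chain with $k\le m$, not $J_1^k\ket{1^k0^k}$. This is the state $\ket{\xi_d}$ in the proof of Lemma~\ref{lem:app-F-quadratic}.
\item Patience sorting gives only the first row of the RS shape, so the $O(k\log k)$ claim is unsupported. The naive $O(k^2)$ insertion, together with $\sum_i k_i^2\le n^2$, already gives $\widetilde O(n^2)$.
\end{itemize}
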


\begin{proof}
The Murnaghan-Nakayama rule computes $\chi^\lambda(s^m)$ by successively
removing $m$ border strips of length $s$ from the Young diagram of $\lambda$
\cite[Section~7.17]{Stanley1999}. A border strip is a connected set of
boundary boxes that contains no $2\times2$ block and whose removal leaves a
valid Young diagram.

These removals are described by the $s$-core and $s$-quotient of $\lambda$
\cite[Section~2.7]{JamesKerber1981}. The $s$-core is the diagram that remains
after removing as many length-$s$ border strips as possible, and does not
depend on the order of removal. The $s$-quotient is an $s$-tuple of
partitions $(\lambda^{(0)},\ldots,\lambda^{(s-1)})$. The core and quotient
determine $\lambda$ uniquely, and they can be computed from $\lambda$, and
$\lambda$ reconstructed from them, in polynomial time using the standard
abacus description.

If the $s$-core of $\lambda$ is nonempty, then $\lambda$ cannot be reduced to
the empty diagram by removing $m$ strips of length $s$, and the
Murnaghan-Nakayama rule gives $\chi^\lambda(s^m)=0$.

Suppose instead that the $s$-core is empty, and let
$d_i:=|\lambda^{(i)}|$, so that $\sum_{i=0}^{s-1}d_i=m$. Removing a
length-$s$ border strip from $\lambda$ corresponds to removing one corner
box from one component of the $s$-quotient. Let $f^\mu$ denote the number of
standard Young tableaux of shape $\mu$, which is the number of orders in
which the corner boxes of $\mu$ can be removed. The boxes of
$\lambda^{(i)}$ can be removed in $f^{\lambda^{(i)}}$ orders, and the
removal orders from the different quotient components can be interleaved in
$\binom{m}{d_0,\ldots,d_{s-1}}$ ways. The product of the
Murnaghan-Nakayama signs does not depend on the order in which the strips
are removed \cite[Section~2.7]{JamesKerber1981}. Hence,
\begin{equation}
  \chi^\lambda(s^m)
  =
  \begin{cases}
    \displaystyle
    \pm
    \binom{m}{d_0,\ldots,d_{s-1}}
    \prod_{i=0}^{s-1}f^{\lambda^{(i)}}\,,
    &
    \text{if the $s$-core of $\lambda$ is empty}\,,
    \\[2ex]
    0\,,
    &
    \text{otherwise}\,.
  \end{cases}
  \label{eq:equal-cycle-character-factorization}
\end{equation}
Since $z_{(s^m)}=s^mm!$, squaring gives
\begin{equation}
  P_{(s^m)}(\lambda)
  =
  \frac{1}{s^m}
  \binom{m}{d_0,\ldots,d_{s-1}}
  \prod_{i=0}^{s-1}
  \frac{\big(f^{\lambda^{(i)}}\big)^2}{d_i!}\,.
  \label{eq:equal-cycle-probability-factorization}
\end{equation}

The first two factors in
Eq.~\eqref{eq:equal-cycle-probability-factorization} form a multinomial
distribution: they give the probability that $m$ independent uniform choices
from $\{0,\ldots,s-1\}$ produce counts $(d_0,\ldots,d_{s-1})$. It remains
to sample a partition $\mu\vdash d$ with probability
\begin{equation}
  \frac{(f^\mu)^2}{d!}\,.
  \label{eq:plancherel-probability}
\end{equation}
The Robinson-Schensted correspondence is a bijection between permutations
of $d$ objects and pairs of standard Young tableaux of the same
shape~\cite[Section~7.11]{Stanley1999}. For a fixed shape $\mu$ there are
$(f^\mu)^2$ such pairs, so choosing a uniformly random permutation and
applying the Robinson--Schensted algorithm samples $\mu$ with probability
\eqref{eq:plancherel-probability}.

Eq.~\eqref{eq:equal-cycle-probability-factorization} now gives an exact
classical sampler:
\begin{enumerate}
  \item Choose $m$ independent uniform values from $\{0,\ldots,s-1\}$. Let
  $d_i$ be the number of times that $i$ occurs.

  \item For every $i$, choose a uniformly random permutation of $d_i$
  objects, apply the Robinson-Schensted algorithm, and let $\lambda^{(i)}$
  be the resulting shape.

  \item Reconstruct $\lambda$ from the empty $s$-core and the quotient
  $(\lambda^{(0)},\ldots,\lambda^{(s-1)})$.
\end{enumerate}
The probability that this algorithm outputs $\lambda$ is exactly
Eq.~\eqref{eq:equal-cycle-probability-factorization}.

Row insertion of a permutation of $d$ objects takes $O(d^2)$ steps, and
$\sum_id_i^2\le(\sum_id_i)^2\le n^2$, so the second step takes $O(n^2)$
steps. The uniform choices in the first step can be made exactly from random
bits in expected time $O(n\log n)$, and the abacus reconstruction in the
third step takes $O(n\log n)$ steps. The expected running time is therefore
$\widetilde O(n^2)$.

Finally, suppose that $\lambda$ is given. Its $s$-core and $s$-quotient can
be computed in polynomial time. A nonempty core gives probability zero.
Otherwise, each $f^{\lambda^{(i)}}$ follows from the hook-length
formula~\cite{FrameRobinsonThrall1954},
\begin{equation}
  f^\mu
  =
  \frac{|\mu|!}
  {\displaystyle\prod_{x\in\mu}h(x)}\,,
\end{equation}
where $h(x)$ is the hook length of the box $x$, and substitution into
Eq.~\eqref{eq:equal-cycle-probability-factorization} gives the exact
probability with polynomially many arithmetic operations on integers of
polynomially many bits.
\end{proof}

Theorem~\ref{thm:equal-cycle-classical} holds for every $s$ and $m$, in
particular for $m=\Theta(\sqrt n/\log n)$. Uniform cycle types therefore lie
inside the granularity regime \eqref{eq:intro-bghs-regime} while remaining
classically tractable, with every output probability efficiently computable
and an exact classical sampler for the distribution. They do not satisfy the
average-case hardness condition of Bravyi \emph{et al.} and cannot provide
the sampling advantage considered there.

\subsection{Almost-uniform cycles}
\label{sec:almost-equal-classical}

We now consider cycle types for which all but a small number of cycles have
the same length. Write
\begin{equation}
  \nu_j
  :=
  (s^m,r_1,\ldots,r_j)\,,
  \qquad
  n_j
  :=
  ms+\sum_{t=1}^{j}r_t\,,
  \qquad
  0\leq j\leq b\,,
  \label{eq:almost-equal-prefix}
\end{equation}
so that $\nu_0=(s^m)$ and $\nu_b$ is the cycle type we want to sample. The
order of $r_1,\ldots,r_b$ is arbitrary and fixed throughout the algorithm,
and the lengths $r_j$ need not be close to $s$. Here ``almost uniform''
means that only $b$ cycles have been separated from the uniform block. The decomposition of $\nu$ into a uniform block and exceptional cycles is not unique. Since Theorem~\ref{thm:almost-equal-classical} below holds for every decomposition, its bound applies with the smallest possible $b$, namely $b(\nu):=\ell(\nu)-\max_s m_s$, the number of cycles whose length is not the most frequent one. For example, the spiked cycle types $(n-c+1,1^{c-1})$ of Section~\ref{sec:numerics} have $b(\nu)=1$ for every $1<c<n$.

\begin{theorem}[Classical sampling for almost-uniform cycles]
\label{thm:almost-equal-classical}
For $\nu=(s^m,r_1,\ldots,r_b)\vdash n$, there is an exact classical
randomized algorithm that samples
\begin{equation}
  P_\nu(\lambda)
  =
  \frac{[\chi^\lambda(\nu)]^2}{z_\nu}
\end{equation}
in expected time $n^{4b+O(1)}$. Consequently, the sampler runs in polynomial
time when $b=O(1)$ and in quasipolynomial time when $b=O(\log n)$. For any
given $\lambda\vdash n$, the character $\chi^\lambda(\nu)$ and the probability
$P_\nu(\lambda)$ can be computed exactly in time $n^{2b+O(1)}$.
\end{theorem}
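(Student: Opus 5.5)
We will build the sampler by sequential rejection sampling along the chain $\nu_0,\nu_1,\ldots,\nu_b$ of \eqref{eq:almost-equal-prefix}, using Theorem~\ref{thm:equal-cycle-classical} as the base case. The key identity is the Murnaghan--Nakayama rule applied to the last exceptional cycle. For $\lambda\vdash n_j$,
\begin{equation*}
  \chi^\lambda(\nu_j)=\sum_{\mu}(-1)^{\mathrm{ht}(\lambda/\mu)}\chi^\mu(\nu_{j-1}),
\end{equation*}
where the sum runs over $\mu\vdash n_{j-1}$ such that $\lambda/\mu$ is a border strip of length $r_j$. In the fermionic language of Section~\ref{sec:encoding}, this is simply $J_{\nu_j}\seed=J_{r_j}J_{\nu_{j-1}}\seed$. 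A given $\lambda$ has at most $n$ removable strips of a fixed length, since they correspond to beads on the abacus that can slide down by $r_j$. Likewise, a given $\mu$ admits at most $O(n)$ ways of adding such a strip.

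\textbf{Exact evaluation.} Iterating the rule $b$ times expresses $\chi^\lambda(\nu)$ as a signed sum over at most $n^b$ chains $\lambda=\mu_b\supset\mu_{b-1}\supset\cdots\supset\mu_0\vdash ms$. Each leaf value $\chi^{\mu_0}(s^m)$ is computed exactly in polynomial time from the core/quotient formula \eqref{eq:equal-cycle-character-factorization}. The signs come from the heights of the strips. Memoizing over the intermediate shapes does not change the bound. The cost is $n^{b+O(1)}$, which lies within the stated $n^{2b+O(1)}$, and dividing the square by $z_\nu$ gives $P_\nu(\lambda)$.

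\textbf{Sampling.} We propose $\lambda$ as follows. First sample $\mu_0\sim P_{\nu_0}$ using Theorem~\ref{thm:equal-cycle-classical}. Then, for $j=1,\ldots,b$, add a uniformly random length-$r_j$ strip to the current shape among the at most $2n$ available; the count is at most $n_j+1$, since the addable strips correspond to beads that can slide up. The probability of proposing $\lambda$ is
\begin{equation*}
  q(\lambda)=\sum_{\text{chains}}P_{\nu_0}(\mu_0)\prod_j\frac{1}{N_j(\mu_{j-1})}\ \ge\ \frac{1}{(2n)^b}\sum_{\text{chains}}P_{\nu_0}(\mu_0).
\end{equation*}
Two bounds combine. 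By Cauchy--Schwarz over at most $n^b$ chains,
\begin{equation*}
  [\chi^\lambda(\nu)]^2\le n^b\sum_{\text{chains}}[\chi^{\mu_0}(\nu_0)]^2,
\end{equation*}
and $z_\nu/z_{\nu_0}=\prod_j r_j\cdot\big(\text{multiplicity factorials}\big)$. Together these give $P_\nu(\lambda)\le K\,q(\lambda)$ with
\begin{equation*}
  K=(2n)^b\,n^b\,\frac{z_{\nu_0}}{z_\nu}\le n^{2b+O(1)}.
\end{equation*}
We then accept with probability $P_\nu(\lambda)/(Kq(\lambda))$. Computing $q(\lambda)$ exactly means enumerating the $n^b$ chains and recomputing the counts $N_j$; computing $P_\nu(\lambda)$ costs $n^{2b+O(1)}$ by the evaluation step. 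The expected number of trials is $K$, so the expected time is $K\cdot n^{2b+O(1)}=n^{4b+O(1)}$. This gives exactness and the polynomial and quasipolynomial corollaries.

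\textbf{Main obstacle.} The delicate step is the envelope constant $K$. We must make sure the ratio $z_{\nu_0}/z_\nu$ does not blow up when some $r_j$ coincide with one another. If some $r_j=s$ it is cleanest to absorb them into the block, which is why we use the minimal $b(\nu)$. We must also check that Cauchy--Schwarz over chains loses only $n^b$. If the factorial factors arising from repeated $r_j$ cause trouble, we would instead bound them by $b!\le n^b$, which still fits inside the $n^{4b+O(1)}$ budget.
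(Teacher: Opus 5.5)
Your argument is correct, but it organizes the rejection sampling differently from the paper. The paper adds the exceptional cycles one level at a time. At step $j$ it draws an exact sample $\mu\sim P_{j-1}$, picks one of $K=n^2$ padded border-strip labels, and accepts with probability $[\chi^\lambda(\nu_j)]^2/(KS_j(\lambda))$. A one-level Cauchy--Schwarz gives acceptance probability $z_{\nu_j}/(K^2z_{\nu_{j-1}})\ge K^{-2}$ per trial, and the nested recursion $T_j\le K^2(T_{j-1}+2C_j+\mathrm{poly}(n))$ gives $n^{4b+O(1)}$.

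You instead use a single global rejection step. Your proposal is a base sample followed by $b$ uniformly random strip additions. You compute its probability $q(\lambda)$ exactly by enumerating the at most $n^b$ removal chains, and one Cauchy--Schwarz over whole chains gives the envelope $P_\nu\le Kq$. Your route needs no induction on intermediate target distributions and makes the envelope explicit. Your count of $n$ removable strips per level is also sharper than the paper's $n^2$. The paper's nesting, by contrast, performs the most expensive evaluations ($\chi^\lambda(\nu_b)$ and $S_b(\lambda)$) only in the $O(K^2)$ expected trials of the last level. Your scheme repeats the full $n^{b+O(1)}$ chain enumeration in each of its $K$ expected trials. Both fit within $n^{4b+O(1)}$, and your evaluation step is the paper's recursion $C_j\le KC_{j-1}+\mathrm{poly}(n)$.

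Three details need tidying.

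First, \eqref{eq:equal-cycle-character-factorization} fixes $\chi^{\mu_0}(s^m)$ only up to sign. Since the Murnaghan--Nakayama sum is signed, you must also compute that sign. One way, as the paper does, is to remove length-$s$ strips one at a time and multiply their height signs, which does not depend on the removal order.

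Second, $(2n)^bn^b=2^bn^{2b}$ is not $n^{2b+O(1)}$ once $b\gg\log n$. Multiplying it by the per-trial cost $n^{2b+O(1)}$ you quote would overshoot the claimed bound. Either use $N_j\le n$ (each addable strip moves a distinct bead of the $n$-bead abacus), or use the per-trial cost $n^{b+O(1)}$ that your own evaluation step gives. The latter yields a total of at most $2^bn^{3b+O(1)}\le n^{4b+O(1)}$.

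Third, your ``main obstacle'' is not an obstacle. Appending a part $r_j$ multiplies $z$ by $r_j(c_j+1)\ge1$, so $z_{\nu_0}/z_\nu\le1$ for every decomposition. Neither absorbing parts equal to $s$ nor the $b!$ fallback is needed.
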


The algorithm adds the exceptional cycles one at a time. At step $j$, it
converts an exact sample from the distribution for $\nu_{j-1}$ into an exact
sample from the distribution for $\nu_j$. We first bound the number of
possible border-strip moves.

\begin{lemma}[Number of border-strip moves]
\label{lem:border-strip-count}
Let $r\ge1$ and let $\mu$ be a partition with $|\mu|+r\le n$. There are at
most $n^2$ partitions $\lambda$ such that $\lambda/\mu$ is a border strip of
size $r$. The same bound holds for the number of partitions $\mu$ obtained by
removing a border strip of size $r$ from a fixed partition $\lambda$ with
$|\lambda|\le n$. Moreover, all these partitions can be listed in time
polynomial in $n$.
\end{lemma}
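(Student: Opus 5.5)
We will prove the lemma in the fermionic (Maya diagram) encoding of Section~\ref{sec:encoding}, since there adding a border strip of size $r$ is a single hop. Pad $\mu$ to $n$ parts and encode it by the $2n$-bit string with occupied modes $\operatorname{supp}(\mu)=\{\mu_j+n-j+1\}$. The condition $|\mu|+r\le n$ ensures that every resulting $\lambda$ still fits in the $n\times n$ box, so it is represented in the same $2n$-mode chain; this is exactly the regime in which Eq.~\eqref{eq:identity} and the Murnaghan--Nakayama rule apply. We then use the standard fact underlying the fermionic Murnaghan--Nakayama rule of Crichigno and Prakash: $\lambda/\mu$ is a border strip of size $r$ if and only if $\operatorname{supp}(\lambda)$ is obtained from $\operatorname{supp}(\mu)$ by moving one particle from an occupied mode $i$ to an empty mode $i+r$. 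Equivalently, the nonzero terms of $J_r\ket{\mu}$ are, up to sign, exactly the states $\ket{\lambda}$ with $\lambda/\mu$ a border strip of size $r$.

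With this correspondence, the count is immediate. Each such $\lambda$ is determined by the starting mode $i\in\{1,\dots,2n-r\}$, and distinct $i$ give distinct bit strings. Hence there are at most $2n-r\le 2n\le n^2$ such $\lambda$ for $n\ge2$; the case $n=1$ can be checked directly.

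For the removal statement, apply the same argument to $J_r^\dagger$. The partitions $\mu$ with $\lambda/\mu$ a border strip of size $r$ correspond to moving one particle from an occupied mode $i+r$ of $\operatorname{supp}(\lambda)$ to an empty mode $i$. Again there are at most $2n$ choices. Because $|\lambda|\le n$, the resulting strings have weight $n$ and lie within the $2n$ modes, so the same box encoding is valid.

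For listing, iterate over all $i$, test the occupancy of modes $i$ and $i+r$, perform the swap, and decode the bit string back to a partition by reading its boundary path. This takes $O(n)$ time per candidate and $O(n^2)$ time in total.

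The only nontrivial step is the correspondence between border strips and single $r$-hops. If we cannot simply cite it from Crichigno--Prakash or from \cite[Section~2.7]{JamesKerber1981}, we will prove it via the boundary path. Moving a $1$ (an up step) at position $i$ to a $0$ (a right step) at position $i+r$ replaces a segment of the lattice path by one that differs by a connected ribbon of $r$ boxes. Such a ribbon contains no $2\times2$ block, since it lies between two paths that differ only on the interval $[i,i+r]$. Conversely, every border strip arises this way from its two endpoint steps. We expect this bookkeeping to be the main obstacle, and we will treat it briefly through the abacus description already invoked in the proof of Theorem~\ref{thm:equal-cycle-classical}.
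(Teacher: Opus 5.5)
Your argument is correct, but it takes a different route from the paper. The paper's proof stays in the row description. After padding to $n$ rows, a border strip occupies consecutive rows $a,\dots,c$, and consecutive rows overlap in exactly one column. This forces $\lambda_i=\mu_{i-1}+1$ for $a<i\le c$, and the size $r$ then fixes $\lambda_a$. So the pair $(a,c)$ determines the strip, which gives at most $n^2$; removal is symmetric, with $\mu_i=\lambda_{i+1}-1$.

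You instead pass to the $2n$-mode Maya encoding and use the rim-hook/bead-move correspondence, so each strip is determined by a single hop $i\to i+r$. This gives several things:
\begin{itemize}
\item A sharper count: at most $2n-r$, and in fact at most $n$, since the hop must start at one of the $n$ occupied modes.
\item A concrete $O(n^2)$ listing procedure.
\item Consistency with the paper: it is exactly the fact behind the fermionic Murnaghan--Nakayama rule \eqref{eq:identity} that the paper already relies on.
\item Better exponents downstream: $K$ enters the runtime of Theorem~\ref{thm:almost-equal-classical} as $K^{2b+1}$, so your bound would improve them.
\end{itemize}
The paper's proof, in exchange, is self-contained and needs no appeal to the abacus.

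If you prove the correspondence yourself rather than citing \cite[Section~2.7]{JamesKerber1981}, tighten two points. First, the absence of a $2\times2$ block does not follow merely from the two paths differing only on $[i,i+r]$. It follows from the new middle segment being the old one translated by $(1,-1)$, so each diagonal gains at most one box. Second, your upper bound actually uses the converse direction: every border-strip addition or removal is a single hop. Your sketch only asserts this direction, and it needs the same care.
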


\begin{proof}
Pad every partition with zero parts so that it has $n$ rows. A nonempty
border strip occupies a consecutive set of rows. Let $a$ and $c$ be its top
and bottom rows. If it occupies more than one row, connectedness and the
absence of a $2\times2$ block imply that its boxes in two consecutive rows
overlap in exactly one column. The boxes added to row $i$ form the interval
$\{\mu_i+1,\ldots,\lambda_i\}$, and since both $\mu$ and $\lambda$ are
partitions the one-column overlap condition gives
$\lambda_i=\mu_{i-1}+1$ for $a<i\leq c$. The total size $r$ then determines
the remaining part through
\begin{equation}
  \lambda_a-\mu_a
  =
  r-
  \sum_{i=a+1}^{c}
  \big(\mu_{i-1}+1-\mu_i\big)\,.
\end{equation}
All rows outside $a,\ldots,c$ are unchanged, so $(a,c)$ determines at most
one $\lambda$. For removal from a fixed $\lambda$, the same overlap
condition gives $\mu_i=\lambda_{i+1}-1$ for $a\leq i<c$, and the total size
determines $\mu_c$, so $(a,c)$ also determines at most one removable strip.
There are at most $n^2$ pairs. Enumerating the pairs, constructing the
corresponding candidate, and checking the partition and border-strip
conditions takes polynomial time.
\end{proof}

We want to choose each border-strip move with the same
probability, independent of the current partition. Set $K:=n^2$. For every partition and every strip length, use
Lemma~\ref{lem:border-strip-count} to list the valid additions and pad this
list with invalid labels until it has exactly $K$ entries, so that each valid
addition occurs exactly once.

Suppose inductively that $\mu\vdash n_{j-1}$ is sampled from
\begin{equation}
  P_{j-1}(\mu)
  =
  \frac{[\chi^\mu(\nu_{j-1})]^2}{z_{\nu_{j-1}}}\,.
  \label{eq:almost-equal-inductive-distribution}
\end{equation}
Choose one of the $K$ labels uniformly. If it is invalid, reject the proposal
and repeat the step with a fresh sample $\mu$. If it is valid, add the corresponding border strip of
size $r_j$ and call the resulting partition $\lambda$. For a proposed
$\lambda$, let
\begin{equation}
  \mathcal R_j(\lambda)
  :=
  \left\{
    \mu'\vdash n_{j-1}:
    \lambda/\mu'
    \text{ is a border strip of size }r_j
  \right\}\,,
  \qquad
  S_j(\lambda)
  :=
  \sum_{\mu'\in\mathcal R_j(\lambda)}
  [\chi^{\mu'}(\nu_{j-1})]^2\,.
  \label{eq:almost-equal-S}
\end{equation}
For every candidate that can actually be proposed, $S_j(\lambda)>0$, since
the sampled $\mu$ belongs to $\mathcal R_j(\lambda)$ and has
$\chi^\mu(\nu_{j-1})\neq0$ with probability one under
Eq.~\eqref{eq:almost-equal-inductive-distribution}. The
Murnaghan-Nakayama rule gives
\begin{equation}
  \chi^\lambda(\nu_j)
  =
  \sum_{\mu'\in\mathcal R_j(\lambda)}
  (-1)^{\operatorname{ht}(\lambda/\mu')}
  \chi^{\mu'}(\nu_{j-1})\,,
  \label{eq:almost-equal-MN}
\end{equation}
where $\operatorname{ht}(\lambda/\mu')$ is one less than the number of rows
occupied by the border strip~\cite[Theorem~7.17.3]{Stanley1999}. Accept the
proposed partition with probability
\begin{equation}
  A_j(\lambda)
  :=
  \frac{[\chi^\lambda(\nu_j)]^2}
       {K S_j(\lambda)}\,.
  \label{eq:almost-equal-acceptance}
\end{equation}
If the proposal is rejected, sample a new $\mu$ from
Eq.~\eqref{eq:almost-equal-inductive-distribution} and repeat the entire
step.

\begin{lemma}[One-cycle update]
\label{lem:almost-equal-update}
The acceptance probability \eqref{eq:almost-equal-acceptance} lies in
$[0,1]$. Conditioned on acceptance, the output has distribution
\begin{equation}
  P_j(\lambda)
  =
  \frac{[\chi^\lambda(\nu_j)]^2}{z_{\nu_j}}\,.
  \label{eq:almost-equal-updated-distribution}
\end{equation}
\end{lemma}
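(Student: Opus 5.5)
Two things need proving: the acceptance probability $A_j(\lambda)$ is at most one, and conditioned on acceptance the output is distributed as $P_j$. Both follow from the Murnaghan--Nakayama recursion \eqref{eq:almost-equal-MN} together with one bookkeeping identity relating $z_{\nu_j}$ and $z_{\nu_{j-1}}$.

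\emph{Acceptance probability.} Nonnegativity is immediate. For the upper bound, apply Cauchy--Schwarz to \eqref{eq:almost-equal-MN}:
\begin{equation}
[\chi^\lambda(\nu_j)]^2\le |\mathcal R_j(\lambda)|\,S_j(\lambda).
\end{equation}
By Lemma~\ref{lem:border-strip-count}, $|\mathcal R_j(\lambda)|\le n^2=K$, which gives $A_j(\lambda)\le1$. We also need $S_j(\lambda)>0$ for every candidate that can be proposed; this was noted just before the lemma.

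\emph{Probability of a single trial.} In one trial, $\mu$ is drawn from $P_{j-1}$ and a label is chosen uniformly among $K$. Each valid addition occurs exactly once in the padded list of $\mu$. So the probability that a trial proposes a given $\lambda$ is
\begin{equation}
\sum_{\mu\in\mathcal R_j(\lambda)}\frac{P_{j-1}(\mu)}{K}=\frac{S_j(\lambda)}{K\,z_{\nu_{j-1}}}.
\end{equation}
Multiplying by $A_j(\lambda)$, the probability that a trial proposes and then accepts $\lambda$ is
\begin{equation}
\frac{[\chi^\lambda(\nu_j)]^2}{K^2 z_{\nu_{j-1}}},
\end{equation}
in which $S_j$ has cancelled. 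This is proportional to $[\chi^\lambda(\nu_j)]^2$, and the constant of proportionality does not depend on $\lambda$.

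\emph{Conditioning on acceptance.} Each trial is independent and repeated until acceptance. The output law is therefore the normalized version of the per-trial acceptance probability above. By column orthogonality \eqref{eq:colorth}, $\sum_\lambda[\chi^\lambda(\nu_j)]^2=z_{\nu_j}$, so the output distribution is exactly $P_j$. As a byproduct, the per-trial acceptance probability is $z_{\nu_j}/(K^2 z_{\nu_{j-1}})$. This quantity is what later controls the running time: $z_{\nu_j}/z_{\nu_{j-1}}$ is at least $1$, since it equals $r_j$ times a multiplicity factor.

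\emph{Main obstacle.} The step needing care is the proposal probability. It requires that each pair $(\mu,\lambda)$ with $\lambda/\mu$ a border strip of size $r_j$ is reachable by exactly one label. It also requires that the sum over $\mu$ producing $\lambda$ runs exactly over $\mathcal R_j(\lambda)$. Partitions $\mu$ with $P_{j-1}(\mu)=0$ contribute nothing to either the proposal sum or to $S_j$, so this case is harmless.
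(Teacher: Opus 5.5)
Your proposal is correct and follows the paper's proof essentially step for step. Cauchy--Schwarz on the Murnaghan--Nakayama recursion, together with $|\mathcal R_j(\lambda)|\le K$ from Lemma~\ref{lem:border-strip-count}, gives $A_j(\lambda)\le 1$; the proposal probability $S_j(\lambda)/(K z_{\nu_{j-1}})$ makes $S_j$ cancel; and column orthogonality supplies the normalization $z_{\nu_j}/(K^2 z_{\nu_{j-1}})$. The only point left implicit in both versions is that a $\lambda$ with $S_j(\lambda)=0$, which is never proposed, has $\chi^\lambda(\nu_j)=0$ by that same Cauchy--Schwarz bound, so summing over all $\lambda\vdash n_j$ in the orthogonality relation is legitimate.
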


\begin{proof}
Lemma~\ref{lem:border-strip-count} gives $|\mathcal R_j(\lambda)|\leq K$,
and Cauchy-Schwarz applied to Eq.~\eqref{eq:almost-equal-MN} gives
\begin{equation}
  [\chi^\lambda(\nu_j)]^2
  \leq
  |\mathcal R_j(\lambda)|
  \sum_{\mu'\in\mathcal R_j(\lambda)}
  [\chi^{\mu'}(\nu_{j-1})]^2
  \leq
  K S_j(\lambda)\,,
  \label{eq:almost-equal-CS}
\end{equation}
so $0\leq A_j(\lambda)\leq1$.

Before the acceptance test, the probability of proposing $\lambda$ is
\begin{equation}
  q_j(\lambda)
  =
  \frac{S_j(\lambda)}
       {Kz_{\nu_{j-1}}}\,,
  \label{eq:almost-equal-proposal}
\end{equation}
since each $\mu'\in\mathcal R_j(\lambda)$ is sampled with probability
$[\chi^{\mu'}(\nu_{j-1})]^2/z_{\nu_{j-1}}$ and its unique label leading to
$\lambda$ is chosen with probability $1/K$. It follows that
\begin{equation}
  q_j(\lambda)A_j(\lambda)
  =
  \frac{[\chi^\lambda(\nu_j)]^2}
       {K^2z_{\nu_{j-1}}}\,.
  \label{eq:almost-equal-accepted-mass}
\end{equation}
Column orthogonality gives $\sum_{\lambda\vdash n_j}[\chi^\lambda(\nu_j)]^2
=z_{\nu_j}$, so the total probability of acceptance in one trial is
\begin{equation}
  \frac{z_{\nu_j}}
       {K^2z_{\nu_{j-1}}}\,.
  \label{eq:almost-equal-total-acceptance}
\end{equation}
Dividing Eq.~\eqref{eq:almost-equal-accepted-mass} by
Eq.~\eqref{eq:almost-equal-total-acceptance} gives
Eq.~\eqref{eq:almost-equal-updated-distribution}.
\end{proof}

\begin{proof}[Proof of Theorem~\ref{thm:almost-equal-classical}]
Theorem~\ref{thm:equal-cycle-classical} supplies the exact sampler for
$P_0$. Apply Lemma~\ref{lem:almost-equal-update} successively for
$j=1,\ldots,b$. Induction shows that the final output has distribution
$P_b(\lambda)=[\chi^\lambda(\nu_b)]^2/z_{\nu_b}$.

It remains to bound the runtime. Write $\operatorname{poly}(n)$ for a fixed
polynomial in $n$, independent of $b$, that bounds every elementary step
below: listing the at most $K=n^2$ border-strip moves of
Lemma~\ref{lem:border-strip-count}, computing an $s$-core and $s$-quotient,
evaluating the hook-length formula, and arithmetic on integers with
$O(n\log n)$ bits.

Let $C_j$ be the time to compute $\chi^\lambda(\nu_j)$ exactly for one
partition $\lambda\vdash n_j$. For $j=0$ the value is the signed
uniform-cycle character. Starting from the Young diagram of $\lambda$, list
all border strips of length $s$ whose removal leaves a valid Young diagram,
and remove any one of them. Repeat until the diagram is empty. When the
$s$-core is empty, such a removal is available at every step, and the product
of the Murnaghan-Nakayama signs does not depend on which valid strips are
chosen \cite[Section~2.7]{JamesKerber1981}. There are $m\leq n$ removals,
each listed in polynomial time by Lemma~\ref{lem:border-strip-count}, and the
magnitude is given by Eq.~\eqref{eq:equal-cycle-character-factorization}.
Hence $C_0=\operatorname{poly}(n)$. For $j\geq1$, the recursion
\eqref{eq:almost-equal-MN} has at most $K$ terms, so
\begin{equation}
  C_j
  \leq
  K\,C_{j-1}+\operatorname{poly}(n)\,,
  \qquad\text{hence}\qquad
  C_j
  \leq
  (j+1)\,K^{j}\operatorname{poly}(n)
  =
  n^{2j+O(1)}\,.
  \label{eq:almost-equal-eval-cost}
\end{equation}
The centralizer $z_{\nu_j}$ is given explicitly by the cycle multiplicities,
so $P_{\nu_j}(\lambda)$ is computable within the same bound. Taking $j=b$
proves the evaluation claim.

A trial at step $j$ draws one sample from $P_{j-1}$, chooses one of the $K$
labels, and evaluates the acceptance probability
\eqref{eq:almost-equal-acceptance}. The latter requires
$\chi^\lambda(\nu_j)$, at cost $C_j$, and $S_j(\lambda)$, which is a sum of
at most $K$ characters at level $j-1$, at cost $K\,C_{j-1}\leq C_j$. All
quantities in the acceptance test are integers with $O(n\log n)$ bits, up to
$O(b\log n)$ further bits from the intermediate sums, so the rational
probability, and the uniform choice among the $K$ labels, can be sampled
exactly from random bits in expected time polynomial in this bit length.
Writing $T_j$ for the expected time to produce one sample from $P_j$, a
trial at step $j$ therefore costs at most $T_{j-1}+2C_j+\operatorname{poly}(n)$.

Let $c_j$ be the multiplicity of the part $r_j$ in $\nu_{j-1}$. From
$z_\nu=\prod_{r\geq1}r^{m_r}m_r!$,
\begin{equation}
  \frac{z_{\nu_j}}{z_{\nu_{j-1}}}
  =
  r_j(c_j+1)
  \geq 1\,,
  \label{eq:almost-equal-z-ratio}
\end{equation}
so by Eq.~\eqref{eq:almost-equal-total-acceptance} a trial at step $j$ is
accepted with probability $r_j(c_j+1)/K^2\geq K^{-2}$. The number of trials
is geometric with this success probability, and its expectation is at most
$K^2$. Since the trials are independent of one another, Wald's identity
gives
\begin{equation}
  T_j
  \leq
  K^2\big(T_{j-1}+2C_j+\operatorname{poly}(n)\big)\,,
  \qquad
  T_0=\operatorname{poly}(n)\,.
  \label{eq:almost-equal-recursion}
\end{equation}
Substituting Eq.~\eqref{eq:almost-equal-eval-cost} and iterating gives
\begin{align}
  T_b
  &\leq
  K^{2b}\,T_0
  +
  \sum_{j=1}^{b}
  K^{2(b-j+1)}\big(2C_j+\operatorname{poly}(n)\big)
  \nonumber\\
  &\leq
  K^{2b}\operatorname{poly}(n)
  +
  \sum_{j=1}^{b}
  K^{2b-2j+2}\,(j+1)\,K^{j}\operatorname{poly}(n)
  \leq
  (b+1)^2\,K^{2b+1}\operatorname{poly}(n)\,,
\end{align}
where the last step uses the fact that the exponent $2b-j+2$ is largest at $j=1$.
Since $K=n^2$ and $b\leq n$, this is $T_b=n^{4b+O(1)}$.
\end{proof}

Theorem~\ref{thm:almost-equal-classical} shows that a constant number of exceptional cycles does not produce a classically hard sampling problem. Any candidate family for quantum advantage built from perturbations of uniform cycles must, therefore, have a number of exceptional cycles that grows with $n$.

\section{Discussion and Outlook}
We have given a Hamiltonian-based algorithm for character state preparation, and extended it coherently to realize the quantum character transform over $S_n$. 
Our construction replaces the existing quantum Fourier transform-based approach~\cite{BravyiGossetHavlicekSchatzki2025} by a sequence of transfer evolutions under the Hamiltonians $H_s$, each run for an exactly known time.
Simulated using QSVT, the resulting circuit uses $\widetilde O(n^{5/2} \log \delta^{-1})$ gates and $O(n)$ qubits for CSP at any cycle type and for the full QCT. 
This is below the $\widetilde O(n^3)$ gates and $\widetilde O(n^{3/2})$ qubits of the best known implementation~\cite{Bruinsma2026} of the quantum Fourier transform over $S_n$.

Towards near-term hardware-friendly realizations, we also consider implementing the Hamiltonian evolution subroutines of our algorithm by Trotter decomposition. Using fast fermionic permutations with reconfigurable qubits~\cite{Maskara_2025}, the Hamiltonian evolution can be implemented with geometrically local gates in 1D or 2D geometry.
While the general operator-norm Trotter error bound indicates a gate complexity of $\widetilde O(n^4/\delta)$, we improved the bound by focusing on the states that our algorithm actually visits.
Looking at the leading-order Trotter error, our CSP algorithm has a sufficient gate complexity of $\widetilde O(n^{7/4}/\delta)$ for uniform cycles with the $\Theta(\sqrt n/\log n)$ cycle count that the hardness argument of Bravyi \emph{et al.}\cite{BravyiGossetHavlicekSchatzki2025} requires for the granularity condition. 
To better understand classical hardness, we showed that such uniform cycle types, as well as their perturbations by a constant number of exceptional cycles, admit exact polynomial-time classical samplers despite satisfying the granularity condition.
Nevertheless, when the number of exceptional cycles grows with $n$, or for generic cycle types with $\Theta(\sqrt n/\log n)$ cycles, no efficient classical algorithms are known, and quantum advantage is still possible.
Our numerical simulations up to $n=48$ suggest that the Trotter method remains efficient for these general cycle types, and generally have far better resource scaling than what our bounds predict.
These results motivate further study of the Trotter-based CSP, which we put forth as a candidate quantum advantage task on near-term reconfigurable-qubit quantum hardware.

Several interesting questions remain open:
\begin{itemize}
    \item Can the full Trotter error on the states visited by the algorithm
    be bounded, beyond the leading-order term? Our state-dependent bound
    controls only the error at order in $1/M$, where $M$ is the number of Trotter steps.
    A bound on the full error would turn our $\widetilde O(n^{7/4}/\delta)$
    estimate into a rigorous complexity bound. It would also be valuable to
    extend the analysis to cycle types with a growing number of exceptional
    cycles, where known classical algorithms become inefficient. For these
    cycle types, our numerics show smaller costs than both the worst-case
    bounds and the estimated bounds from the state-dependent error analysis, suggesting there is much room for improvements.
    
    \item Can our Hamiltonian transfer mechanism also prepare the Kostka and
    Littlewood-Richardson states described in Crichigno and
    Prakash~\cite{CrichignoPrakash2024}? These quantum spin-chain states encode
    Kostka numbers and Littlewood--Richardson coefficients as amplitudes, with
    the currents replaced by noncommutative complete symmetric polynomials and
    noncommutative Schur polynomials. Bravyi \emph{et al.}\cite{BravyiGossetHavlicekSchatzki2025} turned the complete symmetric operators into a classical MPS algorithm for Kostka numbers and left the Littlewood-Richardson case open. Neither work gives a quantum algorithm for preparing these states. Larocca and Havl\'i\v{c}ek~\cite{LaroccaHavlicek2025} compute Kostka, Littlewood-Richardson, plethysm and Kronecker coefficients with QFT-based quantum algorithms whenever the ratio of the relevant representation dimensions is polynomial. In this regime, they give an efficient classical algorithm for Kostka numbers and conjecture one for Littlewood-Richardson coefficients. For Kronecker coefficients, a classical algorithm of Panova~\cite{Panova2025} reduces the conjectured advantage to a polynomial gap. Exact Hamiltonian transfers for these nonunitary operators would give a QFT-free route to the same quantities.
    
    \item Can we generalize our CSP and QCT algorithms to other
    finite groups? In particular, QCT for general finite groups has applications in symmetry-adapted quantum simulations~\cite{Bastidas_2025}. The transfer mechanism (Lemma~\ref{lem:transfer}) can be applied to any family of linear operators with the ladder property of
    Proposition~\ref{prop:ladder}, so groups whose characters obey a
    Murnaghan-Nakayama-type rule, such as the wreath products
   $\mathbb Z_k\wr S_n$, are natural candidates.

	\item Can our Hamiltonian-based approach for the
    QCT be extended to realize the full quantum Fourier transform over $S_n$? The QCT is the restriction of the QFT to class functions, with the representation-matrix
    indices traced out. It would be useful to investigate whether the Hamiltonian approach can
    reach the remaining $O(n\log n)$ qubits of the Fourier basis at a cost
    below the $\widetilde O(n^3)$ gates in the state-of-the-art QFT circuits~\cite{Bruinsma2026}.
    
\end{itemize}

\section*{Acknowledgments}
We thank Jason Cong, David Gosset, and Yuan Liu for helpful discussions. 
DR is supported by the Mani L. Bhaumik Institute for Theoretical Physics. 
LZ is supported by the Google Research Scholar Award and Department of Navy Award No. N000142612399 issued by the Office of Naval Research.

\subsection*{Use of generative AI}
OpenAI GPT-5.4 Pro, accessed through ChatGPT in April 2026, was used to
explore approaches to the state-dependent Trotter error bound
(Section~\ref{sec:state-dependent} and Appendix~\ref{app:state-dependent}).
Each step of the final proof was derived and verified independently by the
authors. Anthropic Claude Opus 4.8, accessed through Claude Code in June 2026,
was used to check the QSVT-based implementation of the quantum character
transform (Section~\ref{sec:qct-cost}). Anthropic Claude Fable 5, accessed
through Claude Code in August 2026, was used to explore the classical sampling
algorithms for uniform and almost-uniform cycle types
(Section~\ref{sec:classical-hardness}) and to code the numerical simulations
of the Trotterized CSP circuit for uniform cycle types and their perturbations
(Section~\ref{sec:numerics}). Each step of these algorithms and their proofs
was verified independently by the authors, and the simulation code was checked
against a brute-force simulation on the full Fock space for $n=4$ and $n=5$.


\begin{appendices}

\section{Proof of the eigenvalue condition}
\label{app:eigenvalue-proof}

This appendix proves Proposition~\ref{prop:ladder} and the normalization in
Eq.~\eqref{eq:ladder-states}. The proof has three steps. First, we introduce
a displacement operator, which measures how far the fermions have hopped
relative to the seed state $\seed=\ket{1^n0^n}$, and show that each intermediate state of the algorithm has a
fixed displacement (Lemma~\ref{lem:displacement}). Second, we show that a
state of small displacement has its lowest $s$ sites occupied and its
highest $s$ sites empty (Lemma~\ref{lem:frozen}). Third, on such states the
commutators of the current operators take a simple form
(Lemma~\ref{lem:comm-frozen}), and the ladder property follows by induction.

Define the displacement operator
\begin{equation}
  \widehat{\mathsf D}
  :=
  \sum_{j=1}^{2n}j\hat n_j
  -\frac{n(n+1)}{2}\,\mathbb{1}\,.
\end{equation}
For every $s\in\{1,\dots,n\}$,
\begin{equation}
  [\widehat{\mathsf D},J_s]=sJ_s\,,
  \qquad
  [\widehat{\mathsf D},J_s^\dagger]=-sJ_s^\dagger\,.
  \label{eq:displacement-comm}
\end{equation}
Write $\ket{\phi_k} := J_{\nu_k}\cdots J_{\nu_1}\seed$ for the (unnormalized)
state after depositing the first $k$ currents, so $\ket{\phi_0}=\seed$.

\begin{lemma}[Displacement of intermediate states]
\label{lem:displacement}
Let $S_k=\sum_{j=1}^k\nu_j$. Then
\begin{equation}
  \widehat{\mathsf D}|\phi_k\rangle
  =S_k|\phi_k\rangle\,.
\end{equation}
Equivalently, every occupation-basis state in the support of $|\phi_k\rangle$ has total displacement $S_k$ from the seed configuration.
\end{lemma}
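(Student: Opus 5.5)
The plan is an induction on $k$, driven entirely by the commutation relation \eqref{eq:displacement-comm}. I will first verify \eqref{eq:displacement-comm} itself, since the lemma reduces to it. For a single hopping term I compute $[\hat n_j, a_{i+s}^\dagger a_i] = (\delta_{j,i+s}-\delta_{j,i})\,a_{i+s}^\dagger a_i$. This follows from the canonical anticommutation relations via $[\hat n_j,a_l^\dagger]=\delta_{jl}a_l^\dagger$ and $[\hat n_j,a_l]=-\delta_{jl}a_l$.

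Weighting by $j$ and summing gives
\begin{equation}
  \Big[\sum_j j\hat n_j,\; a_{i+s}^\dagger a_i\Big]=\big((i+s)-i\big)\,a_{i+s}^\dagger a_i = s\,a_{i+s}^\dagger a_i .
\end{equation}
The constant shift $-\tfrac{n(n+1)}{2}\mathbb 1$ commutes with everything. Summing over $i$ then yields $[\widehat{\mathsf D},J_s]=sJ_s$, and taking adjoints gives the relation for $J_s^\dagger$, using that $\widehat{\mathsf D}$ is Hermitian.

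For the base case $k=0$, the seed $\ket{1^n0^n}$ has occupied modes $1,\dots,n$. So $\widehat{\mathsf D}\seed = \big(\sum_{j=1}^n j - n(n+1)/2\big)\seed = 0 = S_0\seed$. For the inductive step, I assume $\widehat{\mathsf D}\ket{\phi_{k-1}}=S_{k-1}\ket{\phi_{k-1}}$ and write $\widehat{\mathsf D}J_{\nu_k} = J_{\nu_k}\widehat{\mathsf D} + \nu_k J_{\nu_k}$. Applying this to $\ket{\phi_{k-1}}$ gives $\widehat{\mathsf D}\ket{\phi_k} = (S_{k-1}+\nu_k)\ket{\phi_k} = S_k\ket{\phi_k}$. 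This holds trivially even if $\ket{\phi_k}$ happens to vanish.

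The equivalent formulation follows because $\widehat{\mathsf D}$ is diagonal in the occupation basis. Its eigenvalue on $\ket{p_1<\dots<p_n}$ is $\sum_r p_r - n(n+1)/2 = \sum_r (p_r - r)$, which is the total displacement from the seed. An eigenvector with eigenvalue $S_k$ must therefore be supported only on basis states with that eigenvalue.

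The only step needing any care is the fermionic commutator. Here one should note that the Jordan–Wigner strings are irrelevant, since $\hat n_j$ is quadratic and the computation stays within the mode operators.
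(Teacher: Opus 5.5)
Your proof is correct and follows the paper's argument: induction on $k$, with base case $\widehat{\mathsf D}\seed=0$ and inductive step $\widehat{\mathsf D}J_{\nu_k}=J_{\nu_k}\widehat{\mathsf D}+\nu_kJ_{\nu_k}$ from Eq.~\eqref{eq:displacement-comm}. Your derivation of that commutation relation from $[\hat n_j,a_l^\dagger]=\delta_{jl}a_l^\dagger$, and your justification of the support reformulation via the diagonal action of $\widehat{\mathsf D}$ in the occupation basis, fill in details the paper only asserts.
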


\begin{proof}
The claim follows by induction using Eq.~\eqref{eq:displacement-comm}. The base
case is $\widehat{\mathsf D}\seed=0$. If the claim holds for $k-1$, then
\begin{align}
\widehat{\mathsf D}J_{\nu_k}|\phi_{k-1}\rangle
&=
(J_{\nu_k}\widehat{\mathsf D}+\nu_kJ_{\nu_k})|\phi_{k-1}\rangle
\nonumber \\
&=(S_{k-1}+\nu_k)|\phi_k\rangle\,.
\end{align}
\end{proof}

For an occupation-basis state $\ket{p_1<\cdots<p_n}$,
\begin{equation}
  \widehat{\mathsf D}\ket{p_1<\cdots<p_n}
  =
  \left(
    \sum_{r=1}^{n}(p_r-r)
  \right)
  \ket{p_1<\cdots<p_n}\,.
\end{equation}
Thus, $\widehat{\mathsf D}$ measures how far the occupied sites have moved
from their positions $1,\ldots,n$ in the seed state. Define
$d_r:=p_r-r$. Since $p_{r+1}\geq p_r+1$, the sequence
$d_1\leq\cdots\leq d_n$ is nondecreasing.

\begin{definition}\label{def:s-state}
An $n$-particle state $|\psi\rangle$ on $2n$ sites is called an \emph{$s$-state} if every
occupation-basis state in its support has sites $1,\dots,s$ occupied and sites
$2n-s+1,\dots,2n$ empty.
\end{definition}

\begin{lemma}
\label{lem:frozen}
If every occupation-basis state in the support of $\ket{\psi}$ satisfies
\begin{equation}
  \sum_{r=1}^{n}d_r\leq n-s\,,
\end{equation}
then $\ket{\psi}$ is an $s$-state.
\end{lemma}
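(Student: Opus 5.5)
Since the hypothesis and the conclusion are both statements about each occupation-basis configuration in the support of $\ket{\psi}$, it suffices to fix one configuration $\ket{p_1<\cdots<p_n}$ with $d_r=p_r-r$ and $\sum_r d_r\le n-s$, and show that sites $1,\dots,s$ are occupied and sites $2n-s+1,\dots,2n$ are empty. The tool throughout is that $d_1\le\cdots\le d_n$ is nondecreasing and every $d_r\ge 0$. The latter holds because $p_r\ge r$: the $p_r$ are $r$ distinct integers at least $1$.

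\emph{Low sites occupied.} Suppose some site $k\le s$ is empty. Then at most $k-1$ particles sit below $k$, so $p_r\ge r+1$, i.e.\ $d_r\ge1$, for every $r\ge k$. By monotonicity this gives
\[
\sum_r d_r\ \ge\ n-k+1\ \ge\ n-s+1,
\]
contradicting the hypothesis. Hence sites $1,\dots,s$ are all occupied, equivalently $p_r=r$ for $r\le s$.

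\emph{High sites empty.} Suppose some site $k\ge 2n-s+1$ is occupied, and let $p_n\ge k$ be the top particle. Then
\[
d_n=p_n-n\ \ge\ n-s+1,
\]
and since all $d_r\ge0$ we again get $\sum_r d_r\ge n-s+1$, a contradiction. Hence sites $2n-s+1,\dots,2n$ are all empty.

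Combining the two parts, every configuration in the support is of the required form, so $\ket{\psi}$ is an $s$-state. There is no real obstacle here. The only care needed is the counting in the first part: an empty site $k$ forces every particle of index $r\ge k$ to shift up by at least one, which yields $n-k+1$ unit contributions to the displacement.
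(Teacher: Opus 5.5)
Your proof is correct and follows essentially the same argument as the paper. An empty low site $k\le s$ forces $d_r\ge1$ for all $r\ge k$, contradicting the bound, and nonnegativity of the $d_r$ gives $d_n\le n-s$, so $p_n\le 2n-s$. You phrase the second part as a contradiction where the paper argues directly, and you also justify $d_r\ge0$ explicitly, which the paper only asserts.
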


\begin{proof}
Suppose some site $j\leq s$ is empty. Then $p_j\geq j+1$, so
$d_j\geq1$. Since the $d_r$ are nondecreasing, $d_r\geq1$ for every
$r\geq j$. Therefore,
\begin{equation}
  \sum_{r=1}^{n}d_r
  \geq
  \sum_{r=j}^{n}d_r
  \geq
  n-j+1
  \geq
  n-s+1\,,
\end{equation}
which contradicts the assumption.

Thus, sites $1,\ldots,s$ are occupied. Moreover, $d_r\geq0$ for every
$r$, so
\begin{equation}
  d_n
  \leq
  \sum_{r=1}^{n}d_r
  \leq
  n-s\,.
\end{equation}
It follows that
\begin{equation}
  p_n=n+d_n\leq2n-s\,,
\end{equation}
so sites $2n-s+1,\ldots,2n$ are empty. Hence, $\ket{\psi}$ is an
$s$-state.
\end{proof}

\begin{lemma}
\label{lem:prefix-is-s-state}
For $s=\nu_k$, the state $|\phi_{k-1}\rangle$ is an $s$-state.
\end{lemma}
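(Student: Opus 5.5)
The plan is to combine Lemma~\ref{lem:displacement} with Lemma~\ref{lem:frozen}. By Lemma~\ref{lem:displacement}, $\ket{\phi_{k-1}}$ is an eigenvector of $\widehat{\mathsf D}$ with eigenvalue $S_{k-1}=\sum_{j=1}^{k-1}\nu_j$. Equivalently, every occupation-basis state in its support satisfies $\sum_r d_r = S_{k-1}$. If $\ket{\phi_{k-1}}=0$, the claim holds vacuously, since the support is empty. Otherwise it suffices to check the hypothesis of Lemma~\ref{lem:frozen} with $s=\nu_k$, namely $S_{k-1}\le n-\nu_k$.

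That inequality is the only real content, and it is immediate from the cycle-type constraint. The first $k$ parts of $\nu$ are a sub-multiset of the parts of a partition of $n$, and all parts are positive. Hence
\begin{equation}
S_{k-1}+\nu_k=S_k\le \sum_{j=1}^{\ell}\nu_j=n,
\end{equation}
so $S_{k-1}\le n-\nu_k$. Lemma~\ref{lem:frozen} then says every basis state in the support has sites $1,\dots,\nu_k$ occupied and sites $2n-\nu_k+1,\dots,2n$ empty. This is exactly the definition of a $\nu_k$-state.

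I do not expect a genuine obstacle. Two points need care. First, the ordering of deposits: Lemma~\ref{lem:displacement} is stated for the order $J_{\nu_k}\cdots J_{\nu_1}$. The bound $S_k\le n$ holds for any ordering, and the currents commute by \eqref{eq:Jcommute}, so the argument does not depend on the nonincreasing order used in the algorithm. Second, the hypothesis of Lemma~\ref{lem:frozen} must hold for each basis state in the support and not only on average. This is guaranteed because $\widehat{\mathsf D}$ is diagonal in the occupation basis, so an eigenvector of $\widehat{\mathsf D}$ is supported only on basis states with that exact eigenvalue.
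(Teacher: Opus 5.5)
Your proof is correct and is the same as the paper's: you combine Lemma~\ref{lem:displacement} with Lemma~\ref{lem:frozen} via $S_{k-1}+\nu_k=S_k\le n$. You also add two accurate points the paper leaves implicit: this inequality needs only that the parts are positive, not the nonincreasing order the paper cites, and the displacement condition holds for each basis state in the support, as Lemma~\ref{lem:frozen} requires.
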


\begin{proof}
By Lemma~\ref{lem:displacement}, $|\phi_{k-1}\rangle$ has displacement $S_{k-1}$. Since $\nu$ is nonincreasing,
\begin{align}
  S_{k-1}+s=S_k\leq n\,.
\end{align}
Thus $S_{k-1}\leq n-s$, and Lemma~\ref{lem:frozen} applies.
\end{proof}

\begin{lemma}
\label{lem:comm-same}
For $s\in\{1,\dots,n\}$,
\begin{equation}
  [J_s,J_s^\dagger]
  =
  \sum_{j=s+1}^{2n}\hat n_j
  -
  \sum_{j=1}^{2n-s}\hat n_j\,.
  \label{eq:same-comm}
\end{equation}
\end{lemma}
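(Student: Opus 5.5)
We compute the commutator directly from the canonical anticommutation relations. Write $J_s=\sum_{i=1}^{2n-s}a_{i+s}^\dagger a_i$ and $J_s^\dagger=\sum_{k=1}^{2n-s}a_k^\dagger a_{k+s}$. The commutator is then a double sum of commutators of quadratic (hopping) operators, each of which we can evaluate.

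The key tool is the standard identity for bilinears,
\begin{equation*}
[a_p^\dagger a_q,\,a_r^\dagger a_t]=\delta_{qr}\,a_p^\dagger a_t-\delta_{pt}\,a_r^\dagger a_q\,,
\end{equation*}
which follows from $\{a_q,a_r^\dagger\}=\delta_{qr}$. Since the operators are bilinear, no Jordan--Wigner signs appear. Apply it with $p=i+s$, $q=i$, $r=k$, $t=k+s$.

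The first term has $\delta_{ik}$ and gives $a_{i+s}^\dagger a_{i+s}=\hat n_{i+s}$. The second term has $\delta_{i+s,k+s}=\delta_{ik}$ and gives $-a_i^\dagger a_i=-\hat n_i$. All off-diagonal pairs $i\neq k$ therefore cancel identically, and
\begin{equation*}
[J_s,J_s^\dagger]=\sum_{i=1}^{2n-s}\bigl(\hat n_{i+s}-\hat n_i\bigr)\,.
\end{equation*}
Reindexing the first sum with $j=i+s$ runs it over $j=s+1,\dots,2n$, while the second sum runs over $j=1,\dots,2n-s$. This is exactly \eqref{eq:same-comm}.

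As a consistency check, when $s\le n$ the two ranges overlap on $s+1,\dots,2n-s$ and cancel there. What remains is $\sum_{j=2n-s+1}^{2n}\hat n_j-\sum_{j=1}^{s}\hat n_j$, which is minus \eqref{eq:comm-boundary}, as it should be. If $2n-s<s+1$ the ranges do not overlap, but the formula is still valid as written.

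There is no real obstacle in this computation. The only place care is needed is the sign and index bookkeeping in the bilinear commutator identity, and confirming that the boundary summation ranges are right.
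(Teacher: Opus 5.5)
Your proof is correct and follows the same route as the paper: expand $[J_s,J_s^\dagger]$ as a double sum of bilinear commutators, evaluate each one with the canonical anticommutation relations so that only the diagonal terms $i=k$ survive, and reindex. One small wording point: the terms with $i\neq k$ do not ``cancel''; each of them vanishes on its own, because both Kronecker deltas are zero.
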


\begin{proof}
Using the canonical anticommutation relations,
\begin{align}
[J_s,J_s^\dagger]
&=
\sum_{i,j=1}^{2n-s}
[a_{i+s}^\dagger a_i,a_j^\dagger a_{j+s}] =
\sum_{j=1}^{2n-s}
(a_{j+s}^\dagger a_{j+s}-a_j^\dagger a_j)\,,
\end{align}
which is Eq.~\eqref{eq:same-comm}.
\end{proof}

\begin{lemma}
\label{lem:comm-diff}
For $u>s$,
\begin{equation}
  [J_u,J_s^\dagger]
  =
  \sum_{j=s+1}^{2n-u+s}a_{j+u-s}^\dagger a_j
  -
  \sum_{i=1}^{2n-u}a_{i+u-s}^\dagger a_i\,.
  \label{eq:diff-comm}
\end{equation}
\end{lemma}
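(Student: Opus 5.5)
The plan is to expand both currents into bilinears and apply the same elementary commutator identity used in the proof of Lemma~\ref{lem:comm-same}, now with $u\neq s$, so the two Kronecker deltas select different index ranges. Write
\begin{equation}
  J_u=\sum_{i=1}^{2n-u}a_{i+u}^\dagger a_i\,,\qquad
  J_s^\dagger=\sum_{j=1}^{2n-s}a_j^\dagger a_{j+s}\,,
\end{equation}
and use the identity, valid for fermionic bilinears under the canonical anticommutation relations,
\begin{equation}
  [a_p^\dagger a_q,\,a_r^\dagger a_t]=\delta_{qr}\,a_p^\dagger a_t-\delta_{pt}\,a_r^\dagger a_q\,.
\end{equation}
Bilinears are bosonic, so no Jordan--Wigner signs appear and the computation stays at the level of mode indices. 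By bilinearity, $[J_u,J_s^\dagger]$ is a double sum over $(i,j)$ of this identity with $p=i+u$, $q=i$, $r=j$, $t=j+s$.

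For the first delta, $\delta_{ij}$ collapses the double sum to $\sum_i a_{i+u}^\dagger a_{i+s}$. The index $i$ must lie in both ranges, $1\le i\le\min(2n-u,2n-s)=2n-u$, since $u>s$. Relabel $j:=i+s$. The sum then becomes $\sum_{j=s+1}^{2n-u+s}a_{j+u-s}^\dagger a_j$, which is the positive term in \eqref{eq:diff-comm}.

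For the second delta, $\delta_{i+u,\,j+s}$ forces $j=i+u-s$ and gives $-\sum_i a_{i+u-s}^\dagger a_i$. The only step that needs care is the range check: $j=i+u-s$ must satisfy $1\le j\le 2n-s$. The lower bound holds automatically because $u>s$ and $i\ge1$. The upper bound is equivalent to $i\le 2n-u$, which is already the range of $i$. Hence the full range $i=1,\dots,2n-u$ survives, and this reproduces the negative term.

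There is no real obstacle. The only place an error could slip in is the bookkeeping of the boundary indices, and that is exactly where the open chain makes the result differ from a periodic one. I would therefore verify both ranges explicitly as above. I would also sanity-check the formula against Lemma~\ref{lem:comm-same} by formally setting $u=s$, where the two sums reduce to the number operators in \eqref{eq:same-comm}.
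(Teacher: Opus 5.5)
Your proof is correct and follows the paper's argument: expand both currents into a double sum of bilinears, apply the canonical anticommutation relations so the two Kronecker deltas collapse the sum, and reindex the first term with $j=i+s$. Your explicit range checks are the bookkeeping the paper's one-line proof leaves implicit; in particular, $j=i+u-s\le 2n-s$ is equivalent to $i\le 2n-u$, so the restriction the paper writes under its second sum is automatically satisfied.
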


\begin{proof}
Again, using the canonical anticommutation relations,
\begin{align}
[J_u,J_s^\dagger]
&=
\sum_{i=1}^{2n-u}\sum_{j=1}^{2n-s}
[a_{i+u}^\dagger a_i,a_j^\dagger a_{j+s}] =
\sum_{i=1}^{2n-u}a_{i+u}^\dagger a_{i+s}
-
\sum_{\substack{i=1\\ i+u-s\leq 2n-s}}^{2n-u}a_{i+u-s}^\dagger a_i\,,
\end{align}
which gives Eq.~\eqref{eq:diff-comm} after reindexing.
\end{proof}

\begin{lemma}
\label{lem:comm-frozen}
The following hold.
\begin{enumerate}[label=(\roman*)]
\item If $|\psi\rangle$ is an $s$-state, then
\begin{align}
  [J_s^\dagger,J_s]|\psi\rangle=s|\psi\rangle\,.
\end{align}
\item If $u>s$ and $|\psi\rangle$ is a $u$-state, then
\begin{align}
  [J_u,J_s^\dagger]|\psi\rangle=0\,.
\end{align}
\end{enumerate}
\end{lemma}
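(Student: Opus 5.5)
Both parts follow by applying the commutator formulas of Lemma~\ref{lem:comm-same} and Lemma~\ref{lem:comm-diff} to each occupation-basis state in the support of $\ket{\psi}$ and using linearity. The $s$-state condition (resp.\ $u$-state condition) pins down exactly the occupations that the boundary terms of those formulas see.

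For (i), I rewrite Lemma~\ref{lem:comm-same} by cancelling the common middle range $j=s+1,\dots,2n-s$. This uses $s\le n$, so the ranges overlap or abut. The result is
\begin{equation}
  [J_s^\dagger,J_s]=-[J_s,J_s^\dagger]=\sum_{j=1}^{s}\hat n_j-\sum_{j=2n-s+1}^{2n}\hat n_j ,
\end{equation}
which is Eq.~\eqref{eq:comm-boundary}. On an occupation-basis state of an $s$-state, every $\hat n_j$ with $j\le s$ gives $1$ and every $\hat n_j$ with $j\ge 2n-s+1$ gives $0$. The eigenvalue is therefore $s$, and linearity finishes this part.

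For (ii), I take $u>s$ and write $d=u-s\ge1$. The plan is to show that both sums in Eq.~\eqref{eq:diff-comm} act identically on a $u$-state, so that their difference vanishes. The first sum runs over $j=s+1,\dots,2n-u+s$ and the second over $i=1,\dots,2n-u$, and both consist of hops $a_{k+d}^\dagger a_k$. Comparing index sets, the second sum contains the extra terms with $k=1,\dots,s$, and the first contains the extra terms with $k=2n-u+1,\dots,2n-u+s$.

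The extra terms of the second sum annihilate the state when they would move a fermion into an occupied site. For $k\le s$, the target site is $k+d\le s+d=u$, and that site is occupied in a $u$-state. Since the source site $k\le s<u$ is also occupied, $a_{k+d}^\dagger a_k$ gives zero.

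The extra terms of the first sum annihilate the state because their source site is empty. For $2n-u+1\le k\le 2n-u+s$, the source $k$ lies in $\{2n-u+1,\dots,2n\}$, which is empty in a $u$-state, so $a_k$ gives zero.

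The common terms cancel exactly. The difference of the two sums therefore kills every basis state in the support, which gives $[J_u,J_s^\dagger]\ket{\psi}=0$.

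The main obstacle is only the index bookkeeping in (ii). I need to confirm the reindexing of Eq.~\eqref{eq:diff-comm}, that the overlapping range is nonempty, and that the Jordan--Wigner signs play no role. The signs do not matter because we only argue that certain terms vanish and that identical operators cancel. I would double-check the edge case in which the ranges are disjoint, i.e.\ $2n-u<s+1$. There the same argument applies verbatim, with all terms treated as "extra" and each one vanishing.
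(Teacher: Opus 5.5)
Your proposal is correct and follows essentially the same route as the paper. For (i) you reduce Lemma~\ref{lem:comm-same} to the boundary number operators, and for (ii) you cancel the common hops in Eq.~\eqref{eq:diff-comm}. This leaves exactly the paper's two boundary sums, which vanish on a $u$-state because their source sites are empty or their target sites are occupied; your edge-case check for $2n-u\le s$ also goes through.
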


\begin{proof}
For (i), Eq.~\eqref{eq:same-comm} gives
\begin{align}
[J_s^\dagger,J_s]
&=
\sum_{j=1}^{2n-s}\hat n_j-\sum_{j=s+1}^{2n}\hat n_j \nonumber \\
&=
\sum_{j=1}^{s}\hat n_j-\sum_{j=2n-s+1}^{2n}\hat n_j.
\end{align}
On an $s$-state this equals $s$.

For (ii), rewrite Eq.~\eqref{eq:diff-comm} as
\begin{align}
[J_u,J_s^\dagger]
&=
\sum_{j=2n-u+1}^{2n-u+s}a_{j+u-s}^\dagger a_j
-
\sum_{i=1}^{s}a_{i+u-s}^\dagger a_i\,.
\end{align}
On a $u$-state, the first sum annihilates the state because the annihilation
operators act on empty sites in $\{2n-u+1,\dots,2n\}$. The second sum
annihilates the state because $i+u-s\in\{1,\dots,u\}$ is already occupied, so
the creation operator vanishes.
\end{proof}

\begin{proof}[Proof of Proposition~\ref{prop:ladder}]
Fix $k$ and set $s=\nu_k$. Let $r_k$ be the number of earlier factors equal to
$s$. Since $\nu$ is nonincreasing, the
factors equal to $s$ among $\nu_1,\dots,\nu_{k-1}$ occur after all parts larger
than $s$, so
\begin{align}
  J_{\nu_{k-1}}\cdots J_{\nu_1}=J_s^{r_k}B\,,
\end{align}
where $B=J_{u_t}\cdots J_{u_1}$ with every $u_j>s$.
In the notation of Section~\ref{sec:ladder},
\begin{equation}
  \ket{\Phi^{>s}}
  =
  \frac{B\seed}{\|B\seed\|}\,.
\end{equation}

First, $J_s^\dagger B\seed=0$. Commute $J_s^\dagger$ through $B$ toward the
seed. Write $B_{<j}=J_{u_{j-1}}\cdots J_{u_1}$ for the partial
product. By Lemma~\ref{lem:displacement}, $B_{<j}\seed$ has displacement
$\sum_{i<j}u_i$. Since $\ket{\phi_{k-1}}$ deposits total length
$S_{k-1}\le n-s< n$ and the $u_i$ are among its factors,
$\sum_{i\le j}u_i\le n$, hence $\sum_{i<j}u_i\le n-u_j$. By
Lemma~\ref{lem:frozen}, $B_{<j}\seed$ is therefore a $u_j$-state, and
Lemma~\ref{lem:comm-frozen}(ii) gives $[J_{u_j},J_s^\dagger]=0$ on it. Passing
$J_s^\dagger$ through every factor of $B$ leaves
$J_s^\dagger B\seed=B J_s^\dagger\seed=0$, because every term in
$J_s^\dagger=\sum_i a_i^\dagger a_{i+s}$ annihilates $\seed$.

Second, the states $J_s^qB\seed$ are $s$-states for $0\leq q\leq r_k$, because
their displacement is at most $\sum_{j=1}^{k-1}\nu_j\leq n-s$, so
Lemma~\ref{lem:frozen} applies.

Let $|\psi_q\rangle=J_s^qB\seed$. Since $J_s^\dagger B\seed=0$,
\begin{align}
  J_s^\dagger|\psi_{q+1}\rangle
  &=
  [J_s^\dagger,J_s^{q+1}]B\seed \nonumber \\
  &=
  \sum_{r=0}^{q}J_s^r[J_s^\dagger,J_s]J_s^{q-r}B\seed\,.
\end{align}
Each state $J_s^{q-r}B\seed$ is an $s$-state, so
Lemma~\ref{lem:comm-frozen}(i) gives
\begin{align}
  J_s^\dagger|\psi_{q+1}\rangle=(q+1)s|\psi_q\rangle\,.
\end{align}
Since $|\psi_{q+1}\rangle=J_s|\psi_q\rangle$, this also gives
\begin{equation}
  \|\psi_{q+1}\|^2
  =
  (q+1)s\,\|\psi_q\|^2\,.
\end{equation}
Thus
\begin{equation}
  \|J_s^qB\seed\|
  =
  \sqrt{s^q q!}\,\|B\seed\|\,,
\end{equation}
for every $0\leq q\leq r_k+1$. Taking $k$ to be the last occurrence of $s$
proves the normalization in Eq.~\eqref{eq:ladder-states} for all
$0\leq q\leq m_s$.
Taking $q=r_k$ gives
\begin{align}
  J_s^\dagger J_s|\phi_{k-1}\rangle=(r_k+1)s|\phi_{k-1}\rangle\,,
\end{align}
and the norm formula gives
\begin{equation}
  \ket{\Phi_{s,r_k}}
  =
  \frac{\ket{\phi_{k-1}}}{\|\phi_{k-1}\|}\,.
\end{equation}
This is Eq.~\eqref{eq:ladder-eigen} with $j=r_k+1$. Since $k$ can be any
occurrence of the part $s$, it proves the claim for every
$1\leq j\leq m_s$. The two relations in Eq.~\eqref{eq:ladder-raiselower}
then follow from Eq.~\eqref{eq:ladder-states} and the lowering relation above.
\end{proof}

\section{Trotter error: commutator calculations}\label{app:trotter}

\subsection{Commutator norms}\label{app:commutators}

Write $h_{q,a} = \sigma_a^+\otimes A_q + \sigma_a^-\otimes A_q^\dagger$ where
$A_q = \sigma_{q+1}^+ \sigma_q^-$. Using $\sigma_a^+\sigma_a^-=|1\rangle\!\langle 1|_a$,
$\sigma_a^-\sigma_a^+=|0\rangle\!\langle 0|_a$ and $(\sigma_a^\pm)^2 = 0$,
\begin{align}
    [h_{q,a}, h_{q',a}] = |1\rangle\!\langle 1|_a\otimes(A_qA_{q'}^\dagger - A_{q'}A_q^\dagger)
    + |0\rangle\!\langle 0|_a\otimes(A_q^\dagger A_{q'} - A_{q'}^\dagger A_q)\,.
\end{align}

\paragraph{Adjacent bonds ($|q-q'|=1$).}
Take $q' = q+1$. Then
\begin{align}
    A_q A_{q+1}^\dagger = (\sigma_{q+1}^+\sigma_q^-)(\sigma_{q+1}^+\sigma_{q+2}^-)
    = (\sigma_{q+1}^+)^2 \sigma_q^- \sigma_{q+2}^- = 0\,,
\end{align}
and the remaining products vanish the same way, so $[h_{q,a}, h_{q+1,a}] = 0$.

\paragraph{Non-adjacent bonds ($|q-q'| \geq 2$).}
The four sites $\{q, q+1, q', q'+1\}$ are distinct. This includes bonds in
different chains, which act on disjoint site pairs and are treated by the same
argument. The operators act as
\begin{align}
    A_qA_{q'}^\dagger &: \quad |1_q\,0_{q+1}\,0_{q'}\,1_{q'+1}\rangle \to
    |0_q\,1_{q+1}\,1_{q'}\,0_{q'+1}\rangle\,, \nonumber \\
    A_{q'}A_q^\dagger &: \quad |0_q\,1_{q+1}\,1_{q'}\,0_{q'+1}\rangle \to
    |1_q\,0_{q+1}\,0_{q'}\,1_{q'+1}\rangle\,,
\end{align}
with all other basis states annihilated. In the two-dimensional subspace
spanned by $|1_q\,0_{q+1}\,0_{q'}\,1_{q'+1}\rangle$ and
$|0_q\,1_{q+1}\,1_{q'}\,0_{q'+1}\rangle$,
the difference $A_qA_{q'}^\dagger - A_{q'}A_q^\dagger$ is represented by
$\bigl(\begin{smallmatrix} 0 & -1 \\ 1 & 0 \end{smallmatrix}\bigr)$, which has
eigenvalues $\pm i$ and norm $1$. The same holds for $A_q^\dagger A_{q'} -
A_{q'}^\dagger A_q$. Since the commutator is block-diagonal in the ancilla basis,
\begin{align}
    \|[h_{q,a}, h_{q',a}]\| = 1 \qquad \text{for } |q-q'| \geq 2\,.
\end{align}

\subsection{Bond counting}\label{app:bond-counting}

Fix the length $s = \nu_k$ being deposited, and let $2n - s$ be the total number
of bonds in $H_s'$. In a chain of length $L_\mu$ there are $L_\mu - 2$ adjacent
bond pairs, each with vanishing commutator. Summing over the $s$ chains gives
$\sum_\mu (L_\mu - 2) = 2n - 2s$ commuting pairs. Every remaining pair has
commutator norm $1$, so
\begin{align}
    \sum_{q<q'} \|[h_{q,a}, h_{q',a}]\|
    = \binom{2n-s}{2} - (2n - 2s) \equiv C(\nu_k, n)\,.
\end{align}

\section{Leading-order state-dependent Trotter error for uniform cycle partitions}
\label{app:state-dependent}

This appendix derives the state-dependent bound of
Section~\ref{sec:state-dependent}. We first reduce the leading-order Trotter error
of one transfer to the norms $\|\mathcal A_s\ket{\rho_r}\|$. We then write
$\ket{\rho_r}$ explicitly, using the multinomial structure of uniform cycles,
and finally evaluate the contributions of commutators between different
chains and within the same chain.

We specialize to a uniform-cycle partition $\nu=(s^m)$ with $n=ms$. A single
permutation $P_s$ is used at every layer. In the permuted basis, there are $s$
chains, each of length $2m$. Throughout this appendix, $\seed$ denotes the seed
in the following basis
\begin{equation}
  \seed
  =
  \bigotimes_{\alpha=1}^{s}\ket{1^m0^m}_{\alpha}\,.
\end{equation}
For chain $\alpha\in\{1,\ldots,s\}$ and bond
$q\in\{1,\ldots,2m-1\}$, define
\begin{equation}
  A_{\alpha,q}
  :=
  \sigma^+_{\alpha,q+1}\sigma^-_{\alpha,q}\,.
\end{equation}
The current on chain $\alpha$ and the total length-$s$ current are
\begin{equation}
  J_\alpha
  :=
  \sum_{q=1}^{2m-1}A_{\alpha,q}\,,
  \qquad
  J_s
  =
  \sum_{\alpha=1}^{s}J_\alpha\,.
\end{equation}

\subsection{Telescoped state-dependent bound}

The input to the $k$-th transfer is
\begin{equation}
  \ket{\Psi_{k-1}}
  =
  \ket{0}_a\ket{\rho_{k-1}}\,,
\end{equation}
where $\ket{\rho_r}$ is the normalized system state after $r$ transfers of
the length-$s$ current. Write
\begin{equation}
  U_k
  :=
  e^{-it_kH_s'}
\end{equation}
for the exact evolution in the $k$-th transfer.

Order the bond terms lexicographically by their chain and bond labels
$(\alpha,q)$, and write them as $h_1,\ldots,h_{2n-s}$. One first-order
Trotter step is
\begin{equation}
  S_s(\tau)
  :=
  e^{-i\tau h_1}\cdots e^{-i\tau h_{2n-s}}\,.
\end{equation}
For
\begin{equation}
  \tau_k:=\frac{t_k}{M_k}\,,
\end{equation}
the Trotter approximation is
\begin{equation}
  \widetilde U_k
  :=
  S_s(\tau_k)^{M_k}\,.
\end{equation}
The $iX_a$ correction applied after the evolution is omitted here because it
is applied to both the exact and Trotterized states and therefore does not
change their distance.

Expanding one Trotter step gives
\begin{equation}
  S_s(\tau)-e^{-i\tau H_s'}
  =
  -\frac{\tau^2}{2}\mathcal E_s
  +O(\tau^3)\,,
  \qquad
  \mathcal E_s
  :=
  \sum_{w<w'}[h_w,h_{w'}]\,.
  \label{eq:app-step}
\end{equation}
Telescoping over the $M_k$ steps gives
\begin{align}
  \widetilde U_k-U_k
  &=
  \sum_{j=0}^{M_k-1}
  S_s(\tau_k)^{M_k-1-j}
  \left(
    S_s(\tau_k)-e^{-i\tau_kH_s'}
  \right)
  e^{-ij\tau_kH_s'}\,.
  \label{eq:app-telescope}
\end{align}
Define the state error of this transfer by
\begin{equation}
  \varepsilon_k^{\mathrm{SD}}
  :=
  \left\|
    \left(\widetilde U_k-U_k\right)\ket{\Psi_{k-1}}
  \right\|\,.
\end{equation}
Using the unitarity of $S_s(\tau_k)$ and $e^{-i\tau_kH_s'}$ gives
\begin{align}
  \varepsilon_k^{\mathrm{SD}}
  &\leq
  \frac{\tau_k^2}{2}
  \sum_{j=0}^{M_k-1}
  \left\|
    \mathcal E_s
    e^{-ij\tau_kH_s'}
    \ket{\Psi_{k-1}}
  \right\|
  +O(M_k\tau_k^3)\,.
  \label{eq:app-state-error}
\end{align}
The estimates below control the above leading term.

For two non-adjacent bonds
$w=(\alpha,q)$ and $w'=(\beta,q')$, define
\begin{equation}
  \mathcal A_{w w'}
  :=
  A_w^\dagger A_{w'}
  -
  A_{w'}^\dagger A_w\,.
\end{equation}
Adjacent bonds in the same chain commute. Any two non-adjacent bonds act on
disjoint system qubits, and
\begin{equation}
  [h_w,h_{w'}]
  =
  Z_a\otimes\mathcal A_{w w'}\,.
  \label{eq:app-comm}
\end{equation}
Set
\begin{equation}
  \mathcal A_s
  :=
  \sum_{\substack{w<w'\\
  w,w'\ \mathrm{non\text{-}adjacent}}}
  \mathcal A_{ww'}\,.
\end{equation}
It follows that
\begin{equation}
  \mathcal E_s
  =
  Z_a\otimes\mathcal A_s\,.
\end{equation}

The exact evolution preserves the two-dimensional space
\begin{equation}
  \operatorname{span}
  \left\{
    \ket{0}_a\ket{\rho_{k-1}},
    \ket{1}_a\ket{\rho_k}
  \right\}\,.
\end{equation}
On the input state,
\begin{align}
  e^{-i\theta H_s'}\ket{\Psi_{k-1}}
  &=
  \cos\!\left(\sqrt{ks}\,\theta\right)
  \ket{0}_a\ket{\rho_{k-1}}
  \nonumber\\
  &\quad
  -
  i\sin\!\left(\sqrt{ks}\,\theta\right)
  \ket{1}_a\ket{\rho_k}\,.
  \label{eq:app-rabi}
\end{align}
Since
\begin{equation}
  Z_a\ket{0}_a=\ket{0}_a\,,
  \qquad
  Z_a\ket{1}_a=-\ket{1}_a\,,
\end{equation}
the two terms obtained after applying $\mathcal E_s$ have orthogonal ancilla
states. Therefore,
\begin{align}
  \left\|
    \mathcal E_s
    e^{-i\theta H_s'}
    \ket{\Psi_{k-1}}
  \right\|^2
  &=
  \cos^2\!\left(\sqrt{ks}\,\theta\right)
  \left\|\mathcal A_s\ket{\rho_{k-1}}\right\|^2
  \nonumber\\
  &\quad
  +
  \sin^2\!\left(\sqrt{ks}\,\theta\right)
  \left\|\mathcal A_s\ket{\rho_k}\right\|^2.
  \label{eq:app-norm-split}
\end{align}
Define
\begin{equation}
  \Gamma_r
  :=
  \left\|\mathcal A_s\ket{\rho_r}\right\|\,,
  \qquad
  \Delta_k
  :=
  \max\{\Gamma_{k-1},\Gamma_k\}\,.
\end{equation}
Substituting this bound into Eq.~\eqref{eq:app-state-error} and using
$M_k\tau_k=t_k$ gives
\begin{equation}
  \varepsilon_{k,\mathrm{lead}}^{\mathrm{SD}}
  \leq
  \frac{t_k^2}{2M_k}\Delta_k\,.
  \label{eq:app-eps}
\end{equation}
The transfer time is
\begin{equation}
  t_k
  =
  \frac{\pi}{2\sqrt{ks}}\,.
\end{equation}
Thus, making the leading error at most $\delta_k$ requires
\begin{equation}
  M_k
  \geq
  \frac{\pi^2}{8ks\,\delta_k}\Delta_k\,.
  \label{eq:app-Mk}
\end{equation}

\subsection{Multinomial structure of
\texorpdfstring{$\ket{\rho_r}$}{rho\_r}}

Because the currents $J_\alpha$ act on different chains, they commute. The
multinomial theorem gives
\begin{equation}
  J_s^r\seed
  =
  \sum_{\substack{d_1,\ldots,d_s\geq0\\
  d_1+\cdots+d_s=r}}
  \frac{r!}{d_1!\cdots d_s!}
  \prod_{\alpha=1}^{s}J_\alpha^{d_\alpha}\seed\,.
  \label{eq:app-multinomial}
\end{equation}

For one chain, define
\begin{equation}
  \ket{\xi_d}_{\alpha}
  :=
  \frac{
    J_\alpha^d\ket{1^m0^m}_{\alpha}
  }{\sqrt{d!}}\,.
\end{equation}
By Proposition~\ref{prop:ladder}, specialized to a single chain, these states
satisfy
\begin{equation}
  J_\alpha\ket{\xi_d}_{\alpha}
  =
  \sqrt{d+1}\ket{\xi_{d+1}}_{\alpha}\,,
  \qquad
  J_\alpha^\dagger\ket{\xi_d}_{\alpha}
  =
  \sqrt d\ket{\xi_{d-1}}_{\alpha}\,,
\end{equation}
and
\begin{equation}
  {}_{\alpha}\!\braket{\xi_d|\xi_{d'}}_{\alpha}
  =
  \delta_{dd'}\,.
\end{equation}

The product states belonging to different vectors
$\vec d=(d_1,\ldots,d_s)$ are orthogonal. Consequently,
\begin{align}
  \left\|J_s^r\seed\right\|^2
  &=
  (r!)^2
  \sum_{|\vec d|=r}
  \frac{1}{d_1!\cdots d_s!}
  \nonumber\\
  &=
  r!s^r\,.
\end{align}
The normalized state after $r$ transfers is therefore
\begin{equation}
  \ket{\rho_r}
  =
  \sum_{|\vec d|=r}
  \sqrt{
    \frac{r!}{
      s^r d_1!\cdots d_s!
    }
  }
  \bigotimes_{\alpha=1}^{s}
  \ket{\xi_{d_\alpha}}_{\alpha}\,.
  \label{eq:app-rho}
\end{equation}

\subsection{Same-chain and cross-chain terms}

The operator $\mathcal A_s$ splits as
\begin{equation}
  \mathcal A_s
  =
  \mathcal A_{s,\mathrm{same}}
  +
  \mathcal A_{s,\mathrm{cross}}\,,
\end{equation}
where
\begin{align}
  \mathcal A_{s,\mathrm{same}}
  &:=
  \sum_{\alpha=1}^{s}\mathcal A_{s,\alpha},
  \nonumber\\
  \mathcal A_{s,\mathrm{cross}}
  &:=
  \sum_{1\leq\alpha<\beta\leq s}
  \left(
    J_\alpha^\dagger J_\beta
    -
    J_\beta^\dagger J_\alpha
  \right)\,,
  \label{eq:app-Dsplit}
\end{align}
and
\begin{equation}
  \mathcal A_{s,\alpha}
  :=
  \sum_{\substack{
    1\leq q<q'\leq2m-1\\
    q'-q\geq2
  }}
  \left(
    A_{\alpha,q}^\dagger A_{\alpha,q'}
    -
    A_{\alpha,q'}^\dagger A_{\alpha,q}
  \right)\,.
\end{equation}
The expression for $\mathcal A_{s,\mathrm{cross}}$ follows because every bond
on chain $\alpha$ is non-adjacent to every bond on a different chain $\beta$.

\begin{lemma}[Orthogonality]
\label{lem:app-orth}
For every $0\leq r\leq m$,
\begin{equation}
  \mathcal A_{s,\mathrm{same}}\ket{\rho_r}
  \perp
  \mathcal A_{s,\mathrm{cross}}\ket{\rho_r}\,.
\end{equation}
Consequently,
\begin{equation}
  \Gamma_r^2
  =
  \left\|
    \mathcal A_{s,\mathrm{same}}\ket{\rho_r}
  \right\|^2
  +
  \left\|
    \mathcal A_{s,\mathrm{cross}}\ket{\rho_r}
  \right\|^2\,.
\end{equation}
\end{lemma}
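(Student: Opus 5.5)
The plan is to compute $\mathcal A_{s,\mathrm{cross}}\ket{\rho_r}$ explicitly. It should turn out to be a reweighting of the product states $\bigotimes_\alpha\ket{\xi_{d_\alpha}}_\alpha$ appearing in Eq.~\eqref{eq:app-rho}. Orthogonality then reduces to a one-chain statement: every diagonal matrix element of $\mathcal A_{s,\alpha}$ in the states $\ket{\xi_d}_\alpha$ vanishes.

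\emph{Step 1: action of the cross term.} Each $J_\alpha$ is a sum of nearest-neighbour bilinears on chain $\alpha$ in the permuted basis. Operators on different chains therefore commute and carry no cross-chain Jordan--Wigner sign. Using the ladder relations $J_\alpha\ket{\xi_d}_\alpha=\sqrt{d+1}\ket{\xi_{d+1}}_\alpha$ and $J_\alpha^\dagger\ket{\xi_d}_\alpha=\sqrt d\ket{\xi_{d-1}}_\alpha$, I will apply $J_\alpha^\dagger J_\beta$ to the $\vec d$ component of \eqref{eq:app-rho}. It sends that component to $\vec d'=\vec d-e_\alpha+e_\beta$, with amplitude $c(\vec d)\sqrt{d_\alpha(d_\beta+1)}$, where $c(\vec d)=\sqrt{r!/(s^r\prod_\gamma d_\gamma!)}$. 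Rewriting in terms of $\vec d'$ gives $c(\vec d)\sqrt{d_\alpha(d_\beta+1)}=c(\vec d')\,d'_\beta$. The term $J_\beta^\dagger J_\alpha$ contributes $c(\vec d')\,d'_\alpha$ in the same way. The boundary cases $d_\alpha=0$ and $d'_\beta=m+1$ vanish or cannot occur, since $|\vec d|=r\le m$. Hence
\begin{equation}
  \mathcal A_{s,\mathrm{cross}}\ket{\rho_r}
  =
  \sum_{|\vec d'|=r}c(\vec d')\Big[\sum_{\alpha<\beta}(d'_\beta-d'_\alpha)\Big]\bigotimes_{\gamma}\ket{\xi_{d'_\gamma}}_\gamma\,.
\end{equation}
So the cross-chain image lies in the span of the same orthonormal product states that make up $\ket{\rho_r}$.

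\emph{Step 2: overlap with the same-chain image.} For each such product state, I will compute its inner product with $\mathcal A_{s,\mathrm{same}}\ket{\rho_r}=\sum_\alpha\mathcal A_{s,\alpha}\ket{\rho_r}$. The operator $\mathcal A_{s,\alpha}$ acts only on chain $\alpha$, and the states $\ket{\xi_d}_\gamma$ with fixed $\gamma$ are orthonormal. The overlap therefore collapses to
\begin{equation}
  \sum_\alpha c(\vec d')\,{}_\alpha\!\bra{\xi_{d'_\alpha}}\mathcal A_{s,\alpha}\ket{\xi_{d'_\alpha}}_\alpha\,.
\end{equation}
It then suffices to show that ${}_\alpha\!\bra{\xi_d}\mathcal A_{s,\alpha}\ket{\xi_d}_\alpha=0$ for every $d$.

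\emph{Step 3: the diagonal matrix elements vanish.} Each $\mathcal A_{ww'}=A_w^\dagger A_{w'}-A_{w'}^\dagger A_w$ is anti-Hermitian. Its matrix in the occupation basis is real, because $A_{\alpha,q}=\sigma^+_{\alpha,q+1}\sigma^-_{\alpha,q}$ has entries in $\{0,1\}$ in the permuted basis, and any fermionic signs are $\pm1$. A real anti-Hermitian matrix is real antisymmetric. The vector $\ket{\xi_d}_\alpha$ is real, since it is $J_\alpha^d\ket{1^m0^m}/\sqrt{d!}$ and $J_\alpha$ is real. For a real antisymmetric matrix $K$ and a real vector $v$ one has $v^TKv=0$, which gives the claim.

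Adding Steps 1--3 yields $\langle\mathcal A_{s,\mathrm{same}}\rho_r|\mathcal A_{s,\mathrm{cross}}\rho_r\rangle=0$. The Pythagorean identity for $\Gamma_r^2$ then follows from $\mathcal A_s=\mathcal A_{s,\mathrm{same}}+\mathcal A_{s,\mathrm{cross}}$.

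The main obstacle is Step 1, namely the bookkeeping of the amplitude identity $c(\vec d)\sqrt{d_\alpha(d_\beta+1)}=c(\vec d')d'_\beta$. One might hope to get orthogonality from displacement grading alone, but that fails: the cross term maps $\vec d$ to another vector with $|\vec d'|=r$, so it stays in the same total-displacement sector. What makes the argument work is the exact proportionality to the product basis. A secondary point to check is that the Jordan--Wigner signs in the permuted basis keep every operator real and do not couple chains.
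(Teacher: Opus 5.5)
Your proposal is correct and follows essentially the paper's proof. The cross-chain image lies in the span $\mathcal L_r$ of product ladder states $\bigotimes_\alpha\ket{\xi_{d_\alpha}}_\alpha$ with $|\vec d|=r$, and the same-chain image is orthogonal to $\mathcal L_r$ because every such overlap reduces to a diagonal element ${}_\alpha\!\bra{\xi_d}\mathcal A_{s,\alpha}\ket{\xi_d}_\alpha$ of a real skew-symmetric operator in a real vector. The only difference is that your Step~1 computes the exact coefficients $c(\vec d')\sum_{\alpha<\beta}(d'_\beta-d'_\alpha)$. This is correct and matches the paper's later cross-chain norm lemma, but it is more than orthogonality needs, since membership in $\mathcal L_r$ alone suffices.
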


\begin{proof}
Let
\begin{equation}
  \mathcal L_r
  :=
  \operatorname{span}
  \left\{
    \bigotimes_{\alpha=1}^{s}
    \ket{\xi_{d_\alpha}}_{\alpha}
    :
    d_1+\cdots+d_s=r
  \right\}\,.
\end{equation}
Each term $J_\alpha^\dagger J_\beta$ maps a product of ladder states to
another product of ladder states with the same total value of
$d_1+\cdots+d_s$. Hence,
\begin{equation}
  \mathcal A_{s,\mathrm{cross}}\ket{\rho_r}
  \in
  \mathcal L_r\,.
\end{equation}

The operator $\mathcal A_{s,\alpha}$ preserves the value of $d$ on chain
$\alpha$. In the occupation basis, it is represented by a real
skew-symmetric matrix, while $\ket{\xi_d}_{\alpha}$ has real coefficients.
Therefore,
\begin{equation}
  {}_{\alpha}\!\bra{\xi_d}
  \mathcal A_{s,\alpha}
  \ket{\xi_d}_{\alpha}
  =
  0\,.
\end{equation}
Using Eq.~\eqref{eq:app-rho},
$\mathcal A_{s,\mathrm{same}}$ acts on one chain factor at a time. Its inner
product with any product state in $\mathcal L_r$ therefore contains a factor
of the form
\begin{equation}
  {}_{\alpha}\!\bra{\xi_d}
  \mathcal A_{s,\alpha}
  \ket{\xi_d}_{\alpha}\,,
\end{equation}
which vanishes. Thus,
\begin{equation}
  \mathcal A_{s,\mathrm{same}}\ket{\rho_r}
  \perp
  \mathcal L_r\,,
\end{equation}
which proves the result.
\end{proof}

\begin{lemma}[Cross-chain norm]
\label{lem:app-cross}
For every $0\leq r\leq m$,
\begin{equation}
  \left\|
    \mathcal A_{s,\mathrm{cross}}\ket{\rho_r}
  \right\|^2
  =
  \frac{r(s^2-1)}{3}\,.
\end{equation}
\end{lemma}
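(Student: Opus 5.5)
The plan is to reduce the cross-chain operator to a quadratic form in $s$ independent bosonic modes, one per chain, and then compute its action on a coherent-like symmetric state. By Proposition~\ref{prop:ladder} specialized to one chain, the ladder states satisfy
\[
J_\alpha\ket{\xi_d}_\alpha=\sqrt{d+1}\ket{\xi_{d+1}}_\alpha,\qquad J_\alpha^\dagger\ket{\xi_d}_\alpha=\sqrt d\ket{\xi_{d-1}}_\alpha .
\]
These relations hold for the values of $d$ we need. So on the span of the products $\bigotimes_\alpha\ket{\xi_{d_\alpha}}_\alpha$ we identify $J_\alpha$ with $b_\alpha^\dagger$ and $J_\alpha^\dagger$ with $b_\alpha$, and $\bigotimes_\alpha\ket{\xi_{d_\alpha}}$ with the bosonic Fock state $\prod_\alpha (b_\alpha^\dagger)^{d_\alpha}\ket{0}/\sqrt{d_\alpha!}$.

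The one point to check is that we never leave the range where the single-chain ladder property is proved, namely raising only up to $\ket{\xi_m}$. In a term $J_\alpha^\dagger J_\beta$ with $\alpha\neq\beta$, the two factors act on different chains and commute, so we apply $J_\alpha^\dagger$ first. That term vanishes unless $d_\alpha\ge1$, and in that case $d_\beta\le r-d_\alpha\le m-1$, so $J_\beta$ raises chain $\beta$ to at most $\ket{\xi_m}$. Hence the identification is exact for every vector that appears.

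In bosonic language, Eq.~\eqref{eq:app-rho} is the multinomial expansion of $\ket{\rho_r}=(c^\dagger)^r\ket0/\sqrt{r!}$, where $c^\dagger:=s^{-1/2}\sum_\alpha b_\alpha^\dagger$ is a normalized mode. The cross operator becomes
\[
K:=\sum_{\alpha,\beta}M_{\alpha\beta}\,b_\alpha^\dagger b_\beta ,
\]
with $M$ the real antisymmetric matrix $M_{\alpha\beta}=\operatorname{sgn}(\beta-\alpha)$; only the absolute square matters, so the overall sign convention is irrelevant.

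Since $K\ket0=0$ and $[K,c^\dagger]=w\cdot b^\dagger$ with $w:=Mu$, $u:=s^{-1/2}(1,\dots,1)$, the commutator $[K,c^\dagger]$ commutes with $c^\dagger$. Therefore
\[
K\ket{\rho_r}=\frac{r}{\sqrt{r!}}\,(c^\dagger)^{r-1}(w\cdot b^\dagger)\ket0 .
\]
Antisymmetry of $M$ gives $u\cdot w=0$, so $w\cdot b^\dagger$ creates a boson in a mode orthogonal to $c$. The squared norm is then
\[
\frac{r^2}{r!}\,(r-1)!\,\|w\|^2=r\|w\|^2 .
\]

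Finally, we compute $w_\alpha=s^{-1/2}\big[(s-\alpha)-(\alpha-1)\big]=(s+1-2\alpha)/\sqrt s$. Summing the squares,
\[
\|w\|^2=\frac1s\sum_{\alpha=1}^s(s+1-2\alpha)^2=\frac{s^2-1}{3},
\]
which gives $\|\mathcal A_{s,\mathrm{cross}}\ket{\rho_r}\|^2=r(s^2-1)/3$. The main obstacle is the justification of the bosonic identification near the top of the ladder, the $d=m$ edge; the argument above handles it by ordering the lowering before the raising.
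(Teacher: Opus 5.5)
Your proof is correct and follows the paper's argument essentially step for step. Both use the identification $J_\alpha\leftrightarrow b_\alpha^\dagger$, the commutator $[K,c^\dagger]=(Mu)\cdot b^\dagger$ with the Leibniz rule, the orthogonality $Mu\perp u$ from skew-symmetry, and $\|Mu\|^2=(s^2-1)/3$; your matrix $M$ is the negative of the paper's $K$, which, as you note, is harmless. Your explicit check that no term ever requires $J_\beta\ket{\xi_m}$, obtained by letting the lowering factor act first, makes precise a point the paper leaves implicit in the phrase ``on the ladder subspace.''
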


\begin{proof}
On the ladder subspace, identify
\begin{equation}
  J_\alpha
  \longleftrightarrow
  b_\alpha^\dagger\,,
  \qquad
  J_\alpha^\dagger
  \longleftrightarrow
  b_\alpha\,.
\end{equation}
Under this identification,
\begin{equation}
  \ket{\rho_r}
  \longleftrightarrow
  \frac{
    (b_u^\dagger)^r
  }{\sqrt{r!}}
  \ket{\Omega}\,,
  \qquad
  u
  :=
  \frac{1}{\sqrt s}(1,\ldots,1)^T\,.
\end{equation}
The cross-chain operator becomes
\begin{equation}
  T
  :=
  \sum_{\alpha,\beta=1}^{s}
  K_{\beta\alpha}
  b_\beta^\dagger b_\alpha\,,
  \qquad
  K_{\mu\eta}
  :=
  \operatorname{sgn}(\mu-\eta)\,.
\end{equation}
Since
\begin{equation}
  [T,b_u^\dagger]
  =
  b_{Ku}^\dagger\,,
  \qquad
  T\ket{\Omega}
  =
  0\,,
\end{equation}
the Leibniz rule gives
\begin{equation}
  T
  \frac{(b_u^\dagger)^r}{\sqrt{r!}}
  \ket{\Omega}
  =
  \sqrt r\,
  b_{Ku}^\dagger
  \frac{(b_u^\dagger)^{r-1}}{\sqrt{(r-1)!}}
  \ket{\Omega}\,.
\end{equation}
The matrix $K$ is real and skew-symmetric, so
\begin{equation}
  u^T K u=0\,.
\end{equation}
Thus $Ku\perp u$, and
\begin{equation}
  \left\|
    \mathcal A_{s,\mathrm{cross}}\ket{\rho_r}
  \right\|^2
  =
  r\|Ku\|^2\,.
\end{equation}
The components of $Ku$ are
\begin{equation}
  (Ku)_\mu
  =
  \frac{2\mu-s-1}{\sqrt s}\,.
\end{equation}
Therefore,
\begin{align}
  \|Ku\|^2
  &=
  \frac{1}{s}
  \sum_{\mu=1}^{s}
  (2\mu-s-1)^2
  \nonumber\\
  &=
  \frac{s^2-1}{3}\,,
\end{align}
which proves the claim.
\end{proof}

\begin{lemma}[Same-chain contribution]
\label{lem:app-same}
For every $0\leq r\leq m$,
\begin{equation}
  \left\|
    \mathcal A_{s,\mathrm{same}}\ket{\rho_r}
  \right\|^2
  =
  s
  \sum_{d=0}^{r}
  \binom{r}{d}
  \left(\frac1s\right)^d
  \left(1-\frac1s\right)^{r-d}
  F(d)\,,
  \label{eq:app-same}
\end{equation}
where
\begin{equation}
  F(d)
  :=
  \left\|
    \mathcal A_{s,\alpha}\ket{\xi_d}_{\alpha}
  \right\|^2\,.
\end{equation}
The value of $F(d)$ is independent of $\alpha$ because all chains are
identical.
\end{lemma}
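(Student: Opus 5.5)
We work directly from the multinomial form \eqref{eq:app-rho} of $\ket{\rho_r}$, in which $\mathcal A_{s,\mathrm{same}}=\sum_\alpha\mathcal A_{s,\alpha}$ acts on one chain factor at a time. For a fixed multi-index $\vec d$ with $|\vec d|=r$, the operator $\mathcal A_{s,\alpha}$ replaces the factor $\ket{\xi_{d_\alpha}}_\alpha$ by $\mathcal A_{s,\alpha}\ket{\xi_{d_\alpha}}_\alpha$ and leaves the other factors untouched. So
\begin{equation}
  \mathcal A_{s,\mathrm{same}}\ket{\rho_r}
  =
  \sum_{\alpha=1}^{s}\sum_{|\vec d|=r}c_{\vec d}\,
  \mathcal A_{s,\alpha}\ket{\xi_{d_\alpha}}_\alpha\otimes\bigotimes_{\beta\neq\alpha}\ket{\xi_{d_\beta}}_\beta\,,
  \qquad
  c_{\vec d}=\sqrt{\frac{r!}{s^r d_1!\cdots d_s!}}\,.
\end{equation}
The plan is to show that all cross terms in the squared norm vanish. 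The result is then a sum of diagonal terms $\sum_\alpha\sum_{\vec d}c_{\vec d}^2F(d_\alpha)$, and we finally marginalize the multinomial weights.

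\textbf{Step 1: orthogonality of the summands.} Two facts are used.

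First, $\mathcal A_{s,\alpha}$ commutes with the displacement (particle number times position) on chain $\alpha$, since it is a sum of products of one forward and one backward hop of one site each. Hence $\mathcal A_{s,\alpha}\ket{\xi_d}_\alpha$ lies in the chain-$\alpha$ displacement-$d$ sector, which is orthogonal to every $\ket{\xi_{d'}}_\alpha$ with $d'\neq d$ and to every sector other than $d$. Combined with the proof of Lemma~\ref{lem:app-orth}, which gives ${}_\alpha\!\bra{\xi_d}\mathcal A_{s,\alpha}\ket{\xi_d}_\alpha=0$, this shows $\mathcal A_{s,\alpha}\ket{\xi_d}_\alpha\perp\ket{\xi_{d'}}_\alpha$ for all $d'$.

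Now take two terms $(\alpha,\vec d)$ and $(\alpha',\vec d')$.

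If $\alpha\neq\alpha'$, the inner product factorizes, and its chain-$\alpha$ factor is $\langle\xi_{d'_\alpha}|\mathcal A_{s,\alpha}\xi_{d_\alpha}\rangle=0$. So these cross terms vanish.

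If $\alpha=\alpha'$ but $\vec d\neq\vec d'$, either some $\beta\neq\alpha$ has $d_\beta\neq d'_\beta$, and that factor vanishes by orthonormality of the $\ket{\xi_d}$; or only $d_\alpha$ differs, which is impossible because $|\vec d|=|\vec d'|=r$. (In that case one could also use the displacement-sector argument for $\mathcal A_{s,\alpha}\xi_{d_\alpha}$ against $\mathcal A_{s,\alpha}\xi_{d'_\alpha}$.)

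Hence
\begin{equation}
\|\mathcal A_{s,\mathrm{same}}\ket{\rho_r}\|^2=\sum_{\alpha=1}^s\sum_{|\vec d|=r}c_{\vec d}^2\,F(d_\alpha)\,,
\end{equation}
using that $F$ does not depend on $\alpha$.

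\textbf{Step 2: marginalization.} The weights $c_{\vec d}^2$ form the multinomial distribution of $r$ uniform draws from $s$ chains. The marginal of $d_\alpha$ is therefore $\mathrm{Bin}(r,1/s)$:
\begin{equation}
\sum_{|\vec d|=r,\ d_\alpha=d}c_{\vec d}^2=\binom{r}{d}\left(\tfrac1s\right)^d\left(1-\tfrac1s\right)^{r-d}\,.
\end{equation}
Summing over the $s$ identical chains contributes the factor $s$, which gives \eqref{eq:app-same}.

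\textbf{Main obstacle.} The only delicate point is the cross-term cancellation within a single chain, namely that $\mathcal A_{s,\alpha}\ket{\xi_d}_\alpha$ has no component along $\ket{\xi_d}_\alpha$. This relies on the real skew-symmetry argument already given in Lemma~\ref{lem:app-orth}, together with the sector structure coming from displacement conservation. Everything else is bookkeeping.
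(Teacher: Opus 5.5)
Your proof is correct and follows essentially the same route as the paper. Cross terms between different chains vanish because each contains a one-chain factor $\bra{\xi_{d'}}\mathcal A_{s,\alpha}\ket{\xi_d}$, which is zero by displacement conservation and real skew-symmetry, and the diagonal terms are then grouped by the chain-$\alpha$ displacement, whose multinomial marginal is $\mathrm{Bin}(r,1/s)$. The only difference is bookkeeping: you check orthogonality pairwise over the labels $(\alpha,\vec d)$ in one pass, whereas the paper first reduces to $\sum_\alpha\|\mathcal A_{s,\alpha}\ket{\rho_r}\|^2$ and then splits each term by displacement sector.
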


\begin{proof}
For $\alpha\neq\beta$, the operator
$\mathcal A_{s,\alpha}^\dagger\mathcal A_{s,\beta}$ preserves the displacement
on every chain. Expanding $\ket{\rho_r}$ using Eq.~\eqref{eq:app-rho}, its
expectation value separates into one-chain factors. The chain-$\alpha$ factor
contains
\begin{equation}
  {}_{\alpha}\!\bra{\xi_d}
  \mathcal A_{s,\alpha}
  \ket{\xi_d}_{\alpha}
  =
  0\,.
\end{equation}
Therefore, all cross terms vanish and
\begin{equation}
  \left\|
    \mathcal A_{s,\mathrm{same}}\ket{\rho_r}
  \right\|^2
  =
  \sum_{\alpha=1}^{s}
  \left\|
    \mathcal A_{s,\alpha}\ket{\rho_r}
  \right\|^2\,.
\end{equation}

For a fixed chain $\alpha$, the number $d$ of deposited hops has the binomial
distribution
\begin{equation}
  w_d
  =
  \binom{r}{d}
  \left(\frac1s\right)^d
  \left(1-\frac1s\right)^{r-d}\,.
\end{equation}
Displacement sectors with different values of $d$ are orthogonal, so
\begin{equation}
  \left\|
    \mathcal A_{s,\alpha}\ket{\rho_r}
  \right\|^2
  =
  \sum_{d=0}^{r}w_dF(d)\,.
\end{equation}
All $s$ chains give the same contribution, which proves the result.
\end{proof}

Combining Lemmas~\ref{lem:app-orth}--\ref{lem:app-same} gives
\begin{equation}
  \Gamma_r^2
  =
  \frac{r(s^2-1)}{3}
  +
  s
  \sum_{d=0}^{r}
  \binom{r}{d}
  \left(\frac1s\right)^d
  \left(1-\frac1s\right)^{r-d}
  F(d)\,.
  \label{eq:app-gamma}
\end{equation}

\begin{lemma}[One-chain bound]
\label{lem:app-F-quadratic}
For every $0\leq d\leq m$,
\begin{equation}
  F(d)\leq d^2\,.
  \label{eq:app-F-quadratic}
\end{equation}
\end{lemma}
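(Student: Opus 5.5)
The plan is to bound $F(d)$ by the operator norm of $\mathcal A_{s,\alpha}$ restricted to a single displacement sector of one chain, and to bound that norm by a weighted Schur test. The weights are the standard Young tableau counts $f^\lambda$ of the partitions labelling that sector.

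\emph{Step 1 (reduction to one chain with the Young-diagram basis).} Fix a chain $\alpha$. It is an open chain of $2m$ sites seeded by $\ket{1^m0^m}$, with nearest-neighbour current $J_\alpha$. By the encoding \eqref{eq:maya} applied to this chain, the occupation-basis states of displacement $d$ are labelled by partitions $\lambda\vdash d$ that fit in an $m\times m$ box. A hop $A_{\alpha,q}$ adds one box to the Young diagram, and $A_{\alpha,q}^\dagger$ removes one. The Jordan--Wigner signs are trivial here because all hops are nearest-neighbour within the permuted chain. By \eqref{eq:identity} with cycle type $(1^d)$, or equivalently by repeated application of $J_\alpha$, we have $\ket{\xi_d}=\sum_{\lambda\vdash d} (f^\lambda/\sqrt{d!})\ket{\lambda}$, where $f^\lambda$ is the number of standard Young tableaux. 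These amplitudes are all positive. Since $\mathcal A_{s,\alpha}$ preserves $d$, we get $F(d)\le \|\mathcal A_{s,\alpha}|_{\mathcal F_d}\|^2$.

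\emph{Step 2 (matrix structure).} Adjacent products already vanish: $A_q^\dagger A_{q\pm1}=0$, as in Appendix~\ref{app:commutators}. Hence
\begin{equation*}
\mathcal A_{s,\alpha}=\sum_{q\neq q'}\operatorname{sgn}(q'-q)\,A_{\alpha,q}^\dagger A_{\alpha,q'},
\end{equation*}
with no adjacency restriction. Each term removes one box and adds one box. In the basis $\{\ket\lambda\}$, $\mathcal A_{s,\alpha}$ is therefore a real skew-symmetric matrix. Its entry $(\mu,\lambda)$ is nonzero, and then equal to $\pm1$, only if $\mu$ is obtained from $\lambda$ by removing a corner box and adding a box. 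Distinct pairs $(q,q')$ yield distinct $\mu$, so no entry exceeds $1$ in modulus.

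\emph{Step 3 (Schur test with weights $f^\lambda$).} We use the weighted Schur test. If $B$ is symmetric in absolute value and $p>0$ satisfies $\sum_\lambda |B_{\mu\lambda}|p_\lambda\le C p_\mu$ for all $\mu$, then $\|B\|\le C$. Take $p_\lambda=f^\lambda$. For fixed $\mu\vdash d$, every neighbour $\lambda$ arises as $\lambda=\nu+\square$ with $\nu=\mu-\square$. Two branching identities now apply: $\sum_{\square\text{ addable}} f^{\nu+\square}=(d)\,f^\nu$ for $\nu\vdash d-1$ (counting standard tableaux of $\nu$ extended by a largest entry, up to the known identity $\sum_{\square} f^{\nu+\square}=(|\nu|+1)f^\nu$), and $\sum_{\square\text{ removable}} f^{\mu-\square}=f^\mu$. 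Combining them gives
\begin{equation*}
\sum_\lambda |(\mathcal A_{s,\alpha})_{\mu\lambda}|\,f^\lambda\;\le\;\sum_{\nu=\mu-\square}\ \sum_{\square'} f^{\nu+\square'}\;=\;d\sum_{\nu=\mu-\square} f^\nu\;=\;d\,f^\mu .
\end{equation*}
This overcounts, since it includes re-adding the removed box and the vanishing adjacent moves, but that is harmless for an upper bound. The box restriction on the chain only deletes terms. The Schur test then gives $\|\mathcal A_{s,\alpha}|_{\mathcal F_d}\|\le d$, and hence $F(d)\le d^2$.

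\emph{Main obstacle.} The step I expect to need the most care is Step 3. Naive degree counting (removable corners times addable cells) only gives roughly $2d$, which would yield $F(d)\le 4d^2$. The constant $1$ depends on choosing the Plancherel weights $f^\lambda$ in the Schur test so that the branching identities absorb the combinatorics. I also need to confirm that the finite chain, where particles cannot leave the $2m$ sites, does not break the $f^\lambda$ amplitudes of $\ket{\xi_d}$ for $d\le m$. This holds because $d\le m$ keeps diagrams inside the $m\times m$ box, by the same reasoning as Lemma~\ref{lem:frozen}.
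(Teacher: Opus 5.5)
Your proof is correct and is essentially the paper's argument. You use the same Young-diagram basis with amplitudes $f^\lambda/\sqrt{d!}$ and the same two branching identities to get $\sum_\lambda |(\mathcal A_{s,\alpha})_{\mu\lambda}|\,f^\lambda\le d\,f^\mu$; the paper applies this bound componentwise (because $\ket{\xi_d}$ is itself proportional to the weight vector $(f^\lambda)$) to get $|c_\lambda|\le d f^\lambda$ and sums squares, whereas your weighted Schur test gives the stronger sector-wide bound $\|\mathcal A_{s,\alpha}|_{\mathcal F_d}\|\le d$ (one small correction: the addition identity $\sum_{\square}f^{\nu+\square}=(|\nu|+1)f^\nu$ comes from $DU-UD=I$ on Young's lattice, not from extending tableaux by a largest entry, which gives the removal identity).
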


\begin{proof}
The case $d=0$ is immediate. For $d\geq1$, use the Young-diagram basis for one
chain. Because $d\leq m$, every partition of $d$ fits inside the
$m\times m$ diagram allowed by the chain.

For $\lambda\vdash d$, let $f^\lambda$ be the number of standard Young
tableaux of shape $\lambda$. The one-chain ladder state is
\begin{equation}
  \ket{\xi_d}
  =
  \frac{1}{\sqrt{d!}}
  \sum_{\lambda\vdash d}
  f^\lambda\ket{\lambda}\,.
\end{equation}
Write
\begin{equation}
  \sqrt{d!}\,
  \mathcal A_{s,\alpha}\ket{\xi_d}
  =
  \sum_{\lambda\vdash d}
  c_\lambda\ket{\lambda}\,.
\end{equation}

The operator $\mathcal A_{s,\alpha}$ removes one corner box and adds another.
Each such move has coefficient $0$, $+1$, or $-1$. Therefore,
\begin{equation}
  c_\lambda
  =
  \sum_{\substack{\eta\subset\lambda\\|\eta|=d-1}}
  \sum_{\substack{\mu\supset\eta\\|\mu|=d}}
  \epsilon_{\lambda,\mu,\eta}\,f^\mu\,,
  \qquad
  |\epsilon_{\lambda,\mu,\eta}|\leq1\,.
\end{equation}
Taking absolute values and using the standard counting identities
\begin{equation}
  \sum_{\substack{\eta\subset\lambda\\|\eta|=d-1}} f^\eta
  =
  f^\lambda\,,
  \qquad
  \sum_{\substack{\mu\supset\eta\\|\mu|=d}} f^\mu
  =
  d f^\eta\,,
\end{equation}
we obtain
\begin{align}
  |c_\lambda|
  &\leq
  \sum_{\substack{\eta\subset\lambda\\|\eta|=d-1}}
  \sum_{\substack{\mu\supset\eta\\|\mu|=d}}
  f^\mu
  \nonumber\\
  &=
  d
  \sum_{\substack{\eta\subset\lambda\\|\eta|=d-1}}
  f^\eta
  =
  d f^\lambda\,.
\end{align}
Hence,
\begin{align}
  F(d)
  &=
  \left\|
    \mathcal A_{s,\alpha}\ket{\xi_d}
  \right\|^2
  =
  \frac{1}{d!}
  \sum_{\lambda\vdash d}
  c_\lambda^2
  \nonumber\\
  &\leq
  \frac{d^2}{d!}
  \sum_{\lambda\vdash d}
  (f^\lambda)^2
  =
  d^2\,,
\end{align}
where the last equality uses
\begin{equation}
  \sum_{\lambda\vdash d}(f^\lambda)^2=d!\,.
\end{equation}
\end{proof}

Combining Lemma~\ref{lem:app-F-quadratic} with Eq.~\eqref{eq:app-gamma}
gives
\begin{equation}
  \Gamma_r^2
  \leq
  \frac{r(s^2-1)}{3}
  +r\left(1-\frac1s\right)
  +\frac{r^2}{s}\,.
  \label{eq:app-gamma-universal}
\end{equation}

\subsection{Gate count from state-dependent Trotter error}

We now use Eq.~\eqref{eq:app-gamma-universal} to bound the leading
state-dependent error. For $0\leq r\leq m$,
\begin{equation}
  \Gamma_r
  =
  O\!\left(
    s\sqrt r+\frac{r}{\sqrt s}
  \right)\,.
\end{equation}
It follows that
\begin{equation}
  \Delta_k
  =
  O\!\left(
    s\sqrt k+\frac{k}{\sqrt s}
  \right)\,.
\end{equation}
Substituting this into Eq.~\eqref{eq:app-eps} and using
$t_k=\pi/(2\sqrt{ks})$ gives
\begin{equation}
  \varepsilon_{k,\mathrm{lead}}^{\mathrm{SD}}
  =
  O\!\left[
    \frac{1}{M_k}
    \left(
      \frac{1}{\sqrt k}
      +
      \frac{1}{s^{3/2}}
    \right)
  \right]\,.
  \label{eq:app-leading-error-final}
\end{equation}
The first term comes from commutators between different chains. The second
comes from commutators within the same chain.

Let $\delta$ be the allowed total leading-order error. Assigning leading
error $\delta/m$ to each of the $m$ transfers, it is enough to choose
\begin{equation}
  M_k
  =
  \left\lceil
    C\frac{m}{\delta}
    \left(
      \frac{1}{\sqrt k}
      +
      \frac{1}{s^{3/2}}
    \right)
  \right\rceil
\end{equation}
for a sufficiently large constant $C$. Since
\begin{equation}
  \sum_{k=1}^{m}\frac{1}{\sqrt k}
  \leq
  2\sqrt m\,,
\end{equation}
the total number of Trotter steps is
\begin{equation}
  M_{\mathrm{tot}}
  :=
  \sum_{k=1}^{m}M_k
  =
  O\!\left[
    \frac{1}{\delta}
    \left(
      m^{3/2}
      +
      \frac{m^2}{s^{3/2}}
    \right)
  \right]\,.
  \label{eq:app-total-steps}
\end{equation}
The $O(m)$ contribution from the ceilings is absorbed into this bound because
$0<\delta<1$.

Each Trotter step contains
\begin{equation}
  s(2m-1)=2n-s=O(n)
\end{equation}
bond evolutions. Each bond evolution uses a constant number of elementary
gates. The resulting leading-order gate count is
\begin{equation}
  G_{\mathrm{lead}}^{\mathrm{SD}}(s^m)
  =
  O\!\left[
    \frac{n}{\delta}
    \left(
      m^{3/2}
      +
      \frac{m^2}{s^{3/2}}
    \right)
  \right]\,.
  \label{eq:app-general-gates}
\end{equation}

The two terms in Eq.~\eqref{eq:app-general-gates} give the following cases.
\begin{itemize}
  \item If $m=O(s^3)$, then
  \begin{equation}
    \frac{m^2}{s^{3/2}}
    =
    O(m^{3/2})\,.
  \end{equation}
  The same-chain term does not change the asymptotic gate count, and
  \begin{equation}
    G_{\mathrm{lead}}^{\mathrm{SD}}(s^m)
    =
    O\!\left(
      \frac{n m^{3/2}}{\delta}
    \right)\,.
    \label{eq:app-cross-dominated-gates}
  \end{equation}
  If $m=o(s^3)$ and $s$ grows, the same-chain term is strictly lower order.

  \item If $m\gg s^3$, the same-chain term cannot be dropped using the proved
  bound $F(d)\leq d^2$. The second term in
  Eq.~\eqref{eq:app-general-gates} is then larger, giving
  \begin{equation}
    G_{\mathrm{lead}}^{\mathrm{SD}}(s^m)
    =
    O\!\left(
      \frac{n m^2}{\delta s^{3/2}}
    \right)\,.
  \end{equation}

  \item Our numerical results are consistent with linear growth
  $F(d)=O(d)$, with an observed slope of approximately $0.6$ over the tested
  range. If this stronger bound holds for all $d$, then the same-chain
  contribution is $O(r)$ rather than $O(r+r^2/s)$. In that case,
  \begin{equation}
    G_{\mathrm{lead}}^{\mathrm{SD}}(s^m)
    =
    O\!\left(
      \frac{n m^{3/2}}{\delta}
    \right)
  \end{equation}
  for every uniform-cycle partition, without requiring $m=O(s^3)$. This
  statement is conditional on the unproven linear bound.
\end{itemize}

These estimates control the leading $1/M_k$ contribution to the
state-dependent Trotter error. They do not give a complete finite-$M_k$ error
bound because the higher-order remainder in
Eq.~\eqref{eq:app-state-error} has not been bounded.

\bibliographystyle{unsrt}
\bibliography{refs.bib}

\begin{thebibliography}{10}

\bibitem{BravyiGossetHavlicekSchatzki2025}
Sergey Bravyi, David Gosset, Vojt{\v{e}}ch Havl{\'\i}{\v{c}}ek, and Louis
  Schatzki.
\newblock Classical and quantum algorithms for characters of the symmetric
  group.
\newblock {\em PRX Quantum}, 6:030323, 2025.
\newblock arXiv:2501.12579.

\bibitem{MezardParisiVirasoro1987}
Marc M{\'e}zard, Giorgio Parisi, and Miguel~Angel Virasoro.
\newblock {\em Spin Glass Theory and Beyond: An Introduction to the Replica
  Method and Its Applications}, volume~9 of {\em World Scientific Lecture Notes
  in Physics}.
\newblock World Scientific, 1987.

\bibitem{ZhouNahum2019}
Tianci Zhou and Adam Nahum.
\newblock Emergent statistical mechanics of entanglement in random unitary
  circuits.
\newblock {\em Physical Review B}, 99:174205, 2019.

\bibitem{Dunne1993}
Gerald~V. Dunne.
\newblock {Slater} decomposition of {Laughlin} states.
\newblock {\em International Journal of Modern Physics B}, 7(28):4783--4813,
  1993.

\bibitem{DijkgraafMooreVerlindeVerlinde1997}
Robbert Dijkgraaf, Gregory Moore, Erik Verlinde, and Herman Verlinde.
\newblock Elliptic genera of symmetric products and second quantized strings.
\newblock {\em Communications in Mathematical Physics}, 185:197--209, 1997.

\bibitem{Thorngren}
Ryan Thorngren and Yifan Wang.
\newblock Fusion category symmetry. {Part II}. {Categoriosities} at $c=1$ and
  beyond.
\newblock {\em Journal of High Energy Physics}, 07:051, 2024.

\bibitem{GutperleLiRathoreRoumpedakis2024_2}
Michael Gutperle, Yan-Yan Li, Dikshant Rathore, and Konstantinos Roumpedakis.
\newblock Non-invertible symmetries in {$S_N$} orbifold {CFTs} and holography.
\newblock {\em Journal of High Energy Physics}, 09:110, 2024.

\bibitem{GutperleLiRathoreRoumpedakis2024}
Michael Gutperle, Yan-Yan Li, Dikshant Rathore, and Konstantinos Roumpedakis.
\newblock A note on entanglement entropy and topological defects in symmetric
  orbifold {CFTs}.
\newblock {\em Journal of High Energy Physics}, 09:010, 2024.

\bibitem{BintanjaGesteau2026}
Suzanne Bintanja and Elliott Gesteau.
\newblock The $\alpha$-states of a string worldsheet.
\newblock 2026.
\newblock arXiv:2607.15336.

\bibitem{Hepler1994}
Charles~T. Hepler.
\newblock On the complexity of computing characters of finite groups.
\newblock Master's thesis, University of Calgary, 1994.

\bibitem{CrichignoPrakash2024}
Marcos Crichigno and Anupam Prakash.
\newblock Quantum spin chains and symmetric functions.
\newblock 2024.
\newblock arXiv:2404.04322.

\bibitem{Ikenmeyer_2023}
Christian Ikenmeyer, Igor Pak, and Greta Panova.
\newblock Positivity of the symmetric group characters is as hard as the
  polynomial time hierarchy.
\newblock {\em International Mathematics Research Notices},
  2024(10):8442--8458, 2024.

\bibitem{Bastidas_2025}
Victor~M. Bastidas, Nathan Fitzpatrick, K.~J. Joven, Zane~M. Rossi, Shariful
  Islam, Troy Van~Voorhis, Isaac~L. Chuang, and Yuan Liu.
\newblock Unification of finite symmetries in the simulation of many-body
  systems on quantum computers.
\newblock {\em Physical Review A}, 111:052433, 2025.

\bibitem{Bruinsma2026}
Carli Bruinsma, Dmitry Grinko, and Maris Ozols.
\newblock Quantum {Fourier} transform for the symmetric group.
\newblock 2026.
\newblock arXiv:2608.28569.

\bibitem{Bluvstein_2023}
Dolev Bluvstein, Simon~J. Evered, Alexandra~A. Geim, Sophie~H. Li, Hengyun
  Zhou, Tom Manovitz, Sepehr Ebadi, Madelyn Cain, Marcin Kalinowski, Dominik
  Hangleiter, J.~Pablo Bonilla~Ataides, Nishad Maskara, Iris Cong, Xun Gao,
  Pedro Sales~Rodriguez, Thomas Karolyshyn, Giulia Semeghini, Michael~J.
  Gullans, Markus Greiner, Vladan Vuleti{\'c}, and Mikhail~D. Lukin.
\newblock Logical quantum processor based on reconfigurable atom arrays.
\newblock {\em Nature}, 626:58--65, 2024.

\bibitem{Maskara_2025}
Nishad Maskara, Marcin Kalinowski, Daniel Gonzalez-Cuadra, and Mikhail~D.
  Lukin.
\newblock Fast simulation of fermions with reconfigurable qubits, 2025.
\newblock arXiv:2509.08898.

\bibitem{Beals_1997}
Robert Beals.
\newblock Quantum computation of {Fourier} transforms over symmetric groups.
\newblock In {\em Proceedings of the Twenty-Ninth Annual ACM Symposium on
  Theory of Computing (STOC '97)}, pages 48--53. Association for Computing
  Machinery, 1997.

\bibitem{Gilyen_2019}
Andr{\'a}s Gily{\'e}n, Yuan Su, Guang~Hao Low, and Nathan Wiebe.
\newblock Quantum singular value transformation and beyond: exponential
  improvements for quantum matrix arithmetics.
\newblock In {\em Proceedings of the 51st Annual ACM SIGACT Symposium on Theory
  of Computing (STOC '19)}, pages 193--204. ACM, 2019.

\bibitem{ChildsSuTranWiebeZhu2021}
Andrew~M. Childs, Yuan Su, Minh~C. Tran, Nathan Wiebe, and Shuchen Zhu.
\newblock Theory of {Trotter} error with commutator scaling.
\newblock {\em Physical Review X}, 11:011020, 2021.

\bibitem{Stanley1999}
Richard~P. Stanley.
\newblock {\em Enumerative Combinatorics, Volume 2}, volume~62 of {\em
  Cambridge Studies in Advanced Mathematics}.
\newblock Cambridge University Press, 1999.

\bibitem{JamesKerber1981}
Gordon James and Adalbert Kerber.
\newblock {\em The Representation Theory of the Symmetric Group}, volume~16 of
  {\em Encyclopedia of Mathematics and its Applications}.
\newblock Addison-Wesley, Reading, Massachusetts, 1981.

\bibitem{FrameRobinsonThrall1954}
J.~S. Frame, G.~de~B. Robinson, and R.~M. Thrall.
\newblock The hook graphs of the symmetric group.
\newblock {\em Canadian Journal of Mathematics}, 6:316--324, 1954.

\bibitem{LaroccaHavlicek2025}
Mart{\'\i}n Larocca and Vojt{\v{e}}ch Havl{\'\i}{\v{c}}ek.
\newblock Quantum algorithms for representation-theoretic multiplicities.
\newblock {\em Physical Review Letters}, 135:010602, 2025.

\bibitem{Panova2025}
Greta Panova.
\newblock Polynomial time classical versus quantum algorithms for
  representation theoretic multiplicities, 2025.
\newblock arXiv:2502.20253.

\end{thebibliography}

\end{appendices}
\end{document}